\newcommand{\xyline}[2][]{\ensuremath{\smash{\xymatrix@1#1{#2}}}}
\newcommand{\po}{\ar@{}[dr]|{\text{\pigpenfont R}}}
\newcommand{\pb}{\ar@{}[dr]|{\text{\pigpenfont J}}}
\newtheorem{theorem}{Theorem}
\newtheorem{corollary}{Corollary}
\newtheorem{definition}[theorem]{Definition}
\newtheorem{proposition}[theorem]{Proposition}
\newtheorem{remark}[theorem]{Remark}
\newtheorem{lemma}[theorem]{lemma}
\def\eqalign#1{\null\,\vcenter{\openup\jot\m@th
		\ialign{\strut\hfil$\displaystyle{##}$&$\displaystyle{{}##}$\hfil
			\crcr#1\crcr}}\,}
\def\Tau{{\text{T}}}
\def\dom{\mathop{\text{dom}}} % domain
\def\Cal#1{{\cal#1}}
\def\family#1#2#3{\langle#3\rangle_{#1\in#2}}
\def\familyi#1#2{\family{i}{#1}{#2}}
\def\familyiI#1{\familyi{I}{#1}}
\def\sequence#1#2{\family{#1}{\Bbb N}{#2}}
\def\sequencen#1{\sequence{n}{#1}}
\def\eae{=_{\text{a.e.}}}
\def\leae{\le_{\text{a.e.}}}
\def\geae{\ge_{\text{a.e.}}}
\def\restr{\mathchoice
	{\hbox{$\restriction$}\mskip 1mu}
	{\hbox{$\restriction$}\mskip 1mu}
	{\hbox{$\scriptstyle\restriction$}\mskip 1mu}
	{\hbox{$\scriptscriptstyle\restriction$}\mskip 1mu}}
\def\ssbullet{{\scriptscriptstyle\bullet}}
\def\symmdiff{\triangle}
\def\coint#1{\left[#1\right[}
\def\ssbullet{{\scriptscriptstyle\bullet}}
\long\def\inset#1{{\narrower\narrower{\vskip 1pt plus 1pt minus 0pt}
		#1
		{\vskip 1pt plus 1pt minus 0pt}}}
\def\Bvalue#1{\mathchoice
	{\hbox{$[\![#1]\!]$}}
	{\hbox{$[\![#1]\!]$}}
	{\hbox{$\scriptstyle[\![#1]\!]$}}
	{\hbox{$\scriptscriptstyle[\![#1]\!]$}}}
\tikzset{->-/.style={decoration={
			markings,
			mark=at position .5 with {\arrow{>}}},postaction={decorate}}}
\def\@setOxy O(#1,#2,#3)x(#4,#5,#6)y(#7,#8,#9)%
\def\tikz@plane@origin{\pgfpointxyz{#1}{#2}{#3}}%
	\def\tikz@plane@x{\pgfpointxyz{#4}{#5}{#6}}%
	\def\tikz@plane@y{\pgfpointxyz{#7}{#8}{#9}}%
\tikzstyle{block} = [draw, rectangle, text width=2.0cm, text centered, minimum height=1.2cm, node distance=4cm,fill=white]
\tikzstyle{container} = [draw, rectangle, inner sep=0.5cm, fill=gray,minimum height=3cm]
\def\bottom#1#2{\hbox{\vbox to #1{\vfill\hbox{#2}}}}
\tikzset{
	mybackground/.style={execute at end picture={
			\begin{scope}[on background layer]
				\node[] at (current bounding box.north){\bottom{1cm} #1};
			\end{scope}
	}},
}
\definecolor{pixel 0}{HTML}{54FF00}
\definecolor{pixel 1}{HTML}{FFFFFF}
\definecolor{pixel 2}{HTML}{FF0000}
\definecolor{pixel 3}{HTML}{0048FF}
\definecolor{pixel 4}{HTML}{000000}
\address{Department of Electrical Engineering,\\
	Indian Institute of Technology Delhi, \\
	New Delhi-110016, INDIA.\\[5pt]
	Department of Electrical Engineering,\\
	Indian Institute of Technology Delhi, \\
	New Delhi-110016, INDIA.\\
}
\keywords{functor, category, measure theory, $L^0$ and $L^2$ functors, functorial signal-spaces}
\title{Unified Functorial Signal Representation III: Foundations, Redundancy, $L^0$ and $L^2$ functors}
\author{Salil Samant and Shiv Dutt Joshi}
\begin{document}

\maketitle
\begin{abstract}
	In this paper we propose and lay the foundations of a functorial framework for representing signals. By incorporating additional category-theoretic relative and generative perspective alongside the classic set-theoretic measure theory the fundamental concepts of redundancy, compression are formulated in a novel authentic arrow-theoretic way. The existing classic framework representing a signal as a vector of appropriate linear space is shown as a special case of the proposed framework.
	Next in the context of signal-spaces as a categories we study the various covariant and contravariant forms of $L^0$ and $L^2$ functors using categories of measurable or measure spaces and their opposites involving Boolean and measure algebras along with partial extension. Finally we contribute a novel definition of intra-signal redundancy using general concept of isomorphism arrow in a category covering the translation case and others as special cases. Through category-theory we provide a simple yet precise explanation for the well-known heuristic of lossless differential encoding standards yielding better compressions in image types such as line drawings, iconic image, text etc; as compared to classic representation techniques such as JPEG which choose bases or frames in a global Hilbert space.
	
\end{abstract}

\section{Introduction}
\label{intro}
Signal representation lies at the heart of how information is represented in signals making it fundamental to many broad range of applications \cite{kay} such as high fidelity music reproduction, communications, medical imaging, speech processing, radar and sonar, and oil prospecting. This also makes it intertwined with modern representation learning \cite{Bengio} which deals with the problem of effective data representation. Classically a signal is viewed as an entity varying naturally in time or space and modeled as an element of a linear function space. Fixed mathematical structures on complete domain ($\Bbb R^n$) such as measure and topology along with symmetry are exploited by invoking a suitable group action on the signal space; thereby utilizing the techniques from functional and harmonic analysis developed in the last century; refer \cite{heilwalnut} and references therein.

\begin{figure}[ht]
	\centering
	
	\begin{tikzpicture}[mybackground={Sheet music symbol to concrete waveform}]
	
	%\node[inner sep=0pt] (music) at (0,0)
	%{\includegraphics[width=.35\textwidth]{music.png}};
	
	\node [block, name=music] {\includegraphics[width=.99\textwidth]{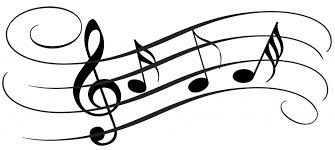}};
	\node [block, right of=music] (GI) {\color{red} Piano};
	%    \node [block, right of=text2] (text3) {\color{blue} Text3};
	\node [block, right of=GI] (SI) {\color{orange} Sensor};
	%    \node [block, right of=SI] (S) {Signal};
	\node[block, right of=SI] (S) {\includegraphics[width=.90\textwidth]{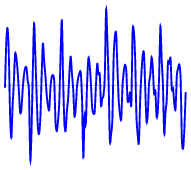}};
	\begin{scope}[on background layer]
	\node [container,fit=(GI) (SI)] (container) {};
	\end{scope}
	\draw [->] (music) -- (GI);
	\draw [->] (GI) -- node {} (SI);
	\draw [->] (SI) -- node {} (S);
	%    \draw [->] (text4) -- node {} (text5);
	
	\end{tikzpicture}
	\caption{Functorial system transforming abstract structured melodies to concrete music signals.}
	\label{fig:generate_music}
\end{figure}

\begin{figure}[ht]
	\centering
	
	\begin{tikzpicture}[mybackground={Reflected light to concrete waveform}]
	
	%\node[inner sep=0pt] (music) at (0,0)
	%{\includegraphics[width=.35\textwidth]{music.png}};
	
	\node [block, name=object] {Physical object};
	\node [block, right of=object] (GI) {\color{red} Reflected light from object};
	%    \node [block, right of=text2] (text3) {\color{blue} Text3};
	\node [block, right of=GI] (SI) {\color{orange} Sensor};
	%    \node [block, right of=SI] (S) {Signal};
	\node[block, right of=SI] (S) {\includegraphics[width=.90\textwidth]{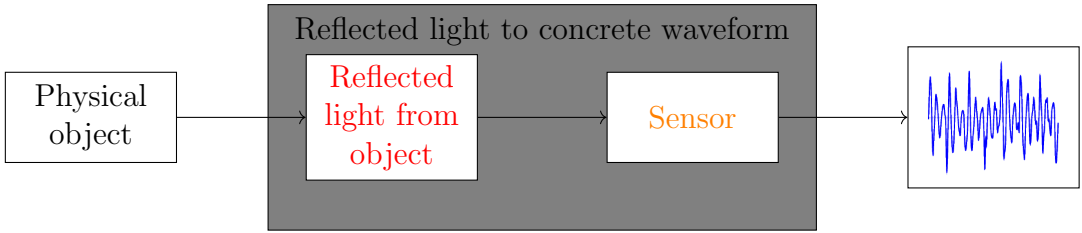}};
	\begin{scope}[on background layer]
	\node [container,fit=(GI) (SI)] (container) {};
	\end{scope}
	\draw [->] (music) -- (GI);
	\draw [->] (GI) -- node {} (SI);
	\draw [->] (SI) -- node {} (S);
	%    \draw [->] (text4) -- node {} (text5);
	
	\end{tikzpicture}
	\caption{Functorial system transforming reflected light with object structure to concrete image signals.}
	\label{fig:generate_image}
\end{figure}

\begin{figure}[ht]
	\centering
	
	\begin{tikzpicture}[mybackground={Vocal formulation to Concrete waveform}]
	
	%\node[inner sep=0pt] (music) at (0,0)
	%{\includegraphics[width=.35\textwidth]{music.png}};
	
	\node [block, name=speech] {Linguistic word};
	\node [block, right of=speech] (GI) {\color{red} Speech formulation};
	%    \node [block, right of=text2] (text3) {\color{blue} Text3};
	\node [block, right of=GI] (SI) {\color{orange} Sensor};
	%    \node [block, right of=SI] (S) {Signal};
	\node[block, right of=SI] (S) {\includegraphics[width=.90\textwidth]{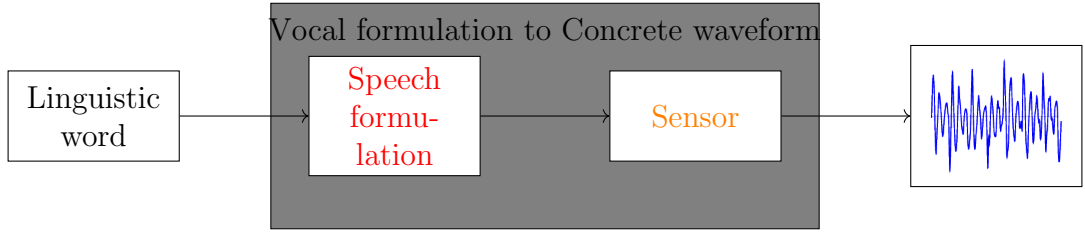}};
	\begin{scope}[on background layer]
	\node [container,fit=(GI) (SI)] (container) {};
	\end{scope}
	\draw [->] (music) -- (GI);
	\draw [->] (GI) -- node {} (SI);
	\draw [->] (SI) -- node {} (S);
	%    \draw [->] (text4) -- node {} (text5);
	
	\end{tikzpicture}
	\caption{Functorial system transforming words to concrete speech signals.}
	\label{fig:generate_speech}
\end{figure}

The generative theory of Leyton~\cite{Leyton01} argues, that in human psychology various shapes and objects such as sheet music symbols or their aggregates forming melodies are highly structured with maximal transfer of previously occurring objects. The theory also intuitively demonstrates that in general various naturally generated shapes all have these kind of structures. Motivated by this generative intuition, in this sequel we utilize the mathematical structure of a groupoid\footnote{An abstract category $\mathbf{C}$ consists of a collection $X,Y,Z ...$ of objects denoted by $\mbox{Ob}(\mathbf{C})$, for every pair $X,Y \in \mbox{Ob}(\mathbf{C})$, a collection $\mathbf{C}(X,Y)=\{f:X\to Y\mid X,Y\in\mbox{Ob}(\mathbf{C})\}$ of morphisms, for each $X \in \mbox{Ob}(\mathbf{C})$, an identity morphism $\mathbf{id}_X\ \mbox{or}\ \mathbf{1}_X\ :X\to X$ and composition $\mathbf{C}(Y,Z)\ \times\ \mathbf{C}(X,Y)\mapsto\mathbf{C}(X,Z)$, i.e $(g,f)\mapsto g\circ f$, satisfying the unit law for a morphism $f:X \rightarrow Y$, $\mathbf{id}_Y\circ f=f=f\circ \mathbf{id}_X$ and usual associativity for $X \xrightarrow[]{f} Y \xrightarrow[]{g} Z \xrightarrow[]{h} W$, $h\circ (g\circ f)=(h\circ g)\circ f$. A groupoid is a simply a category in which every morphism is invertible.} to model objects carrying certain structures or properties and isomorphisms as their transfer. A heuristic discussion demonstrating advantages of groupoid and fibration over wreath products for such generative structure can be found in~\cite{salilp2} of sequel.

There are two equivalent ways of incorporating a generative groupoid model in signal representation. The first easier way for a reader with just signal theory and functional analysis background is using a simple functor; while the second more categorical way for a reader reasonably well acquainted with (higher) category theory is using a functor-category model where generators are directly modeled as functors and transfers are captured through natural transformations; see Section~\ref{meas2}.

In the easier way, referring Figures~\ref{fig:generate_music},~\ref{fig:generate_image} and~\ref{fig:generate_speech} we think of the shaded block comprising of generating and sensing system as a functor. It transforms a groupoid (or more generally a category) generative structure into a concrete signal as image subcategory. This model works ideally for the generators which generate waveforms well-separated in space or time where we can expect the functor to be faithful and injective on objects for aggregate signal. In general this functor is not faithful due to superposition of individual waveforms and various other reasons. Nevertheless the concept of arrow is a natural setting to model redundancy of signals which makes the functorial model indispensable. Moreover one can utilize the usual set-theoretic measure theory and functional analysis along with category-theoretic generative structure through concept of trivial categorification as studied in~\cite{salilp1}. Now for certain subcategories the functor remains faithful which is discussed later in Section~\ref{limitation}. Using these categories we model it as faithful isomorphism-preserving functor $F:\mathbf{C} \to \mathbf{D}$. As an example if we denote sheet music symbols (or their aggregates forming melodies) as objects $G_1, G_2, G_3,...$ with transfers as $a_1, a_2,...$ forming category $\mathbf{C}$ then signal generation mechanism becomes functorial as shown in Equation~\ref{eq:generation_functor} where the individual output waveforms are represented by $FG_1, FG_2, FG_3,...$ while their preserved relationships are captured  by arrows $Fa_1, Fa_2,...$ forming category $F\mathbf{C}$.    

\begin{equation}
\label{eq:generation_functor}
\xymatrix @R=0.4in @C=0.8in{
	G_1 \ar@(dl,ul)[]|{id_{G_1}} \ar[r]^{a_1} \ar[d]_{a_3 \cdot a_2 \cdot a_1} \ar[dr]_{a_2 \cdot a_1} & G_2 \ar@(dr,ur)[]|{id_{G_2}}  \ar[d]^{a_2} \\
	G_4 \ar@(dl,ul)[]|{id_{G_4}} & G_3 \ar@(dr,ur)[]|{id_{G_3}} \ar[l]_{a_3}
}
\xymatrix @R=0.4in @C=0.8in{
	FG_1 \ar@(dl,ul)[]|{id_{FG_1}} \ar[r]^{Fa_1} \ar[d]_{Fa_3 \circ Fa_2 \circ Fa_1} \ar[dr]_{Fa_2 \circ Fa_1} & FG_2 \ar@(dr,ur)[]|{id_{FG_2}}  \ar[d]^{Fa_2} \\
	FG_4 \ar@(dl,ul)[]|{id_{FG_4}} & FG_3 \ar@(dr,ur)[]|{id_{FG_3}} \ar[l]_{Fa_3}
}
\end{equation}  

Practically the natural generators are very complex structured abstract objects having multiple structures. Depending on a particular structure of interest within the observed signal, a concrete codomain category is chosen. Then the functor $F$ is constrained to be faithful and isomorphism-preserving. Being faithful and isomorphism-preserving ensures that using isomorphisms in the category $F\mathbf{C}$ we can uniquely infer transfers (isomorphisms) between generators (objects) thereby directly gaining some insight into generative mechanism of source. In this work, we shall be particularly interested  in measurable and measure-preserving structures motivated by translational, scaling, amplitude redundancies very common in classic image, audio and speech signals. This interested is also attributed to connections with classical representation techniques in spaces of measurable functions such as $L^2(\Bbb R^n)$. As mentioned earlier unfortunately in real world scenarios the equipments are never ideal and further the superposition of waveforms in observed signal makes the functor non-faithful. This puts certain limitations on groupoid in the domain category that could be inferred from the observed signal which we study in Section~\ref{limitation} along with possible solutions and work-around. Interestingly by using the classic set-theoretic measure theory in addition to pure category theory certain limitations could be effectively overcome by using novel concept of trivial categorification; introduced in~\cite{salilp1} of the sequel. Loosely speaking this treats objects as trivial categories and therefore we can utilize additional set-theoretic properties of objects independently along with treating them simply as objects of enclosing category. As an example, in the context of this paper a measurable function $f:(I,\Sigma_{I}) \to (\Bbb R,\Sigma_{\Cal B})$ which is an object of $\mathbf{Meas}^{\rightarrow}$ is also a trivial category and therefore by considering additional property of $\Bbb R$ being a field, it is can be point-wise added to and multiplied by any other measurable function $g:(I,\Sigma_{I}) \to (\Bbb R,\Sigma_{\Cal B})$. In other words, $f$ is simultaneously an element of Riesz space $\mathfrak{L^0}_{I}$ and this property is independent of it being an object of $\mathbf{Meas}^{\rightarrow}$ which recognizes only measurable structure on $\Bbb R$. This novel concept of using set-theory alongside category-theory simultaneously for signal representation is explored in this paper.

\subsection{Motivation}
\label{sec:motiv}
In this section, we first visually motivate the functorial viewpoint of signal through Figures~\ref{fig:visual_depiction_signal} and~\ref{fig:space_overview} and then we summarize the major differences of proposed functorial framework from classic representation in Table~\ref{table:summary_sig_rep} which stand as prime motivations for a functorial framework . These are explained later in Section~\ref{sec:funcsigspace}. The fundamental concept of treating a signal as functor is not new and is first explored in the context of topology in~\cite{robinson1}. However there are some subtle differences in this work as discussed in Section~\ref{sigfunc}. Nevertheless it stands as third motivation in addition to category-theoretic relative perspective of~\cite{SGA1} and generative intuition of~\cite{Leyton01} towards developing this framework in sequel.

Figure~\ref{fig:visual_depiction_signal} gives a visual depiction of proposed functorial signal representation framework. We enumerate the salient features of this framework as follows; 

\begin{enumerate}
	\item Complete signal (a measurable function) $f = ... \amalg (f\restr I) \amalg (f\restr K) \amalg (f\restr J) \amalg ... $ is naturally the coproduct in category $\mathbf{Meas}^{\rightarrow}$ or functor category $\mathbf{Meas}^{\mathbf{2}}$, where subobjects $(f\restr I),(f\restr K),(f\restr J),...$ are local real-valued partial functions on disjoint half-open intervals $I$, $J$, $K$, ....
	\item The underlying generators(capturing the intuition of generative theory in~\cite{Leyton01}) of a signal are directly modeled either as functors $G_1$,$G_2$,$G_3$,... or else as objects of base category $\mathbf{C}$. The transfers (or isomorphisms) between the generators are automatically captured via natural transformations (or natural isomorphisms) or base category arrows $a_1,a_2,...$.
	\item Then functorial representation models the signal either as a functor $F: \mathbf{C} \rightarrow \mathbf{D}$ or as a subcategory of the usual functor category $\mathbf{Meas}^{\mathbf{2}}$   
	\item Whenever $G_1$ and $G_2$ are isomorphic, the corresponding subobjects $(f\restr I)$, $(f\restr J)$ also become isomorphic via $(h,\phi)$ whenever the functor preserves this isomorphism. Then $(f\restr J)$ is naturally viewed as redundant relative to $(f\restr I)$.
	\item By considering a field property on the underlying set of $\Bbb R$, the objects of $\mathbf{Meas}^{\rightarrow}$ such as $(f\restr I)$ are also set-theoretically elements of Riesz spaces $\mathfrak{L}^0_I$ additionally bringing usual set-theoretic measure-theory alongside category theory for signal representation. Properties of null-ideal, measure on $(I,\Sigma_I)$ allows introducing equivalence classes $(f\restr I)^{\ssbullet} \in L^0(I,\Sigma_I,\Cal N(\mu_I))$, $(f\restr I)^{\ssbullet} \in L^2(I,\Sigma_{I},\mu_I)$ through set-theoretic measure-theory.
	\item The arrow $(h,\phi)$ in certain cases induces $T_{(h,\phi^{-1})}:L^2(I,\Sigma_{I},\mu_I) \to L^2(J,\Sigma_{J},\mu_J)$ and signal space becomes a subcategory in $\mathbf{Hilb}$ or $\mathbf{Riesz}$ matched to its generative structure as shown in Figure~\ref{fig:space_overview}.
\end{enumerate}

\begin{figure}
	\resizebox*{\textwidth}{!}{
		\begin{tikzpicture}
		\draw[domain=-1.57:1.57,samples=100] plot(\x,{sin(2*\x r)});
		%\draw[blue, smooth,domain=2.5:8] plot (\x,rand);
		\draw[domain=7.85:11,samples=100] plot(\x,{sin(2*\x r)});
		\draw plot [smooth] coordinates {(1.57,0) (3,-0.5) (4,0.75) (5,1) (6,-0.4) (7,1) (7.85,0)};
		\draw[<->] (-4.5,0) -- (13,0) node [pos=1,below] {$\mathbb{R}$};
		\draw plot [smooth] coordinates {(-1.57,0) (-2.5,1) (-2.8,0) (-3.5,-0.5) (-3.7,-1) (-4,0)};
		\draw plot [smooth] coordinates {(11,0) (11.5,-0.5) (12,1) (12.5,0.5)};
		\draw[->] (-3.5,-1) --(-3.5,3.5) node [pos=1,left] {$\mathbb{R}$};
		\draw[-,dashed] (-1.57,2.5) -- (-1.57,-1.5);
		\draw[-,dashed] (1.57,2.5) -- (1.57,-1.5);
		
		\draw[-,dashed] (7.85,2.5) -- (7.85,-1.5);
		\draw[-,dashed] (11,2.5) -- (11,-1.5);
		\draw[<->] (-1.57,-1.3) --(1.57,-1.3) node[midway,fill=white] {$I$};
		\draw[<->] (1.57,-1.3) --(7.85,-1.3) node[midway,fill=white] {$K$};
		\draw[<->] (7.85,-1.3) --(11,-1.3) node[midway,fill=white] {$J$};
		\node at (0,1.3) [draw=none] {$f\restr I = F(G_1)$};
		\node at (0,2.3) [draw=none] {Subspace $\mathfrak{L}^0_I$};
		\node at (5,1.3) [draw=none] {$f\restr K = F(G_3)$};
		\node at (5,2.3) [draw=none] {Subspace $\mathfrak{L}^0_K$};
		\node at (9.4,1.3) [draw=none] {$f\restr J = F(G_2)$};
		\node at (9.4,2.3) [draw=none] {Subspace $\mathfrak{L}^0_J$};
		\node at (5,3.3) [draw=none] {Global space $\mathfrak{L}^0_{\Bbb R}$};
		\node at (-2.5,0) [draw=none,above left] {$(0,0)$};
		\node at (12.5,0) [draw=none,above right] {$t$};
		\node at (-3.5,1.5) [draw=none,above left] {$f(t)$};
		\end{tikzpicture}
	}
	\caption{Functorial signal representation: Global Signal $f$ is chopped into naturally related local sub-signals ..$(f\restr I)$,$(f\restr K)$,.. thought of lying in a category.}
	\label{fig:visual_depiction_signal}
\end{figure}
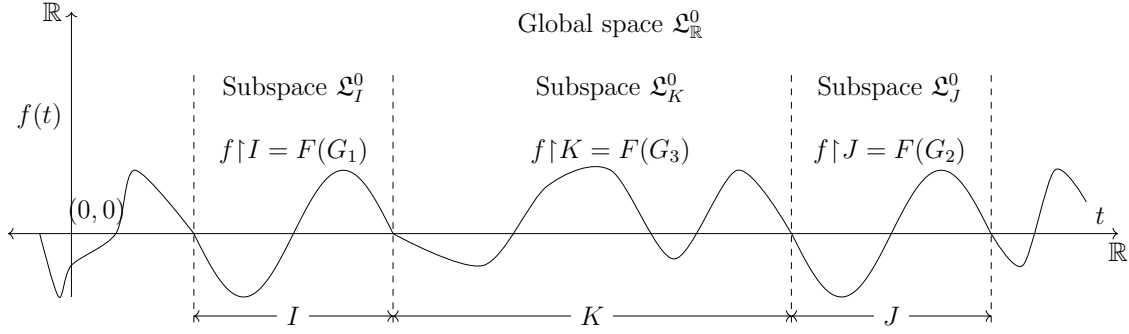

The proposed mathematical model of a functor for natural generating mechanism of a signal offers various additional benefits when compared to classic representation techniques:  

\begin{itemize}
	\item {\bf Signal source as base category}: In classic case, the mathematical structure of a source generating a signal carrying information of interest is usually not taken into consideration. From the perspective of information theory it is treated as either memoryless or with memory. By modeling a source with memory as a groupoid in tune with generative intuition we seek to capture isomorphic relationships between waveforms generated by the source directly impacting the amount of perceived information in signal. The memoryless source as a special case having no interdependencies of successive messages is modeled as a discrete category.   
	
	\item {\bf Redundancy using relative perspective:} Classic redundancy is defined as ratio of actual rate of source to its absolute rate and indicates the maximum possible data compression ratio by which bits in its efficient representation can be decreased. In functorial model, the relative perspective offered by category theory provides an authentic tool to model interdependence between messages or sub-signals. This leads to arrow-theoretic structural definition of redundancy. It also becomes possible to understand compression in a natural category-theoretic way.
	
	\item {\bf Signal space as a category}: The generic signal or message spaces such as $L^2(\Bbb R^n)$ cater mainly to memoryless sources since the linearly independent basis best model sub-signals or messages which are independent. In functorial model observed signal becomes an image subcategory inducing additional category structure on domain such as $\Bbb R^n$. The Signal space in certain cases can be modeled as a category matched to the generative structure of the signal and therefore unique to every signal.    
\end{itemize}

\begin{figure}
	\resizebox*{\textwidth}{!}{
		\begin{tikzpicture}
		[object/.style={circle,draw=blue!80,fill=blue!20,thick,
			inner sep=0pt,minimum size=6mm},
		object2/.style={circle,draw=red!80,fill=red!20,thick,
			inner sep=0pt,minimum size=4mm}]
		
		%\scriptsize
		\tiny
		
		\draw [->] (-3,0) -- (-2,0);
		\draw [->]  (-3,0) -- (-3,1,0);
		\draw [->] (-3,0) -- (-3,0,1);
		\node at (-3,1.2) {$L^2(I,\Sigma_{I},\mu_I)$};
		\node[below] at (-3,0) {$0$};
		\draw[->,color=red] (-3,0) -- (-2.3,0.5) node[above] {$(f\restr I)^{\ssbullet}$};
		
		\draw [->] (0,0) -- (1,0);
		\draw [->]  (0,0) -- (0,1,0);
		\draw [->] (0,0) -- (0,0,1);
		\node at (0,1.2) {$L^2(K,\Sigma_{K},\mu_K)$};
		\node[below] at (0,0) {$0$};
		\draw[->,color=black] (0,0) -- (-0.5,0.5) node[above] {$(f\restr K)^{\ssbullet}$};
		
		\draw [->] (3,0) -- (4,0);
		\draw [->]  (3,0) -- (3,1,0);
		\draw [->] (3,0) -- (3,0,1);
		\node at (3,1.2) {$L^2(J,\Sigma_{J},\mu_J)$};
		\node[below] at (3,0) {$0$};
		\draw[->,color=red] (3,0) -- (3.7,0.5) node[above] {$(f\restr J)^{\ssbullet}$};
		
		\node[above] (left) at (-2.3,1.3) {};
		\node[above] (right) at (2.3,1.2) {};
		\draw [->] (left) to [bend left=45] (right);
		\node at (0,2) [draw=none] {$T_{(h,\phi^{-1})}$};
		%\draw [->] (0,0) -- (1,0);
		%\draw [->]  (0,0) -- (0,1,0);
		%\draw [->] (0,0) -- (0,0,1);
		\end{tikzpicture}
	}
	\caption{Signal space as a subcategory of $\mathbf{Hilb}$ or $\mathbf{Riesz}$}
	\label{fig:space_overview}
\end{figure}
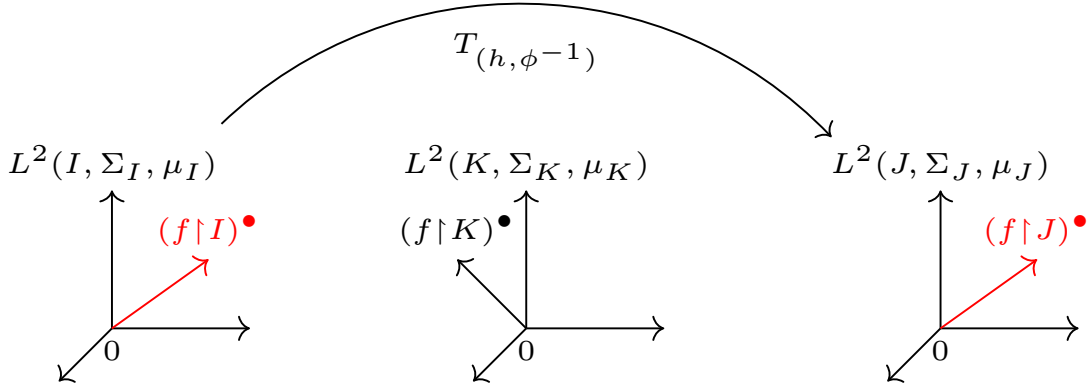

\begin{center}
	\begin{table}[ht]
		\centering
		%\vspace{1ex}
		\resizebox{\linewidth}{!}{
			\begin{tabular}{|c|c|}\hline
				{\bf Contemporary function based model} & {\bf Proposed functorial model} \\ \hline
				\makecell{Signal as entity varying in time or space\\modeled mathematically as a function} & 
				\makecell{Signal as a functor from a category (modeling\\generative structure)to a category (modeling observed waveforms)}\\ \hline
				\makecell{Simple 1D time signal as measurable\\function $f:(\Bbb R,\hat\Sigma_{\Cal B}) \to (\Bbb R,\Sigma_{\Cal B})$} & 
				\makecell{Simple 1D time signal as $F\mathbf{C}$\\ using a functor $F:\mathbf{C} \to \mathbf{Meas}^{\rightarrow}$} \\ \hline
				\makecell{Signal $f:(\Bbb R,\hat\Sigma_{\Cal B}) \to (\Bbb R,\Sigma_{\Cal B})$ viewed as a special case\\$F:\mathbf{C} \to \mathbf{Meas}^{\rightarrow}$; $\mathbf{C}$ being some discrete category.} & 
				\makecell{Signal as a family of objects $F(G_1) = f\restr I, F(G_2) =f\restr J,...$\\ along with non-trivial arrows $F(a_1),F(a_2),...$}\\ \hline
				\makecell{Space $\mathfrak{L^0}_{\Bbb R}(\mu)$ canonically isomorphic\\to $\prod_{i\in (I,J,..)}\mathfrak{L^0}_{i}(\mu_i)$ via $f\mapsto(f\restr I,f\restr J,...)$} & 
				\makecell{$f\restr J = Fa(FG_1) = f\restr I$-valued point of $f\restr J + \Delta_J$ \\$(f\restr I,f\restr J,...)$ as family of generalized elements and $\Delta$s}\\ \hline
				\makecell{Using properties of $\eae$, measure, $|f|^2$ integrable, \\field $\Bbb R$, $f^{\ssbullet} \in L^0(\Bbb R,\hat\Sigma_{\Cal B},\Cal N(\mu))$, $f^{\ssbullet} \in L^2(\Bbb R,\hat\Sigma_{\Cal B},\mu)$} & 
				\makecell{Properties of null-ideal,measure on $(I,\Sigma_I)$, $\Bbb R$ as field\\permits $(f\restr I)^{\ssbullet} \in L^0(I,\Sigma_I,\Cal N(\mu_I))$, $(f\restr I)^{\ssbullet} \in L^2(I,\Sigma_{I},\mu_I)$}\\ \hline
				\makecell{Signal space is generically fixed \\as $L^0(\Bbb R,\hat\Sigma_{\Cal B},\Cal N(\mu))$ or $L^2(\Bbb R,\hat\Sigma_{\Cal B},\mu)$.} & 
				\makecell{When arrows $(h,\phi),(h',\phi'),...$ uniquely define operators \\ the resulting subcategory in $\mathbf{Hilb}$ or $\mathbf{Riesz}$ is signal matched.} \\ \hline
				\makecell{Signal space is a fixed object $L^2(\Bbb R,\hat\Sigma_{\Cal B},\mu)$ in $\mathbf{Hilb}$\\or $\mathbf{Riesz}$ where $L^2:\mathbf{LocMeasure} \rightarrow \mathbf{Riesz}$} & 
				\makecell{When $\mathbf{C} \subseteq \mathbf{LocMeasure}$ is generative category,\\then generative signal matched space is directly modeled as $L^2(\mathbf{C})$} \\ \hline
				\makecell{When $\mathbf{C}= \mathbf{1}$, time axis thought of as single base \\corresponding to the object $(\Bbb R,\hat\Sigma_{\Cal B})$ of $\mathbf{Meas}$.} & 
				\makecell{Multiple $(I,\Sigma_{I})$,$(J,\Sigma_{J})$,... bases and change of base\\in opposite $\mathbf{Meas}^{op}$ leading to celebrated relative viewpoint.} \\ \hline
				\makecell{Classic signal $f^{\ssbullet} \in L^2(\Bbb R,\Sigma_{Leb},\mu) = {L^2}(\mu)$\\$f^{\ssbullet}\mapsto[(f\restr I)^{\ssbullet},(f\restr J)^{\ssbullet},...]$} & 
				\makecell{When $\mathbf{C} \subseteq \mathbf{LocMeasure}$, signal $f^{\ssbullet}$ = $[(f\restr I)^{\ssbullet},(f\restr J)^{\ssbullet},...]$\\$(f\restr J)^{\ssbullet}$ represented as $\Delta_J +L^2(\phi^{-1})(f\restr I)^{\ssbullet}$} \\ \hline
				\makecell{In summary no mathematical modeling of signal\\source, pure set-theoretic measure theory} & 
				\makecell{In summary mathematical modeling of source as category\\combining both category and set-theoretic measure theory}\\ \hline
			\end{tabular}
		}
		\caption{A summary of major differences between conventional and functorial signal model}
		\label{table:summary_sig_rep}
	\end{table}
\end{center}

In summary, the {\bf category-theoretic perspective} views signal $f = ... \amalg (f\restr I) \amalg (f\restr K) \amalg (f\restr J) \amalg ... $ as the coproduct object in category $\mathbf{Meas}^{\rightarrow}$ where the subobjects are related by special arrows $(h,\phi):(f\restr I) \to (f\restr J)$. The {\bf generative perspective} views objects $(f\restr I) = F(G_1)$ as having correspondence to natural generators $G_1$ such as melodies/physical objects/linguistic words related by arrow $a: G_1 \to G_2$ modeled using a functor $F:\mathbf{C} \to \mathbf{Meas}^{\rightarrow}$. Finally the {\bf Set-theoretic measure-theory perspective} utilizes field property on the underlying set of measurable space $(\Bbb R,\Sigma_{\Cal B})$, treating $(f\restr I)$ also as an element of Riesz space $\mathfrak{L}^0_I$. Moreover the classic signal representation techniques don't take into account any special mathematical modeling of signal source and utilize pure set-theoretic measure theory and functional analysis. In the proposed functorial model the mathematical modeling of source as a category facilitates combining both category-theory (especially for relative viewpoint) and set-theoretic measure theory (for computational viewpoint) simultaneously.

\subsection{Notation and Terminology}

$(F,\mathbf{C},\mathbf{D})$: Abstract graph of a functor, $\mathbf{D}^{\rightarrow}$: Arrow category of category $\mathbf{D}$.
$f\restr I$ : Restriction ($f|_{I}$) of function $f$ to domain $I$, {$\mathbf{Meas}$}  : Category of measurable spaces $(X,\Sigma_{X})$ and measurable maps (functions) $f:X\to Y$, {$\mathbf{Measure}$}  : Category of measure spaces $(X,\Sigma_{X},\mu)$ and inverse-measure-preserving morphisms $f:X\to Y$, {$\mathbf{LocMeas}$}  : Category of localizable measurable spaces $(X,\Sigma_{X},\Cal N(\mu))$ and $f^\ssbullet$ a.e. equivalence classes of non-singular measurable maps $f:X\to Y$, {$\mathbf{LocMeasure}$}  : Category of localizable measure spaces $(X,\Sigma_{X},\mu)$ and $f^\ssbullet$ a.e. equivalence classes of inverse-measure-preserving maps $f:X\to Y$, {$\mathbf{compBoolAlg}$}  : Category of Dedekind $\sigma$-complete Boolean algebras $\frak B$ and sequentially order-continuous Boolean homomorphisms $\pi:\frak B\to\frak A$, {$\mathbf{MeasureAlg}$}  : Category of measure algebras $(\frak B,\bar\mu)$ and SOC measure-preserving Boolean homomorphisms $\pi:\frak B\to\frak A$, {$\mathbf{countMeas}$}  : Subcategory of $\mathbf{Meas}$ (or $\mathbf{LocMeas}$) of measurable spaces $(X,\Cal PX)$ (or $(X,\Cal PX,\phi)$)and measurable maps (functions) $f:X\to Y$, {$\mathbf{countMeasure}$}  : Subcategory of $\mathbf{Measure}$ (or $\mathbf{LocMeasure}$) of measure spaces $(X,\Cal PX,count)$ and inverse-measure-preserving maps $f:X\to Y$, {$\mathbf{Riesz}$} : Category of Riesz spaces and Riesz homomorphisms, {$\mathbf{BanLatt}$}  : Category of Banach lattices and bounded Riesz homomorphisms, {$\mathbf{Hilb}$} : Category of Hilbert spaces and continuous(bounded) linear maps, {$\mathbf{compBoolAlg}$}  : Category of Dedekind $\sigma$-complete Boolean algebras $\frak B$ and sequentially order-continuous Boolean homomorphisms $\pi:\frak B\to\frak A$, {$\mathbf{compBoolAlg}$}  : Category of Dedekind $\sigma$-complete Boolean algebras $\frak B$ and
sequentially order-continuous Boolean homomorphisms $\pi:\frak B\to\frak A$, $\mathfrak{L}^0(X,\Sigma_{X})$ : (or $\mathfrak{L}^0= \mathfrak{L}^0(\mu)$) Space of virtually measurable real-valued functions $f$ defined on conegligible subsets of $X$, $\mathfrak{L}^0_X$ : Space of measurable real-valued functions $f$ defined on $X$, $\mathfrak{L}^p(X,\Sigma_{X})$ : (or $\mathfrak{L}^p= \mathfrak{L}^p(\mu)$, $p\in(1,\infty)$) Set of functions $f\in\mathfrak{L}^0=\mathfrak{L}^0(\mu)$ such that $|f|^p$ is integrable, ${L^0}(X,\Sigma_{X},\mu)$ : (or ${L}^0={L}^0(\mu)={L^0}(X,\Sigma_{X},\Cal N(\mu))$) Set of equivalence classes $f^{\ssbullet}$ in $\mathfrak{L}^0(\mu)$ under $\eae$, ${L^p}(X,\Sigma_{X},\mu)$ : (or ${L}^p={L}^p(\mu)$, $p\in(1,\infty)$) Set of functions $\{f^{\ssbullet}:f\in\mathfrak{L}^p\}\subseteq L^0=L^0(\mu)$ in $\mathfrak{L}^0(\mu)$ under $\eae$.
\section{On $L^0$ and $L^2$ functors}
\label{sec:funcsigspace}
In this section we study in detail all possible cases of $L^0$ and $L^2$ functors including the special cases of $l^0$, $l^2$ and their extension to partial categories. These are inevitably required in the context of proposed functorial signal representation model when the measure structure of generative base category is of interest for various reasons as motivated in Section~\ref{sec:motiv}.  

\subsection{Base or Domain categories $\mathbf{C},\mathbf{C}^{op}$}\label{sec:base}

All set-theoretic $L^p$ constructions are inherently functors referring~\cite{fremlinmt3}. We will consider two major categories $\mathbf{LocMeas}$ a category of {\bf localizable measurable spaces} refer~\cite{locale},~\cite{thesis} $\mathbf{LocMeasure}$ a category of {\bf localizable measure spaces} and their opposites. These categories are useful in context of $L^0$ and $L^2$ (or $L^p$, $1 \le p \le \infty$). Next we consider special subcategories $\mathbf{countMeas}$,$\mathbf{countMeasure}$ and partial monic derived category Par$(\mathbf{LocMeas},\mathcal{M})$, $\mathbf{PInj}$ of the earlier categories as summarized in Table~\ref{table:base_cat} to cover almost all the cases of $l^0$ and $l^2$ (or $l^p$, $1 \le p \le \infty$).

Initially we attempted generalizing the example of $l^2$ functor on the category $\mathbf{PInj}$ as studied in~\cite{Heunen2013} through the concept of restriction and inverse categories using underlying category of localizable measurable spaces with finite products. However later referring to the work~\cite{fremlinmt3} and references therein we discovered that the functoriality of function spaces is well-studied in the context of Boolean and measure algebras. This made it easier to apply directly in signal representation the perspectives on a functor developed in~\cite{salilp1} to shed light on fundamental concepts of redundancy and compression.

\begin{center}
	\begin{table}[ht]
		\centering
		%\vspace{1ex}
		\resizebox{\linewidth}{!}{
			\begin{tabular}{|c|c|}\hline
				Base category $\mathbf{C}$ & Opposite category $\mathbf{C}^{op}$ \\ \hline
				$\mathbf{LocMeas}$ & $\mathbf{LocMeas}^{op}$ \\ \hline
				$\mathbf{LocMeasure}$ & $\mathbf{LocMeasure}^{op}$ \\ \hline
				$\mathbf{countMeas}$ or $\mathbf{countLocMeas}$ & $\mathbf{countMeas}^{op}$ or $\mathbf{countLocMeas}^{op}$ \\ \hline
				$\mathbf{countMeasure}$ or $\mathbf{countLocMeasure}$ & $\mathbf{countMeasure}^{op}$ or $\mathbf{countLocMeasure}^{op}$ \\ \hline
				$\mathbf{PInj}$, Par$(\mathbf{LocMeas},\mathcal{M})$ & $\mathbf{PInj}$, Par$(\mathbf{LocMeas},\mathcal{M})$ \\ \hline
				
			\end{tabular}
		}
		\caption{Various base categories covering all cases of functors $L^0,L^2,l^0,l^2$.}
		\label{table:base_cat}
	\end{table}
\end{center}

\subsection{Objects of $\mathbf{C},\mathbf{C}^{op}$}

Theorem~\ref{meas_alg} essentially relates the objects of categories $\mathbf{LocMeas}$, $\mathbf{LocMeas}^{op}$, $\mathbf{LocMeasure}$, $\mathbf{LocMeasure}^{op}$.

\begin{definition} 
	\label{def:con_meas_alg}
	For a measure space $(X,\Sigma,\mu)$, $(\frak B,\bar\mu)$, as constructed in Theorem~\ref{meas_alg}, is called the {\bf measure algebra} of $(X,\Sigma,\mu)$.
\end{definition} 

\begin{theorem} 
	\label{meas_alg}	
	Let $(X,\Sigma_X,\mu)$ be a measure space, and $\Cal N$ be the null ideal of $\mu$. Corresponding to the measurable space $(X,\Sigma_X,\Cal N)$ we have $\frak B$ as the Boolean algebra quotient $\Sigma_X/\Sigma_X\cap\Cal N$. Then we can associate a functional $\bar\mu:\frak B\to[0,\infty]$ defined by setting 
	
	\centerline{$\bar\mu E^{\ssbullet}=\mu E$ for every $E\in\Sigma_X$,} 
	
	\noindent and $(\frak B,\bar\mu)$ is a measure algebra corresponding to $(X,\Sigma_X,\mu)$.  The canonical 
	map $E\mapsto E^{\ssbullet}:\Sigma_X\to\frak B$ is sequentially order-continuous. 
\end{theorem}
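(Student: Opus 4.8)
The plan is to verify, for the pair $(\frak B,\bar\mu)$, the three defining clauses of a measure algebra in the sense of Definition~\ref{def:con_meas_alg}, having first established that $\frak B$ is a well-defined Dedekind $\sigma$-complete Boolean algebra, that the canonical map is a sequentially order-continuous Boolean homomorphism, and that $\bar\mu$ is well defined. First I would note that $\Cal I:=\Sigma_X\cap\Cal N$ is a $\sigma$-ideal of the Boolean algebra $\Sigma_X$: it contains $\emptyset$, is downward closed (a measurable subset of a $\mu$-negligible set is $\mu$-negligible), is closed under finite unions, and is closed under countable unions since a countable union of negligible sets is negligible. Hence $\frak B=\Sigma_X/\Cal I$ is a Boolean algebra and $E\mapsto E^{\ssbullet}$ is a surjective Boolean homomorphism. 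The one computational fact behind nearly everything else is: for any sequence $\langle E_n\rangle$ in $\Sigma_X$, the element $\big(\bigcup_n E_n\big)^{\ssbullet}$ is the supremum of $\{E_n^{\ssbullet}:n\in\Bbb N\}$ in $\frak B$; it is clearly an upper bound, and if $B^{\ssbullet}$ is any upper bound then each $E_n\setminus B$ is negligible, so $\big(\bigcup_n E_n\big)\setminus B=\bigcup_n(E_n\setminus B)$ is negligible and $\big(\bigcup_n E_n\big)^{\ssbullet}\le B^{\ssbullet}$. In particular $\frak B$ is Dedekind $\sigma$-complete and $E\mapsto E^{\ssbullet}$ preserves countable suprema; applying this to complements gives preservation of countable infima, so the canonical map is sequentially order-continuous.

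Next I would check that $\bar\mu$ is well defined: if $E^{\ssbullet}=F^{\ssbullet}$ then $E\symmdiff F\in\Cal N$, so $\mu(E\setminus F)=\mu(F\setminus E)=0$, whence $\mu E=\mu(E\cap F)+\mu(E\setminus F)=\mu(E\cap F)\le\mu F$, and symmetrically $\mu F\le\mu E$, so $\mu E=\mu F$ (all in $[0,\infty]$). Then the three axioms: (i) $\bar\mu 0=\mu\emptyset=0$. (ii) Given a disjoint sequence $\langle b_n\rangle$ in $\frak B$, lift each $b_n$ to $F_n\in\Sigma_X$ and set $E_n:=F_n\setminus\bigcup_{m<n}F_m$; since $b_m\wedge b_n=(F_m\cap F_n)^{\ssbullet}=0$ we have $F_m\cap F_n\in\Cal N$ for $m\ne n$, so $F_n\symmdiff E_n=F_n\cap\bigcup_{m<n}F_m\subseteq\bigcup_{m<n}(F_m\cap F_n)$ is negligible, giving $E_n^{\ssbullet}=b_n$, while the $E_n$ are pairwise disjoint in $\Sigma_X$. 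By the supremum fact above and countable additivity of $\mu$, $\bar\mu(\sup_n b_n)=\mu\big(\bigcup_n E_n\big)=\sum_n\mu E_n=\sum_n\bar\mu b_n$. (iii) If $b\ne 0$ and $E^{\ssbullet}=b$, then $E\notin\Cal N$, so $\bar\mu b=\mu E>0$. This shows $(\frak B,\bar\mu)$ is a measure algebra corresponding to $(X,\Sigma_X,\mu)$, and with the first paragraph this exhausts the assertions of the theorem.

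I expect the only genuine subtlety to be the disjointification in clause (ii): one must see that replacing $F_n$ by $E_n=F_n\setminus\bigcup_{m<n}F_m$ leaves the equivalence class unchanged --- which is precisely where disjointness of $\langle b_n\rangle$ in $\frak B$ is used --- and that $\bigcup_n E_n$ genuinely represents $\sup_n b_n$. Both reduce to the observation that countable unions of negligible sets are negligible, packaged into the supremum fact of the first paragraph; once that is in hand, the remainder is routine bookkeeping, with the only care needed being that identities such as $\mu E=\mu(E\cap F)+\mu(E\setminus F)$ are used in $[0,\infty]$, where they hold because the subtracted term is $0$.
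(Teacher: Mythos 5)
Your proposal is correct and follows essentially the same route as the paper's proof: well-definedness of $\bar\mu$ via $E\symmdiff F\in\Cal N$, then the three measure-algebra axioms, with the same disjointification $E_n=F_n\setminus\bigcup_{m<n}F_m$ for countable additivity. The only difference is cosmetic: you prove the Dedekind $\sigma$-completeness of $\frak B$ and the sequential order-continuity of $E\mapsto E^{\ssbullet}$ directly from the fact that $\bigl(\bigcup_n E_n\bigr)^{\ssbullet}=\sup_n E_n^{\ssbullet}$, where the paper cites its Proposition~\ref{quotient_ba_com} and Corollary~\ref{cor:seq_order_con}, and you justify $E_n^{\ssbullet}=b_n$ at the level of sets rather than of Boolean-algebra elements.
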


\begin{proof}
	{\bf (a) }From Proposition~\ref{quotient_ba_com} and its corollary~\ref{cor:quotient_ba_com}, it follows that $\frak B$ is a Dedekind $\sigma$-complete 
	Boolean algebra. By Corollary~\ref{cor:seq_order_con}, $E\mapsto E^{\ssbullet}$ is sequentially order-continuous, since $\Sigma_X \cap\Cal N$ is a $\sigma$-ideal of $\Sigma_X$. 
	
	\medskip 
	
	{\bf (b)} For $E$, $F\in\Sigma_X$ and $E^{\ssbullet}=F^{\ssbullet}$ in $\frak B$, we have $E\symmdiff F\in\Cal N$, hence using $\mu(E\setminus F) = \mu(F\setminus E) = 0$ (refer Corollary~\ref{cor:eq_rel_ideal}) we get {$\mu F \le\mu E+\mu(F\setminus E)=\mu E\le\mu F+\mu(E\setminus F)=\mu F$} or $\mu F=\mu E$. Consequently the defined formula does indeed define a function $\bar\mu:\frak B\to[0,\infty]$. 
	
	\medskip 
	
	{\bf (c)} We verify the axioms of Definition~\ref{def:meas_alg}. First for the equivalence class of empty set we have {$\bar\mu 0=\bar\mu\emptyset^{\ssbullet}=\mu\emptyset=0$.} Second let $\sequencen{a_n}$ be a disjoint sequence in $\frak B$. Now choose for each $n\in\Bbb N$ an $E_n\in\Sigma_X$ such that 
	$E_n^{\ssbullet}=a_n$. Note that $\sequencen{E_n}$ need not be disjoint although $E_n^{\ssbullet}$ are equivalence classes and therefore partitions of $\Sigma_X$. In order to evaluate $\sum_{n=0}^{\infty}\bar\mu a_n$ we form a disjoint sequence $\sequencen{F_n}$ by setting $F_n=E_n\setminus\bigcup_{i<n}E_i$;  then {$F_n^{\ssbullet} = E_n^{\ssbullet}\setminus\bigcup_{i<n}E_i^{\ssbullet}= E_n^{\ssbullet}\setminus\sup_{i<n}E_i^{\ssbullet}$ using fact that $\frak B$ is a Dedekind $\sigma$-complete Boolean algebra. Thus $F_n^{\ssbullet} =a_n\setminus\sup_{i<n}a_i=a_n$} since $\sequencen{a_n}$ is disjoint. Then $\bar\mu a_n=\bar\mu F_n^{\ssbullet}=\mu F_n$ for each $n$. Now let $E=\bigcup_{n\in\Bbb N}F_n=\bigcup_{n\in\Bbb N}E_n$;  then $E^{\ssbullet}=\bigcup_{n\in\Bbb N}F_n^{\ssbullet}=\sup_{n\in\Bbb N}F_n^{\ssbullet}=\sup_{n\in\Bbb N}a_n$. 
	Hence, {$\bar\mu E^{\ssbullet} = \bar\mu(\sup_{n\in\Bbb N}a_n) =\mu E =\sum_{n=0}^{\infty}\mu F_n =\sum_{n=0}^{\infty}\bar\mu a_n$.} 
	For the last axiom, if $a\ne 0$, choose an $E\in\Sigma_X$ such that $E^{\ssbullet}=a$. But $E\notin\Cal N$ as it belongs to different equivalence class than that of empty set, so $\bar\mu a=\mu E>0$. Indeed we have verified that $(\frak B,\bar\mu)$ is a measure algebra. 
\end{proof}

\begin{center}
	\begin{table}[ht]
		\centering
		%\vspace{1ex}
		\resizebox{\linewidth}{!}{
			\begin{tabular}{|c|c|}\hline
				Objects of category $\mathbf{C}$ & Objects of category $\mathbf{C}^{op}$ \\ \hline
				$(X,\Sigma_{X},\Cal N)$ (localizable measurable space) & $\frak B$ (or complete Boolean algebra $\Sigma_X/\Sigma_X\cap\Cal N$) \\ \hline
				$(X,\Sigma_{X},\mu)$ (localizable measure space) & $(\frak A,\bar\mu)$ (or measure 
				algebra of $(X,\Sigma_X,\mu)$) \\ \hline
				$(X,\Cal PX,\Phi)$ ($\Phi$ null-ideal of counting measure) & $\frak B$ (or atomic complete Boolean algebra $\Sigma_X$) \\ \hline
				$(X,\Cal PX,counting)$ (localizable counting measure space) & $(\frak B,\bar{counting})$ (or measure 
				algebra of counting measure on $X$) \\ \hline
				Objects of $\mathbf{PInj}$ or Par$(\mathbf{LocMeas},\mathcal{M})$ & $\mathbf{C}^{op}$ via inverse map on arrows if it exists. \\ \hline
				
			\end{tabular}
		}
		\caption{The Objects of various base categories from which appropriate functors $L^0,L^2,l^0,l^2$ are defined.}
		\label{table:base_obj}
	\end{table}
\end{center}

Table~\ref{table:base_obj} summarizes all the objects of base categories. The last case of Par$(\mathbf{LocMeas},\mathcal{M})$ is essential for the extension of $L^0$ and Par$(\mathbf{B},\mathcal{M})$ where $\mathbf{B} \subseteq \mathbf{LocMeasure}$ consists of objects with $\sigma$-finite measures caters to the extension of $L^p$. Then category $\mathbf{PInj}$ is same as Par$(\mathbf{countMeas},\mathcal{M})$ for the extended case of $l^0$ but reduces to special case of Par$(\mathbf{B},\mathcal{M})$ where $\mathbf{B}$ consists of objects $(X,\Cal PX,counting)$ with countable $X$ (since counting measure on a set $X$ is $\sigma$-finite iff $X$ is countable) for the extended case of $l^2$. These extensions were motivated from~\cite{Heunen2013} and makes use of the fact that these categories are inverse restriction categories (or roughly speaking partial groupoids) and therefore the arrows are uniquely reversed to form an opposite category from the objects and arrows of original category. This is possible on account of good categorical properties of $\mathbf{LocMeas}$ especially the finite products which are required for deriving the partial categories.

\subsection{Arrows of $\mathbf{C},\mathbf{C}^{op}$}

Now we discuss the appropriate structure preserving functions or homomorphisms of the base categories. Recall that if $X$ and $Y$ are any sets and $\Sigma_Y$ a $\sigma$-algebra of subsets of $Y$; then for an arbitrary function $f:X\to Y$ between sets; $\{f^{-1}[F]:F\in\Sigma_Y\}$ is always a $\sigma$-algebra of subsets of $X$. Now only those functions $f$ are called measurable when $\{f^{-1}[F]:F\in\Sigma_Y\} \subseteq \Sigma_X$. It is well known that $\mathbf{Meas}$ is a category with measurable spaces such as $(X,\Sigma_{X}),(Y,\Sigma_{Y})$ as its objects and these measurable functions $f$ as its morphisms. Now the objects of $\mathbf{LocMeas}$ and $\mathbf{LocMeasure}$ could be viewed as measurable spaces with some added structure such as null ideal or actual measure or dually as appropriate Boolean algebras. Thus the proper structure preserving maps on these objects are measurable functions which preserve this additional structure such as measurable non-singular (measure zero-reflecting) morphisms and Inverse-measure-preserving maps and dually the corresponding Boolean homomorphisms which are reviewed here.

The propositions~\ref{arrows_measurable},~\ref{arrows_measure} essentially relates the arrows of categories $\mathbf{LocMeas}$, $\mathbf{LocMeas}^{op}$, $\mathbf{LocMeasure}$, $\mathbf{LocMeasure}^{op}$.

\begin{proposition} 
	\label{arrows_measurable}
	Let $(X,\Sigma_X,\mu)$ and $(Y,\Sigma_Y,\nu)$
	be measure spaces, and correspondingly $(\frak B,\bar\mu)$, $(\frak A,\bar\nu)$ their
	measure algebras. If $f:X\to Y$ is a non-singular measurable function such that
	$f^{-1}[F]\in\Sigma_X$ for every $F\in\Sigma_Y$ and $\mu f^{-1}[F]=0$
	whenever $\nu F=0$. Then there exists a sequentially order-continuous
	Boolean homomorphism $\phi_f:\frak A\to\frak B$ defined by the setting
	
	\centerline{$\phi_f F^{\ssbullet}=f^{-1}[F]^{\ssbullet}$ for every
		$F\in\Sigma_Y$.}
	
\end{proposition}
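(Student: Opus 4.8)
Proof proposal.

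\medskip

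The plan is to verify that the prescribed assignment $\phi_f F^{\ssbullet} = f^{-1}[F]^{\ssbullet}$ is well-defined, that it is a Boolean homomorphism, and that it is sequentially order-continuous. First I would address well-definedness: if $F^{\ssbullet} = G^{\ssbullet}$ in $\frak A$, then $F \symmdiff G \in \Cal N_\nu$, so $\nu(F \symmdiff G) = 0$; since $f^{-1}[F] \symmdiff f^{-1}[G] = f^{-1}[F \symmdiff G]$ and $f$ is non-singular (measure zero-reflecting), we get $\mu(f^{-1}[F] \symmdiff f^{-1}[G]) = 0$, i.e.\ $f^{-1}[F]^{\ssbullet} = f^{-1}[G]^{\ssbullet}$ in $\frak B$. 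This uses only the non-singularity hypothesis and the fact (recalled in the excerpt) that $\{f^{-1}[F] : F \in \Sigma_Y\} \subseteq \Sigma_X$, so that $f^{-1}[F]^{\ssbullet}$ makes sense.

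\medskip

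Next I would check the Boolean-homomorphism axioms. The key point is that taking preimages under $f$ commutes with all the Boolean set operations: $f^{-1}[Y] = X$, $f^{-1}[\emptyset] = \emptyset$, $f^{-1}[F \cup G] = f^{-1}[F] \cup f^{-1}[G]$, $f^{-1}[F \cap G] = f^{-1}[F] \cap f^{-1}[G]$, and $f^{-1}[Y \setminus F] = X \setminus f^{-1}[F]$. Since the canonical quotient maps $\Sigma_Y \to \frak A$ and $\Sigma_X \to \frak B$ are themselves Boolean homomorphisms (by Theorem~\ref{meas_alg} and the propositions it cites, $\frak A$ and $\frak B$ are the quotients by the respective null $\sigma$-ideals), these identities descend to $\frak A$ and $\frak B$: e.g.\ $\phi_f(F^{\ssbullet} \cup G^{\ssbullet}) = \phi_f((F \cup G)^{\ssbullet}) = f^{-1}[F \cup G]^{\ssbullet} = (f^{-1}[F] \cup f^{-1}[G])^{\ssbullet} = \phi_f F^{\ssbullet} \cup \phi_f G^{\ssbullet}$, and similarly for meets, complements, $0$ and $1$. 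Thus $\phi_f$ is a Boolean homomorphism $\frak A \to \frak B$.

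\medskip

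Finally I would establish sequential order-continuity. Let $\langle a_n \rangle_{n \in \Bbb N}$ be a sequence in $\frak A$ with $a_n \downarrow 0$ (it suffices to treat this case, by taking complements and using that a Boolean homomorphism between Dedekind $\sigma$-complete algebras is sequentially order-continuous iff it preserves countable infima of decreasing sequences down to $0$; alternatively handle increasing sequences directly). Choose $F_n \in \Sigma_Y$ with $F_n^{\ssbullet} = a_n$, and by replacing $F_n$ with $\bigcap_{i \le n} F_i$ we may assume $\langle F_n \rangle$ is decreasing in $\Sigma_Y$ with $\bigcap_n F_n \in \Cal N_\nu$. Then $\langle f^{-1}[F_n] \rangle$ is decreasing in $\Sigma_X$ with $\bigcap_n f^{-1}[F_n] = f^{-1}[\bigcap_n F_n] \in \Cal N_\mu$ by non-singularity, so $f^{-1}[F_n]^{\ssbullet} \downarrow 0$ in $\frak B$; since $E \mapsto E^{\ssbullet} : \Sigma_X \to \frak B$ is sequentially order-continuous (Theorem~\ref{meas_alg}) and $\inf_n F_n^{\ssbullet} = 0$ in $\frak A$ (again because $E \mapsto E^{\ssbullet} : \Sigma_Y \to \frak A$ is sequentially order-continuous), this says exactly $\phi_f(\inf_n a_n) = \inf_n \phi_f a_n$.

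\medskip

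I expect the main obstacle to be purely bookkeeping rather than conceptual: one must be careful that the supremum/infimum of an image sequence $\langle E_n^{\ssbullet} \rangle$ in the quotient algebra need not literally be computed as the class of a set-theoretic union/intersection unless the $E_n$ are chosen monotone, so the reduction ``replace $F_n$ by $\bigcap_{i\le n} F_i$'' (and the dual for increasing sequences) is the step that makes the order-continuity argument go through cleanly. Everything else is a direct transport of elementary preimage identities through the canonical quotient maps, invoking Theorem~\ref{meas_alg} for the structure of $\frak A$, $\frak B$ and the order-continuity of the quotients.
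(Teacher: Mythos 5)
Your proposal is correct and follows the same three-step structure as the paper's proof, with essentially identical arguments for well-definedness (non-singularity applied to $f^{-1}[F_1\symmdiff F_2]$) and for the Boolean homomorphism axioms (preimage commutes with $\symmdiff$, $\cap$ and sends $Y$ to $X$, then descend through the quotient maps). The only divergence is in the order-continuity step: you reduce to decreasing sequences $a_n\downarrow 0$, monotonize via $\bigcap_{i\le n}F_i$, and invoke non-singularity to see that $f^{-1}\bigl[\bigcap_n F_n\bigr]$ is negligible, whereas the paper verifies preservation of countable suprema directly from $f^{-1}\bigl[\bigcup_n F_n\bigr]=\bigcup_n f^{-1}[F_n]$ together with the sequential order-continuity of both quotient maps, which needs neither the monotonization nor the non-singularity hypothesis at that point; both routes are valid.
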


\begin{proof}
	Let $\Cal N(\mu)$ and  $\Cal N(\nu)$ be null ideals of the measures $\mu$ and $\nu$ respectively. Let
	$\Cal I=\Sigma_X\cap\Cal N(\mu)$ and $\Cal J=\Sigma_Y\cap\Cal N(\nu)$, then from Theorem~\ref{meas_alg}
	we have $\frak B$ and $\frak A$ as the Boolean algebra quotients $\Sigma_X/\Cal I$ and $\Sigma_Y/\Cal J$
	respectively. By definition of non-singular measurable function, $\{f^{-1}[N]:N\in\Cal N(\nu)\} \subseteq \Cal N(\mu)$ hence  
	$f^{-1}[F]\in\Cal I$ for every $F\in\Cal J$.
	Now consider $F_1$, $F_2\in\Sigma_Y$ such that $F_1^{\ssbullet}=F_2^{\ssbullet}$, then $F_1\symmdiff F_2\in\Cal J$ consequently
	$f^{-1}[F_1]\symmdiff f^{-1}[F_2]=f^{-1}[F_1\symmdiff F_2] \in \Cal I$ and $f^{-1}[F_1]^{\ssbullet}=f^{-1}[F_2]^{\ssbullet}$.
	So the setting indeed defines a map $\phi_f:\frak A\to\frak B$. 
	
	Next we prove it is a Boolean homomorphism. For $F_1$, $F_2\in\Sigma_Y$, we have 
	$\phi_f F_1^{\ssbullet}\symmdiff \phi_f F_2^{\ssbullet}
	=f^{-1}[F_1]^{\ssbullet}\symmdiff f^{-1}[F_2]^{\ssbullet}
	=(f^{-1}[F_1]\symmdiff f^{-1}[F_2])^{\ssbullet}
	=f^{-1}[F_1\symmdiff F_2]^{\ssbullet}
	=\phi_f(F_1\symmdiff F_2)^{\ssbullet}
	=\phi_f(F_1^{\ssbullet}\symmdiff F_2^{\ssbullet})$.
	
	Hence $\phi_f(a_1\symmdiff a_2)=\phi_f a_1\symmdiff \phi_f a_2$ for all $a_1$,
	$a_2\in\frak A$. Similarly it is easily verified that, $\phi_f(a_1\cap a_2)=\phi_f a_1\cap a_2$ for all $a_1$,
	$a_2\in\frak A$, and finally $\phi_f 1_{\frak A}=\phi_f Y^{\ssbullet}
	=f^{-1}[Y]^{\ssbullet}=X^{\ssbullet}=1_{\frak B}$.
	
	Finally it remains to prove that $\phi_f$ is sequentially order-continuous.
	To prove this, let $\sequencen{a_n}$ be a sequence in $\frak A$.  For each $n$
	choose an $F_n\in\Sigma_Y$ such that $F_n^{\ssbullet}=a_n$, and let
	$F=\bigcup_{n\in\Bbb N}F_n$.   As the map
	$E\mapsto E^{\ssbullet}:\Sigma_Y\to\frak A$ is sequentially order-continuous
	either from Theorem~\ref{def:meas_alg} or from Corollary~\ref{cor:seq_order_con}, 
	$F^{\ssbullet}=\sup_{n\in\Bbb N}a_n$ in $\frak A$. Hence
	
	$\phi_f(\sup_{n\in\Bbb N}a_n) =\phi_f F^{\ssbullet}
	=f^{-1}[F]^{\ssbullet}
	=(\bigcup_{n\in\Bbb N}f^{-1}[F_n])^{\ssbullet}
	=\sup_{n\in\Bbb N}f^{-1}[F_n]^{\ssbullet}
	=\sup_{n\in\Bbb N}\phi_f F_n^{\ssbullet}
	=\sup_{n\in\Bbb N}\phi_f a_n$, proving $\phi_f$ is sequentially order-continuous.
\end{proof}

\begin{proposition}
	\label{arrows_measure}
	Let $(X,\Sigma_X,\mu)$ and $(Y,\Sigma_Y,\nu)$ be
	measure spaces, with measure algebras $(\frak B,\bar\mu)$ and
	$(\frak A,\bar\nu)$. Let $f:X\to Y$ be inverse-measure preserving.   
	Then we have a sequentially order-continuous measure-preserving Boolean
	homomorphism $\phi_f:\frak A\to\frak B$ defined by setting
	$\phi_f F^{\ssbullet}=f^{-1}[F]^{\ssbullet}$ for every $F\in\Sigma_Y$.
\end{proposition}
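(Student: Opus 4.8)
The plan is to observe that almost everything in the statement is already supplied by Proposition~\ref{arrows_measurable}, so that the only genuinely new content is measure-preservation, which reduces to a one-line computation.

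First I would record that an inverse-measure-preserving $f$ is in particular non-singular: if $N\in\Sigma_Y$ with $\nu N=0$, then by hypothesis $f^{-1}[N]\in\Sigma_X$ and $\mu f^{-1}[N]=\nu N=0$, so $f^{-1}[N]\in\Cal N(\mu)$. Thus $f$ satisfies the hypotheses of Proposition~\ref{arrows_measurable}, which already gives us a sequentially order-continuous Boolean homomorphism $\phi_f:\frak A\to\frak B$ with $\phi_f F^{\ssbullet}=f^{-1}[F]^{\ssbullet}$ for every $F\in\Sigma_Y$ (including well-definedness of the formula, the homomorphism identities $\phi_f(a_1\symmdiff a_2)=\phi_f a_1\symmdiff\phi_f a_2$, $\phi_f(a_1\cap a_2)=\phi_f a_1\cap\phi_f a_2$, $\phi_f 1_{\frak A}=1_{\frak B}$, and sequential order-continuity).

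It then remains only to check the measure-preservation axiom $\bar\mu(\phi_f a)=\bar\nu a$ for all $a\in\frak A$. Fix $a\in\frak A$ and, using the surjectivity of the canonical map $E\mapsto E^{\ssbullet}:\Sigma_Y\to\frak A$ from Theorem~\ref{meas_alg}, choose $F\in\Sigma_Y$ with $F^{\ssbullet}=a$. Then, invoking the definitions $\bar\mu E^{\ssbullet}=\mu E$ and $\bar\nu F^{\ssbullet}=\nu F$ from Theorem~\ref{meas_alg} together with the inverse-measure-preserving identity $\mu f^{-1}[F]=\nu F$, one computes $\bar\mu(\phi_f a)=\bar\mu(f^{-1}[F]^{\ssbullet})=\mu f^{-1}[F]=\nu F=\bar\nu F^{\ssbullet}=\bar\nu a$. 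This value is independent of the representative $F$ precisely because $\bar\mu$ and $\bar\nu$ are well-defined functionals on $\frak B$ and $\frak A$, which is again part of Theorem~\ref{meas_alg}.

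There is no real obstacle here; the argument is pure bookkeeping, and the only point deserving a word of care is the appeal to Theorem~\ref{meas_alg} to know that $\bar\mu,\bar\nu$ are honestly defined on the quotient algebras so that the final chain of equalities does not depend on the chosen representatives.
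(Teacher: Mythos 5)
Your proposal is correct and follows essentially the same route as the paper: reduce everything except measure-preservation to Proposition~\ref{arrows_measurable}, then verify $\bar\mu(\phi_f F^{\ssbullet})=\bar\nu F^{\ssbullet}$ directly from $\mu f^{-1}[F]=\nu F$ and the definition of the measure algebra functionals. If anything you are slightly more careful than the paper, since you explicitly check that an inverse-measure-preserving map is non-singular before invoking Proposition~\ref{arrows_measurable}, a step the paper leaves implicit.
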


\begin{proof}
	Since $f:X\to Y$ is inverse-measure-preserving, $f^{-1}[F]\in\Sigma_X$ and $\mu(f^{-1}[F])=\nu F$ for every
	$F\in\Sigma_Y$. Following Theorem~\ref{arrows_measurable} it is immediate that $\phi_f$ sequentially order-continuous Boolean homomorphism. 
	Using $\mu(f^{-1}[F])=\nu F$ for every $F\in\Sigma_Y$, we get $\bar\mu(f^{-1}[F]^{\ssbullet})=\bar\mu(\phi_f F^{\ssbullet})=\bar\nu F^{\ssbullet}$ from Definition~\ref{def:con_meas_alg}. It is proved that $\phi_f:\frak A\to\frak B$ is sequentially order-continuous measure-preserving Boolean homomorphism.
\end{proof}

Table~\ref{table_base_mor} summarizes all the arrows of base categories.

\begin{center}
	\begin{table}[ht]
		\centering
		%\vspace{1ex}
		\resizebox{\linewidth}{!}{
			\begin{tabular}{|c|c|}\hline
				Arrows of category $\mathbf{C}$ & Arrows of category $\mathbf{C}^{op}$ \\ \hline
				$f^\ssbullet$ a.e. class of Non-singular(M0R) measurable $f:X\to Y$ & Sequentially order-continuous
				Boolean homomorphism $\pi_f:\frak B\to\frak A$ \\ \hline
				$f^\ssbullet$ a.e. class of Inverse-measure-preserving $f:X\to Y$& SOC 
				measure preserving Boolean homomorphism $\pi_f:\frak B\to\frak A$ \\ \hline
				Non-singular(M0R) measurable $f:X\to Y$ & Sequentially order-continuous
				Boolean homomorphism $\pi_f:\frak B\to\frak A$ \\ \hline
				Inverse-measure-preserving $f:X\to Y$ & SOC 
				measure preserving Boolean homomorphism $\pi_f:\frak B\to\frak A$ \\ \hline
				Partial Injections or Partial measurable maps & Same as arrows of Inverse Category $\mathbf{C}$ \\ \hline
				
			\end{tabular}
		}
		\caption{The arrows of various base categories from which appropriate functors $L^0,L^2,l^0,l^2$ are defined.}
		\label{table_base_mor}
	\end{table}
\end{center} 

\subsection{Codomain Categories}\label{sec:codom}
All $L^p$ function spaces are Riesz spaces. This suggests that the generic codomain category for $L^0$ and $L^2$ (or $L^p$, $1 \le p \le \infty$) functors is the category $\mathbf{Riesz}$ with objects as Riesz spaces and arrows as Riesz homomorphisms. However 
depending upon the kind of structure one is interested on signal space, we briefly tabulate different codomain categories with objects carrying those structures and corresponding structure-preserving morphisms as summarized in Table~\ref{table:codomain_structures}.
\begin{center}
	\begin{table}[ht]
		\centering
		%\vspace{1ex}
		\resizebox{\linewidth}{!}{
			\begin{tabular}{|c|c|}\hline
				Structure & Structure preserving morphism \\ \hline
				Linear space & Linear map \\ \hline
				Partial order & Monotone (order-preserving) map \\ \hline
				Lattice & Lattice (order-continuous) homomorphism \\ \hline
				Partial ordered Linear Space & Positive (order-preserving) Linear map \\ \hline
				Riesz Space & Riesz homomorphism  \\ \hline
				Topological Linear Space & Continuous linear map  \\ \hline
				Banach Lattice & Bounded Riesz homomorphism  \\ \hline
			\end{tabular}
		}
		\caption{The combination of structures in codomain categories for functors $L^0,L^2,l^0,l^2$.}
		\label{table:codomain_structures}
	\end{table}
\end{center}

The standard categories of interest for $L^0$ is the category of Riesz spaces and Riesz homorphisms (multiplicative sequentially order-continuous Riesz homomorphism); while for $L^p$ is the category of normed Riesz spaces and norm-presereving Riesz homorphisms (multiplicative sequentially order-continuous Riesz homomorphism). For $L^2$ the category of Hilbert spaces with continuous linear maps is often most important from application perspective.

\subsection{Functors $L^0$,$L^2$}
Having studied the domain and codomain categories, in this section we formulate various possible major cases of $L^0$ and $L^2$ functors. The main reference for these cases are volumes~\cite{fremlinmt2},~\cite{fremlinmt3} of the treatise on unified measure theory.

\subsection{Results on Functors $L^0$}
In this section, we specifically discuss all results pertaining to the usual $L^0$ construction of measure theory which can be viewed appropriately as a functor. The discussion and proof for next result is adapted from Exercise $241X(g)$ of~\cite{fremlinmt2}.

\begin{proposition}~\cite{fremlinmt2} 
	\label{prop:L0_func_space}	
	Let $(X,\Sigma_{X},\Cal N(\mu))$ and $(Y,\Sigma_{Y},\Cal N(\nu))$ be localizable measurable spaces, and $\phi:X\to Y$ a non-singular measurable function. Then (a)For every real valued 
	$\Sigma_Y$-measurable function $g$ defined on $Y$,
	$g\phi$ is $\Sigma_X$-measurable. (b) The map
	$g\mapsto g\phi:\mathfrak{L}^0(\nu)\to\mathfrak{L}^0(\mu)$ induces 
	$T:L^0(\nu)\to L^0(\mu)$ defined by setting
	$Tg^{\ssbullet}=(g\phi)^{\ssbullet}$ for every $g \in \mathfrak{L}^0$.
	(c) $T$ is linear operator, that preserves multiplicative structure $T(v\times w)=Tv\times Tw$ for all $v$,
	$w\in L^0(\nu)$, and also lattice structure, that is $T(\sup_{n\in\Bbb N}v_n)=\sup_{n\in\Bbb N}Tv_n$
	given $\sequencen{v_n}$ is a sequence in $L^0(\nu)$ with a supremum in $L^0(\nu)$.
\end{proposition}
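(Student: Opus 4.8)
The plan is to dispatch (a) and (b) as routine bookkeeping with $\phi$ and with a.e.-equivalence classes, and to concentrate the real effort on the supremum clause in (c). For \textbf{(a)} I would use only measurability of $\phi$: since $g$ is real-valued and $\Sigma_Y$-measurable, $g^{-1}[B]\in\Sigma_Y$ for every Borel $B\subseteq\Bbb R$ (it is enough to check this on the generating half-lines $\{t<a\}$, $a\in\Bbb R$), and since $\phi$ is measurable, $(g\phi)^{-1}[B]=\phi^{-1}[g^{-1}[B]]\in\Sigma_X$; hence $g\phi$ is $\Sigma_X$-measurable. For \textbf{(b)} there are two verifications. First, $g\mapsto g\phi$ really sends $\mathfrak{L}^0(\nu)$ into $\mathfrak{L}^0(\mu)$: a member $g\in\mathfrak{L}^0(\nu)$ is defined and virtually $\Sigma_Y$-measurable on some conegligible $D\subseteq Y$, so $g\phi$ is defined on $\phi^{-1}[D]$, and non-singularity of $\phi$ (that is, $\nu N=0\Rightarrow\mu\phi^{-1}[N]=0$) gives $\mu(X\setminus\phi^{-1}[D])=\mu\phi^{-1}[Y\setminus D]=0$, so $\phi^{-1}[D]$ is conegligible while $(g\phi)\restr\phi^{-1}[D]$ is $\Sigma_X$-measurable by (a), applied to the trace $\sigma$-algebras on $D$ and $\phi^{-1}[D]$. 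Second, the assignment descends to classes: if $g_1\eae g_2$ then $N=\{y:g_1(y)\ne g_2(y)\}\in\Cal N(\nu)$, hence $\phi^{-1}[N]=\{x:(g_1\phi)(x)\ne(g_2\phi)(x)\}\in\Cal N(\mu)$ by non-singularity, so $(g_1\phi)^{\ssbullet}=(g_2\phi)^{\ssbullet}$; thus $Tg^{\ssbullet}:=(g\phi)^{\ssbullet}$ is a well-defined map $L^0(\nu)\to L^0(\mu)$.

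For \textbf{(c)}, linearity and the multiplicative identity come pointwise on the common conegligible domain of representatives, from $(g+g')\phi=g\phi+g'\phi$, $(cg)\phi=c(g\phi)$ and $(g\times g')\phi=(g\phi)\times(g'\phi)$, after which one passes to classes. For the lattice clause I would first note that $T$ is order-preserving: if $v\le w$ in $L^0(\nu)$ then representatives satisfy $g\le g'$ a.e., so $g\phi\le g'\phi$ a.e.\ by non-singularity, whence $Tv\le Tw$; in particular $T(v\vee w)=Tv\vee Tw$ since $(g\vee g')\phi=(g\phi)\vee(g'\phi)$ pointwise. Given $\sequencen{v_n}$ with $u=\sup_n v_n$ existing in $L^0(\nu)$, replacing $v_n$ by $v_0\vee\cdots\vee v_n$ leaves both $u$ and the set of upper bounds of $\sequencen{Tv_n}$ unchanged, so I may assume $\sequencen{v_n}$ nondecreasing. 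Choosing nondecreasing representatives $g_n$ of $v_n$ (pointwise maxima of arbitrary representatives) and putting $g^{*}=\sup_n g_n$ pointwise on the conegligible set $\bigcap_n\dom g_n$, the crux is that $g^{*}$ is $\nu$-a.e.\ finite with $(g^{*})^{\ssbullet}=u$: a representative $g'$ of $u$ has $g'\ge g_n$ a.e.\ for each $n$, hence $g'\ge g^{*}$ off a null set (a countable union of null sets), so $g^{*}$ is a.e.\ finite and $(g^{*})^{\ssbullet}\le u$, while $(g^{*})^{\ssbullet}\ge v_n$ for every $n$ then forces $(g^{*})^{\ssbullet}=u$. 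Then $g^{*}\phi=\sup_n(g_n\phi)$ pointwise (composition with $\phi$ on the right commutes with pointwise suprema), and this is $\mu$-a.e.\ finite by non-singularity, so the identical least-upper-bound computation in $L^0(\mu)$ gives $\bigl(\sup_n(g_n\phi)\bigr)^{\ssbullet}=\sup_n(g_n\phi)^{\ssbullet}=\sup_n Tv_n$; since $Tu=(g^{*}\phi)^{\ssbullet}$, we conclude $Tu=\sup_n Tv_n$.

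The step I expect to be the main obstacle is precisely this last one: suprema of sequences in $L^0$ are not computed pointwise a priori, so the hypothesis that $\sup_n v_n$ exists has to be used to force the pointwise supremum of the nondecreasing representatives to be a.e.\ finite and to coincide with $u$, and this must then be transported through $\phi$ — which is where non-singularity is invoked a second time, both to keep the pulled-back supremum a.e.\ finite and to preserve a.e.\ inequalities. Everything else is routine manipulation of a.e.-equivalence classes together with the single fact that $\phi$ carries $\nu$-negligible sets to $\mu$-negligible sets.
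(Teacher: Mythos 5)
Your proposal is correct and follows essentially the same route as the paper's proof: compose with $\phi$, use non-singularity to see that preimages of conegligible sets are conegligible and that $\eae$-classes are respected, and then transport the linear, multiplicative and lattice operations through $g\mapsto g\phi$. The only substantive difference is that your treatment of the supremum clause in (c) is more complete than the paper's, which simply cites its auxiliary results and tacitly identifies $\sup_n v_n$ with the class of the pointwise supremum of representatives — the very point you argue explicitly via nondecreasing representatives and the a.e.-finiteness of $g^{*}$.
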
 
\begin{proof}
	\quad(a) Noting that every inverse-measure-preserving function by its definition is always measurable and
	using Lemma~\ref{thm:composite_meas_func}, it is immediate that $g\phi$ is $\Sigma_X$-measurable. Briefly, Let $a\in\Bbb R$. There exists an $F\in\Sigma_Y$ such that $\{y:g(y)\le a\}=F$. Again there exists an $E\in\Sigma_X$ such that $\phi^{-1}[F]=E$. Hence $\{x:g\phi(x)\le a\}= E\in\Sigma_X$. Since $a$ is arbitrary, $g\phi$ is $\Sigma_X$-measurable.
	
	\quad(b) By definition, $\mathfrak{L}^0(\nu)$ is the set of real-valued functions $f$
	defined on conegligible subsets of $Y$ which are virtually
	measurable, that is, such that $f\restr C$ is measurable for
	some conegligible set $C\subseteq Y$. In other words, $\mathfrak{L}^0(\nu)$ is 
	just the set of real-valued functions $f$, defined on subsets of $Y$, which are equal almost everywhere to some $\Sigma_Y$-measurable function $g$ from $Y$ to $\Bbb R$. More precisely, if $g:Y\to\Bbb R$ is $\Sigma_Y$-measurable and
	$f\eae g$, then $C=\{y:y\in\dom f,\,f(y)=g(y)\}$ is conegligible and
	$f\restr C=g\restr C$ is measurable, so $f\in\mathfrak{L}^0(\nu)$ by part(h) of~\ref{thm:prop_meas_func}. Now by Proposition~\ref{prop:equivalence_coneg_func} $\eae$ is an equivalence relation on $\mathfrak{L}^0(\nu)$ and the space $L^0(\nu)$ is the set of equivalence classes in $\mathfrak{L}^0(\nu)$ under
	$\eae$. For $f\in\mathfrak{L}^0(\nu)$, we denote $f^{\ssbullet}$ for its equivalence class in $L^0(\nu)$. Consider $f\in\mathfrak{L}^0(\nu)$ and $g\in\mathfrak{L}^0(\nu)$ such that $f^{\ssbullet} = g^{\ssbullet}$ in $L^0(\nu)$. Then $C=\{y:y\in\dom f,\,f(y)=g(y)\}$ is conegligible and $f\restr C=g\restr C$ is measurable. Again repeating arguments of Lemma~\ref{thm:composite_meas_func} we may conclude $(f\phi)^{\ssbullet} = (g\phi)^{\ssbullet}$. Explicitly let $A=\phi^{-1}[C]=\dom (g\restr C)\phi$, which is surely conegligible (since $\phi$ is non-singular or zero-reflecting) and $a\in\Bbb R$. Since $f\restr C=g\restr C$ is $\Sigma_Y$-measurable, by Definition~\ref{def:subspace_sigma_meas_func} there exists an $F\in\Sigma_Y$ such that $\{y:(g\restr C)(y)\le a\}=\{y:(f\restr C)(y)\le a\}= F\cap C$. On the other hand, there exists an $E\in\Sigma_X$ such that $\phi^{-1}[F]=E$. Hence $\{x:(g\restr C)\phi(x)\le a\}=\{x:(f\restr C)\phi(x)\le a\}=A\cap E\in\Sigma_A$. Since $a$ is arbitrary, $(g\restr C)\phi = (f\restr C)\phi$ is $\Sigma_X$-measurable. But $g\phi : X \rightarrow \Bbb R$ is $\Sigma_X$-measurable and extends $(g\restr C)\phi = g\phi \restr A$ on a conegligible $A \subseteq X$. Consequently $g\phi \eae (g\restr C)\phi \eae (f\restr C)\phi$ and $(f\phi)^{\ssbullet} = (g\phi)^{\ssbullet}$ in $L^0(\mu)$. Indeed we have well-defined $T:L^0(\nu)\to L^0(\mu)$ where $Tg^{\ssbullet}=(g\phi)^{\ssbullet}$ for every $g \in \mathfrak{L}^0(\nu)$.    
	
	\quad(c)(i) By Theorem~\ref{thm:prop_meas_func} part (b), $g + g'$ is $\Sigma_Y$-measurable,  where $(g + g')(y) = g(y) + g'(y) $ for $y \in Y$.
	Therefore there exists an $F\in\Sigma_Y$ such that $\{y:(g+g')(y)\le a\}=\{y:g(y) + g'(y)\le a\}=F$. But there exists an $E\in\Sigma_X$ such that $\phi^{-1}[F]=E$. Hence $\{x:(g+g')\phi(x)\le a\}=\{x:g\phi(x)+g'\phi(x)\le a\}= E\in\Sigma_X$. Since $a$ is arbitrary, $g\phi + g'\phi = (g + g')\phi$ is $\Sigma_X$-measurable. Now from Proposition~\ref{prop:properties_coneg_func} part (b) $(g + g')^{\ssbullet} = g^{\ssbullet} + g'^{\ssbullet}$ in $L^0(\nu)$ and $(g\phi)^{\ssbullet} + (g'\phi)^{\ssbullet} = ((g + g')\phi)^{\ssbullet}$ in $L^0(\mu)$. Hence $T(g + g')^{\ssbullet}=((g + g')\phi)^{\ssbullet}=(g\phi)^{\ssbullet} + (g'\phi)^{\ssbullet} = Tg^{\ssbullet} + Tg'^{\ssbullet} $. Similarly using Theorem~\ref{thm:prop_meas_func} and Proposition~\ref{prop:properties_coneg_func} part (c) it follows in same manner for any scalar $c \in \Bbb R$, that $T(c.g)^{\ssbullet}=((c.g)\phi)^{\ssbullet}= c.(g\phi)^{\ssbullet} = c. Tg^{\ssbullet} $. Hence $T$ is a linear operator.
	
	\quad(c)(ii) Using Theorem~\ref{thm:prop_meas_func} and Proposition~\ref{prop:properties_coneg_func} part (d) it follows , that $T(g \times g')^{\ssbullet}=((g \times g')\phi)^{\ssbullet}= (g\phi)^{\ssbullet} \times (g'\phi)^{\ssbullet} = Tg^{\ssbullet} \times Tg'^{\ssbullet} $. Hence $T$ preserves multiplication.
	
	\quad(c)(iii) Using Theorem~\ref{thm:prop_meas_func} and Proposition~\ref{prop:properties_coneg_func} part (e) it follows , that $T(\sup_{n\in\Bbb N}g_n)^{\ssbullet}=((\sup_{n\in\Bbb N}g_n)\phi)^{\ssbullet}= \sup_{n\in\Bbb N}(g\phi)^{\ssbullet} = \sup_{n\in\Bbb N}Tg^{\ssbullet} $. Hence $T$ preserves lattice structure.	
	
\end{proof} 

Proposition~\ref{prop:L0_func_space} proves that $L^0$ construction is truly functorial; it explicitly sets up a recipe for defining $L^0$ functor. Also note that we are not really making any use of second clause in the definition of inverse-measure preserving function,
that is $\mu(\phi^{-1}[F])=\nu F$ for every $F\in\Sigma_Y$. This is intimately related to the fact that $L^0$ theory
truly involves only Dedekind $\sigma$-complete Boolean algebras rather than measure algebras as Theorem~\ref{thm:dual_L0_func_space} makes this precise. In other words, the measures enter theory only via their ideals of negligible sets. This directly influences the definition of $L^0$ functor and stands as motivation behind the category of localizable measurable spaces and zero-reflecting maps apart from its good properties and generalization to partial monic derivatives.

\begin{definition}
	\label{def:L0_func_contra} 
	Let $\mathbf{LocMeas}$, $\mathbf{Riesz}$ be the categories as defined earlier. The (contravariant) functor $L^0:$ $\mathbf{LocMeas}$ $\rightarrow \mathbf{Riesz}$ is defined as a mapping that associates to each measurable space $(X,\Sigma_{X},\Cal N(\mu))$ an object $L^0(X,\Sigma_{X},\Cal N(\mu)):=\{f^{\ssbullet} \ | f \in \mathfrak{L}^0(\mu)\}$. The assignment of morphism $\phi^{\ssbullet}:(X,\Sigma_{X},\Cal N(\mu)) \rightarrow (Y,\Sigma_{Y},\Cal N(\nu)) $ is given by
	\begin{equation}
	L^0(\phi^{\ssbullet})(g^{\ssbullet}) = (g\phi)^{\ssbullet}  
	\end{equation}
	where $L^0(\phi^{\ssbullet}): L^0(Y,\Sigma_{Y},\Cal N(\nu)) \rightarrow L^0(X,\Sigma_{X},\Cal N(\mu))$ is the morphism in $\mathbf{Riesz}$ between Riesz spaces with multiplication, which is also multiplicative sequentially order-continuous.
\end{definition}

We verify the functoriality of $L^0$ now.

The object $L^0(X,\Sigma_{X},\Cal N(\mu)) = L^0(\mu)$ is a well-defined Archimedean and Dedekind $\sigma$-complete Riesz space; refer Sections~\ref{sec:linear},~\ref{sec:partialorder}. Thus it is a well-defined object of category $\mathbf{Riesz}$. Next we check if $L^0(\phi^{\ssbullet})$ is a well-defined morphism of $\mathbf{Riesz}$. From Proposition~\ref{prop:L0_func_space}, every $g^{\ssbullet}$ is sent to $L^0(\phi^{\ssbullet})(g^{\ssbullet}) = (g\phi)^{\ssbullet}$ which is a proper element of $L^0(Y,\Sigma_{Y},\Cal N(\nu)) = L^0(\nu)$ proving that $L^0(\phi^{\ssbullet})$ is a well-defined morphism in $\mathbf{Riesz}$. Now we check if composition is preserved with the help of commutative composition diagram,

\begin{equation}
\xymatrix{
	(\Bbb R,\Sigma_{\Cal B}) \ar[r]^{id} & (\Bbb R,\Sigma_{\Cal B}) \ar[r]^{id} & (\Bbb R,\Sigma_{\Cal B})  \\
	(X,\Sigma_{X}) \ar[r]_{\phi} \ar@{.>}[ru]^{g\phi} \ar@{->}[u]^{g\phi} & (Y,\Sigma_{Y}) \ar[r]_{\psi} \ar@{->}[u]^{f\psi = g} \ar@{.>}[ru]^{f\psi} & (Z,\Sigma_{Z}) \ar@{->}[u]_{f} 
}
\end{equation}

\begin{equation}
(L^0(\phi^{\ssbullet}) \circ L^0(\psi^{\ssbullet}))(f^{\ssbullet}) = L^0(\phi^{\ssbullet}) [L^0(\psi^{\ssbullet})(f^{\ssbullet})] = L^0(\phi^{\ssbullet}) (f\psi)^{\ssbullet} = (f\psi\phi)^{\ssbullet} 
\end{equation}

\begin{equation}
= L^0(\psi\phi)^{\ssbullet}(f^{\ssbullet}) =(L^0(\psi^{\ssbullet} \cdot \phi^{\ssbullet}))(f^{\ssbullet})
\end{equation}
Finally for the identity, $L^0(id^{\ssbullet}_{(Y,\Sigma_{Y},\Cal N(\nu))})(g^{\ssbullet})=(g \cdot id)^{\ssbullet} = g^{\ssbullet}$ implying $L^0(id^{\ssbullet}_{(Y,\Sigma_{Y},\Cal N(\nu))})=id_{L^0(Y,\Sigma_{Y},\Cal N(\nu))}$. Indeed functor $L^0$ is a well-defined contravariant functor.

Ofcourse depending upon the particular structures or their combinations such as vector, partial order, lattice and multiplication one can easily form variations of Definition~\ref{def:L0_func_contra} which is a sort of prototype, using different codomain categories as shown in Table~\ref{table:codomain_structures}. One could also generalize to complex $L^0$ using spaces based on complex-valued functions instead of real-valued functions bearing in mind that the usual order structure does not hold. Here one can make use of $\mathfrak{L^0}_{\Bbb C} =\mathfrak{L^0}_{\Bbb C}(\nu)$ for the space of complex-valued functions $g$ such that $\dom g$ is a conegligible subset of $Y$ and there is a conegligible subset $C\subseteq Y$ such that $g\restr C$ is measurable; which means the real and imaginary parts of $g$ both belong to $\mathfrak{L^0}(\nu)$.

In the special case of $\mu$ and $\nu$ being counting measures (and therefore point-supported by Definition~\ref{def:pointsupp}) on sets $X$ and $Y$ the Riesz spaces $L^0(\mu)$ becomes $\mathfrak{L^0}(\mu)=\Bbb R^X$ itself, since $\Sigma_{X} = \Cal PX$ and every set except the empty set has a non-zero measure or in other words the null ideal is trivial. Hence we restrict to a subcategory $\mathbf{countLocMeas}$ of $\mathbf{LocMeas}$ with objects of type $(X,\Cal PX,\Phi)$ where $\Phi$ is trivial null ideal with single element the usual empty set. Also note that here equivalence class $\phi^{\ssbullet}$ coincides with the measurable map $\phi$ itself, since we quotient by trivial null ideal.

\begin{definition}
	\label{def:l0_func_contra} 
	Let $\mathbf{countLocMeas}$, $\mathbf{Riesz}$ be the categories as defined earlier. The (contravariant) functor $l^0:$ $\mathbf{countLocMeas}$ $\rightarrow \mathbf{Riesz}$ is defined as a mapping that associates to each measurable space $(X,\Cal PX,\Phi)$ an object $l^0(X,\Cal PX,\Phi):=\{f \in \mathfrak{L}^0(\mu)\}$. The assignment of morphism $\phi:(X,\Cal PX,\Phi) \rightarrow (Y,\Cal PY,\Phi) $ is given by
	\begin{equation}
	l^0(\phi)(g) = g\phi 
	\end{equation}
	where $l^0(\phi): l^0(Y,\Cal PY,\Phi) \rightarrow l^0(X,\Cal PX,\Phi)$ is the morphism in $\mathbf{Riesz}$ between Riesz spaces with multiplication, which is also multiplicative sequentially order-continuous.
\end{definition}

We now state the dual of Proposition~\ref{prop:L0_func_space} which leads to a covariant form of $L^0$ from opposite of $\mathbf{LocMeas}$ which we denote as $\mathbf{compBoolAlg}$. Note that in general these categories are not equivalent.  
\begin{theorem}
	\label{thm:dual_L0_func_space}
	If $\frak B$ and $\frak A$ are Dedekind $\sigma$-complete Boolean algebras,
	and $\pi:\frak B\to\frak A$ is a sequentially order-continuous Boolean
	homomorphism, then
	
	(a) We have a multiplicative sequentially order-continuous Riesz 
	homomorphism $T_{\pi}:L^0(\frak B)\to L^0(\frak A)$ defined by the
	formula
	
	\centerline{$\Bvalue{T_{\pi}u>a}=\pi\Bvalue{u>a}$}
	
	\noindent whenever $a\in\Bbb R$ and $u\in L^0(\frak B)$.
	
	(b) $T_{\pi}$ is order-continuous iff $\pi$ is order-continuous,
	injective iff $\pi$ is injective, surjective iff $\pi$ is surjective.
	
	(c) If $\frak C$ is another Dedekind $\sigma$-complete Boolean algebra
	and $\theta:\frak A\to\frak C$ another sequentially order-continuous
	Boolean homomorphism then $T_{\theta\pi}=T_{\theta}T_{\pi}:L^0(\frak
	B)\to L^0(\frak C)$.
\end{theorem}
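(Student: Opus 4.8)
The plan rests on the standard description (set up in Sections~\ref{sec:linear},~\ref{sec:partialorder}) of $L^0(\frak B)$ for a Dedekind $\sigma$-complete Boolean algebra $\frak B$: an element $u$ is precisely a decreasing family $a\mapsto\Bvalue{u>a}$ in $\frak B$ which is right-continuous in the sense $\Bvalue{u>a}=\sup_{b>a}\Bvalue{u>b}$, with $\sup_{a\in\Bbb R}\Bvalue{u>a}=1$ and $\inf_{a\in\Bbb R}\Bvalue{u>a}=0$; and on the fact that each of these suprema and infima, together with addition, scalar multiplication, the lattice operations and the multiplication of $L^0$, can be written using only the countable index set $\Bbb Q$ (and $\Bbb N$, via integer truncations). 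Hence a \emph{sequentially} order-continuous Boolean homomorphism is exactly enough to carry the entire structure across.

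For (a) I would first check that $a\mapsto\pi\Bvalue{u>a}$ is a genuine $L^0(\frak A)$-family: it is decreasing since $\pi$ is a Boolean homomorphism; $\sup_{n\in\Bbb N}\pi\Bvalue{u>-n}=\pi(\sup_n\Bvalue{u>-n})=\pi1=1$ and $\inf_{n\in\Bbb N}\pi\Bvalue{u>n}=\pi0=0$ by sequential order-continuity; and right-continuity follows likewise from $\Bvalue{u>a}=\sup_n\Bvalue{u>a+2^{-n}}$. This defines $T_\pi u\in L^0(\frak A)$ by the stated formula. Additivity is obtained by applying $\pi$ to the identity $\Bvalue{u+v>a}=\sup_{q\in\Bbb Q}\bigl(\Bvalue{u>q}\cap\Bvalue{v>a-q}\bigr)$, using that $\pi$ preserves $\cap$ and countable suprema; the cases $c>0$, $c=0$, $c<0$ of scalar multiplication are the analogous easier computations; $T_\pi(u\vee v)=T_\pi u\vee T_\pi v$ follows from $\Bvalue{u\vee v>a}=\Bvalue{u>a}\cup\Bvalue{v>a}$, so $T_\pi$ is a Riesz homomorphism; and multiplicativity follows in the same manner from the $\Bbb Q$-indexed formula for $\Bvalue{u\times v>a}$. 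Sequential order-continuity of $T_\pi$ comes from the fact that a nondecreasing $\sequencen{u_n}$ with supremum $u$ in $L^0(\frak B)$ satisfies $\Bvalue{u>a}=\sup_n\Bvalue{u_n>a}$, to which $\pi$ is applied termwise.

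For (b) the implications from $T_\pi$ to $\pi$ all go through the canonical embedding $\chi:\frak B\to L^0(\frak B)$ (with $\Bvalue{\chi b>a}$ equal to $1$, $b$, $0$ according as $a<0$, $0\le a<1$, $a\ge1$), which is order-continuous, injective, and satisfies $T_\pi\chi b=\chi(\pi b)$; hence $\pi b=0$ forces $T_\pi\chi b=0$, a surjective $T_\pi$ makes every $a\in\frak A$ equal to $\pi\Bvalue{u>1/2}$ for suitable $u$, and the order-continuity of $\pi$ drops out because $\chi$ both preserves and reflects suprema. Among the reverse implications, injectivity is immediate ($T_\pi u=0$ makes $\pi\Bvalue{u>a}$ equal $0$ for $a\ge0$ and $1$ for $a<0$, whence $u=0$ if $\pi$ is injective), and order-continuity uses that an upward-directed $A\subseteq L^0(\frak B)$ with supremum $u$ has $\Bvalue{u>a}=\sup_{v\in A}\Bvalue{v>a}$, which an order-continuous $\pi$ preserves. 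The genuine obstacle is surjectivity of $T_\pi$ when $\pi$ is onto: given $v\in L^0(\frak A)$ one chooses $a_q\in\frak B$ with $\pi a_q=\Bvalue{v>q}$ for $q\in\Bbb Q$, replaces them by the decreasing right-continuous family $a''_q:=\sup_{r\in\Bbb Q,\,r>q}\inf_{p\in\Bbb Q,\,p\le r}a_p$ (still with image $\Bvalue{v>q}$, since $\pi$ is a sequentially order-continuous Boolean homomorphism), and then repairs the two normalization conditions by subtracting $\inf_q a''_q$ (which lies in $\ker\pi$) and adjoining a $\ker\pi$-valued family that is constant for $q<0$ and zero for $q\ge0$; this uses that countable suprema in a Dedekind $\sigma$-complete Boolean algebra distribute over a fixed meet. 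The resulting family defines $u$ with $T_\pi u=v$. Keeping right-continuity and \emph{both} boundary identities simultaneously true through this correction is the part I expect to need the most care.

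Finally (c): $\theta\pi$ is again a sequentially order-continuous Boolean homomorphism, so $T_{\theta\pi}$ is defined by (a), and for every $u\in L^0(\frak B)$ and $a\in\Bbb R$ we have $\Bvalue{T_{\theta\pi}u>a}=\theta\pi\Bvalue{u>a}=\theta\Bvalue{T_\pi u>a}=\Bvalue{T_\theta T_\pi u>a}$; since an element of $L^0(\frak C)$ is determined by its family of cuts, $T_{\theta\pi}=T_\theta T_\pi$.
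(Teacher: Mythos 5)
The paper does not actually prove this theorem: immediately after the statement it defers to Fremlin's Volume~3 ("The proof of this theorem appears in Fremlin Vol 3") and only discusses how $L^0(\mu)$ is identified with $L^0(\frak B)$. Your proposal therefore supplies an argument the paper omits, and it is essentially the standard one (Fremlin 364A--364R): describe $u\in L^0(\frak B)$ as a decreasing, right-continuous, normalized family $a\mapsto\Bvalue{u>a}$, observe that all the relevant suprema/infima and all the algebraic operations are expressible by $\Bbb Q$-indexed (hence countable) formulas, and push everything through the sequentially order-continuous $\pi$. Your verifications for (a) and (c) are correct, and your treatment of (b) correctly isolates the only genuinely delicate step, surjectivity of $T_\pi$ from surjectivity of $\pi$: the two-stage regularization $a''_q=\sup_{r>q}\inf_{p\le r}a_p$ does preserve the images $\pi a''_q=\Bvalue{v>q}$ (the inner infimum of the decreasing family $\Bvalue{v>p}$ over $p\le r$ is just $\Bvalue{v>r}$, and the outer supremum restores right-continuity), and the $\ker\pi$-corrections you describe do preserve monotonicity and right-continuity because finite meets and joins distribute over the countable suprema involved. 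Two small caveats. First, the description of $L^0(\frak B)$ you lean on is not actually set up in Sections~\ref{sec:linear} and~\ref{sec:partialorder} of this paper (those treat the concrete $L^0(\mu)$); the axiomatization as right-continuous families with $\sup_a\Bvalue{u>a}=1$ and $\inf_a\Bvalue{u>a}=0$, together with the $\Bbb Q$-indexed formulas for $+$, $\vee$ and $\times$, must be imported from Fremlin 364A--364E, so your proof is only as self-contained as that reference. Second, in the sequential order-continuity step you should say explicitly why $\sup_n T_\pi u_n$ exists and equals $T_\pi u$ (not merely that $\Bvalue{T_\pi u>a}=\sup_n\Bvalue{T_\pi u_n>a}$); this follows from the standard criterion that an upper bound $w$ of a set $B$ with $\Bvalue{w>a}=\sup_{v\in B}\Bvalue{v>a}$ for all $a$ is the supremum of $B$, which you implicitly use and should cite or prove. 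With those points made explicit, the plan is sound.
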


The proof of this theorem appears in Fremlin Vol 3. If $\frak B$ is the 
Boolean algebra quotient $\Sigma_X/\Sigma_X\cap\Cal N(\mu)$ then
$L^0(\mu)$ gets identified with $L^0(\frak B)$. Correspondingly each $g^{\ssbullet} \in {L}^0(\mu)$
is identified with $u \in {L}^0(\frak B)$ which is
$a\mapsto\Bvalue{g^{\ssbullet}>a}:\Bbb R\to\frak B$, for every $a\in\Bbb R$.
Now $\Bvalue{g^{\ssbullet}\in E}$ means the region where $g^{\ssbullet}$ takes values in $E$ or
$\Bvalue{g^{\ssbullet}\in E}=g^{-1}[E]^{\ssbullet}$.
Hence $\Bvalue{g^{\ssbullet}>a}=\Bvalue{g^{\ssbullet}\in (a,\infty)\,}$ or 
$\xi^\ssbullet : E^{\ssbullet} \mapsto g^{-1} [E]^{\ssbullet}$ for any Borel set $E \subseteq \Bbb R$.
In conclusion $u$ or $g^\ssbullet$ corresponds to Boolean homomorphism $\xi: (\Sigma_{\Cal B}/(\Sigma_{B} \cap \Cal N)) \rightarrow \frak B$. This theorem therefore gives rise to an explicit Definition~\ref{def:L0_func_cov}.

\begin{definition}
	\label{def:L0_func_cov} 
	Let $\mathbf{compBoolAlg}$, $\mathbf{Riesz}$ be the categories as defined earlier. The (covariant) functor $L^0:$ $\mathbf{compBoolAlg}$ $\rightarrow \mathbf{Riesz}$ is defined as a mapping that associates to each Dedekind $\sigma$-complete Boolean algebra $\frak B$ an object $L^0(\frak B):=\{u \ | a\mapsto\Bvalue{u>a} \text{for every}\ a\in\Bbb R \}$. The assignment of morphism $\pi:\frak B \rightarrow \frak A $ is given by $L^0(\pi)(u)$ defined by
	\begin{equation}
	\Bvalue{L^0(\pi)u>a}=\pi\Bvalue{u>a}  
	\end{equation}
	where $L^0(\pi): L^0(\frak B) \rightarrow L^0(\frak A)$ is the morphism in $\mathbf{Riesz}$ between Riesz spaces with multiplication, which is also multiplicative sequentially order-continuous.
\end{definition}

\subsection{Results on Functors $L^2$}
\label{sec:L2_func_space}

Analogous to Proposition~\ref{prop:L0_func_space}, Proposition~\ref{prop:L2_func_space} proves that $L^2$ construction is really functorial; it explicitly sets up a recipe for defining $L^2$ functor. The primary difference is that we now make use of second clause in the definition of inverse-measure preserving function, that is $\mu(\phi^{-1}[F])=\nu F$ for every $F\in\Sigma_Y$. This is again intimately related to the fact that $L^2$ theory truly involves measure spaces or dually measure algebras and normed Riesz spaces as Proposition~\ref{prop:L2_func_space} makes this precise.

\begin{proposition} 
	\label{prop:L2_func_space}	
	Let $(X,\Sigma,\mu)$ and $(Y,\Tau,\nu)$ be measure
	spaces, and $\phi:X\to Y$ an inverse-measure-preserving function. Then,   
	(a)For every $[-\infty,\infty]$-real valued 
	$\Sigma_Y$-measurable function $g$ defined on $Y$,
	$g\phi$ is $\Sigma_X$-measurable. (b) The map
	$g\mapsto g\phi:\mathfrak{L}^2(\nu)\to\mathfrak{L}^2(\mu)$ induces 
	$T:L^2(\nu)\to L^2(\mu)$ defined by setting
	$Tg^{\ssbullet}=(g\phi)^{\ssbullet}$ for every $g \in \mathfrak{L}^2$ such that $|g|^2$ is integrable (often termed square integrable functions) (c) $T$ is norm-preserving linear operator that is, $\|Tg^{\ssbullet}\|_2=\|g^{\ssbullet}\|_2$ for every $g^{\ssbullet}\in L^2(\nu)$, which preserves multiplicative structure $T(v\times w)=Tv\times Tw$ for all $v$,
	$w\in L^2(\nu)$, and also lattice structure, that is $T(\sup_{n\in\Bbb N}v_n)=\sup_{n\in\Bbb N}Tv_n$
	given $\sequencen{v_n}$ is a sequence in $L^2(\nu)$ with a supremum in $L^2(\nu)$.
\end{proposition}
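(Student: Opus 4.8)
The plan is to recycle Proposition~\ref{prop:L0_func_space} for everything that is purely algebraic or order-theoretic, and to supply the single new analytic ingredient the $L^2$ case needs: the change-of-variables identity $\int_X (h\circ\phi)\,d\mu=\int_Y h\,d\nu$ for non-negative $\Sigma_Y$-measurable $h$, which holds precisely because $\phi$ is inverse-measure-preserving. This is the place where the second clause $\mu\phi^{-1}[F]=\nu F$ of the definition finally gets used, in contrast with the $L^0$ case.

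For part (a): an inverse-measure-preserving function is by definition measurable, so $g\phi$ is $\Sigma_X$-measurable by the composition result Lemma~\ref{thm:composite_meas_func}, exactly as in Proposition~\ref{prop:L0_func_space}(a); the $[-\infty,\infty]$-valued version is the same argument applied to the Borel structure of $[-\infty,\infty]$. Since $\mu\phi^{-1}[N]=\nu N=0$ whenever $\nu N=0$, $\phi$ is in particular non-singular, so the well-definedness bookkeeping of Proposition~\ref{prop:L0_func_space}(b) transfers verbatim: $f\eae g$ on $Y$ forces $f\phi\eae g\phi$ on $X$, and $g\mapsto g\phi$ descends to a map $L^0(\nu)\to L^0(\mu)$.

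For part (b): I would establish the integral identity by the standard bootstrap — true for indicators $\chi_F$, $F\in\Sigma_Y$, by the defining equality $\mu\phi^{-1}[F]=\nu F$; extended to non-negative simple functions by linearity; to arbitrary non-negative measurable $h$ by monotone convergence (pulling an increasing sequence of simple functions below $h$ back along $\phi$ to an increasing sequence of simple functions below $h\phi$, legitimate since $h\phi$ is measurable by (a)); and to signed integrands by the positive/negative split. Applying it to $h=|g|^2$ gives $\int_X|g\phi|^2\,d\mu=\int_Y|g|^2\,d\nu$, hence $g\in\mathfrak{L}^2(\nu)\Rightarrow g\phi\in\mathfrak{L}^2(\mu)$, so the $L^0$-map just constructed restricts to a well-defined $T:L^2(\nu)\to L^2(\mu)$ with $Tg^\ssbullet=(g\phi)^\ssbullet$.

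For part (c): norm preservation is immediate from the identity just used, $\|Tg^\ssbullet\|_2^2=\int_X|g\phi|^2\,d\mu=\int_Y|g|^2\,d\nu=\|g^\ssbullet\|_2^2$. Linearity, multiplicativity and the lattice-homomorphism property are inherited from Proposition~\ref{prop:L0_func_space}(c): $L^2(\nu)$ sits inside $L^0(\nu)$ as a solid Riesz subspace on which the vector, product and lattice operations agree with those of $L^0$, $T$ is the restriction of the corresponding $L^0$-operator, and a supremum that exists in $L^2(\nu)$ is also its supremum in $L^0(\nu)$, so $T(\sup_{n\in\Bbb N}v_n)=\sup_{n\in\Bbb N}Tv_n$ follows and lands in $L^2(\mu)$, while $T(v\times w)=Tv\times Tw$ is read off inside the ambient $L^0(\mu)$ since a product of $L^2$ elements need not lie in $L^2$. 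The hard part will be the change-of-variables identity together with the care needed in the monotone-convergence step — making sure approximating simple functions pull back correctly and that the $[-\infty,\infty]$-valued subtlety is harmless (square-integrability forces $g$ finite a.e.); everything else is either verbatim from Proposition~\ref{prop:L0_func_space} or a one-line consequence of the identity.
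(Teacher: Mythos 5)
Your proposal is correct and follows essentially the same route as the paper: inherit linearity, multiplicativity and the lattice property by restricting the $L^0$ operator of Proposition~\ref{prop:L0_func_space}, and obtain norm preservation from $\int_X|g\phi|^2\,d\mu=\int_Y|g|^2\,d\nu$, established on simple functions via $\mu\phi^{-1}[F_i]=\nu F_i$ and extended by monotone limits. Your formulation via the general change-of-variables identity for non-negative $h$ is slightly cleaner than the paper's direct computation with $|g|^2$ simple, but it is the same argument.
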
 
\begin{proof}
	By definition, $\mathfrak{L}^2=\mathfrak{L}^2(\mu)$ is simply the set of functions
	$g\in\mathfrak{L}^0=\mathfrak{L}^0(\mu)$ such that $|g|^2$ is integrable,
	and $L^2=L^2(\mu)$ is defined as the set of functions
	$\{g^{\ssbullet}:g\in\mathfrak{L}^2\}\subseteq L^0=L^0(\mu)$.
	
	Also $L^2(\mu)$ is a Riesz subspace of $L^0(\mu)$. In 
	Proposition~\ref{prop:L0_func_space}, we have already proved that $T_{\phi}:L^0(\nu)\to L^0(\mu)$
	is a Riesz homomorphism, that is, a linear operator which preserves both multiplicative and lattice structure.
	Since here $T:L^2(\nu)\to L^2(\mu)$ is really a restriction of $T_{\phi}$ operating on equivalence classes $g^{\ssbullet}$ with $g\in\mathfrak{L}^0=\mathfrak{L}^0(\mu)$ such that $|g|^2$ is integrable. Hence $T:L^2(\nu)\to L^2(\mu)$ is also a Riesz homomorphism
	and therefore a linear operator which preserves both multiplicative and lattice structure. All now remains to be proven is that, it is
	also norm preserving.
	
	For this recall that $\|\,\|_2:L^2=L^2(\mu)\to\coint{0,\infty}$ is actually a norm defined by writing
	$\|g^{\ssbullet}\|_2=\|g\|_2$ for every $g\in\mathfrak{L}^2$ where $\|g\|_2=(\int|g|^2)^{1/2}$. Now
	a real-valued $|g|^2 = f$ is integrable means it is expressible as
	$f'-f''$ with integral as {$\int f=\int f'-\int f''$.} where $f'$ and $f''$ are such that: 
	(i) the domain of $f'$ is a conegligible subset of $X$ and
	$f'(x)\in\coint{0,\infty}$ for each $x\in\dom f'$ (and same for $f''$),
	(ii) there exists a non-decreasing sequence $\sequencen{f'_n}$ of
	non-negative simple functions such that $\sup_{n\in\Bbb N}\int f'_n<\infty$ and
	$\lim_{n\to\infty}f'_n(x)\penalty-100=f'(x)$ for almost
	every $x\in X$ (and same for $f''$). Hence $\int f'=\lim_{n\to\infty}\int f'_n$ whenever $\sequencen{f'_n}$ is a
	non-decreasing sequence of simple functions converging to $f'$ almost everywhere. Now for simplicity we assume 
	$|g|^2 = f$ is a simple function and prove for this case, but general case is not much difficult since it simply involves taking supremum of non-decreasing sequence of simple functions as described earlier. In this special case, $\int |g|^2=\int f=\sum_{i=0}^m a_i\nu F_i$ where $f=\sum_{i=0}^m a_i\chi F_i$ and every $F_i \in \Sigma_Y$ is a measurable set of finite measure. But $|g|^2$ being simple means $g$ is also simple and $a_i = |b_i|^2$. Hence $\int g = \sum_{i=0}^m b_i\nu F_i$ where $g=\sum_{i=0}^m b_i\chi F_i$. Now $g\phi =(\sum_{i=0}^m b_i\chi F_i)\phi$, but $\chi F_i\phi = \chi E_i$ since there exist $E_i\in\Sigma_X$ such that $\phi^{-1}[F_i]=E_i$. Hence $g\phi =\sum_{i=0}^m b_i\chi E_i$ and $|g\phi|^2 =\sum_{i=0}^m a_i\chi E_i$. Now using crucial property of inverse-measure-preserving, that is $\mu E_i=\mu(\phi^{-1}[F_i])=\nu F_i$, $\int |g\phi|^2 =\sum_{i=0}^m a_i\mu E_i =\sum_{i=0}^m a_i\nu F_i =\int |g|^2$. Therefore $\|g\phi\|_2= \|g\|_2$ implying $\|Tg^{\ssbullet}\|_2=\|g^{\ssbullet}\|_2$. Indeed generalizing to non-simple integrable $g$, it is proved that $T:L^2(\nu)\to L^2(\mu)$ is a norm preserving Riesz homomorphism.
	
\end{proof} 

The spaces $L^p(X,\Sigma_{X},\mu)$ for any $p \in [1,\infty]$ are Banach Lattices, hence instead of considering simply $\mathbf{Riesz}$ the category of Riesz spaces and Riesz homomorphisms, one can consider a codomain category $\mathbf{BanLatt}$ the category of Banach Lattices (complete normed Riesz spaces) and Riesz homomorphisms (order-continuous Positive operators) which are contractions. The objects additionally also have the property of multiplication and therefore homomorphisms become multiplicative when they preserve this property.    

\begin{definition}
	\label{def:L2_func_contra} 
	Let $\mathbf{LocMeasure}$, $\mathbf{BanLatt}$ be the categories as defined earlier. The (contravariant) functor $L^2:$ $\mathbf{LocMeasure}$ $\rightarrow \mathbf{BanLatt}$ is defined as a mapping that associates to each measure space $(X,\Sigma_{X},\mu)$ an object $L^2(X,\Sigma_{X},\mu):=\{f^{\ssbullet} \ | f \in \mathfrak{L}^0(\mu) ,|f|^2 \text{ is integrable}\}$. The assignment of inverse-measure -preserving morphism $\phi:(X,\Sigma_{X},\mu) \rightarrow (Y,\Sigma_{Y},\nu) $ is given by
	\begin{equation}
	L^2(\phi)(g^{\ssbullet}) = (g\phi)^{\ssbullet}  
	\end{equation}
	where $L^2(\phi): L^2(Y,\Sigma_{Y},\nu) \rightarrow L^2(X,\Sigma_{X},\mu)$ is the morphism in $\mathbf{BanLatt}$ between Banach lattices with multiplication, which is also multiplicative norm-preserving, that is $\|L^2(\phi)g^{\ssbullet}\|_2=\|g^{\ssbullet}\|_2$.
\end{definition}

The functoriality of $L^2$ easily gets verified in the same manner as $L^0$.

Once again depending upon the particular structures or their combinations such as vector, partial order, lattice, multiplication and complete norm one can easily form variations of Definition~\ref{def:L2_func_contra} which is a sort of prototype, using different codomain categories. 

From viewpoint of signal representation, the most common variation is to consider structure of completed normed vector space or Banach space and generalize to complex $L^2$ using spaces based on complex-valued functions. Again we need to consider
$\mathfrak{L^0}_{\Bbb C} =\mathfrak{L^0}_{\Bbb C}(\nu)$ for the space of complex-valued functions $g$ that are virtually measurable. In other words, $\mathfrak{L^0}_{\Bbb C}(\nu)$ is just the set of complex-valued functions $f$, defined on subsets of $X$, which are equal almost everywhere to some $\Sigma_X$-measurable function $h$ from $X$ to $\Bbb C$.

\begin{definition}
	\label{def:L2_func_contra_Hilb} 
	Let $\mathbf{Measure}$, $\mathbf{Hilb}$ be the categories as defined earlier. The (contravariant) functor $L^2:$ $\mathbf{Measure}$ $\rightarrow \mathbf{Hilb}$ is defined as a mapping that associates to each measure space $(X,\Sigma_{X},\mu)$ an object $L^2(X,\Sigma_{X},\mu):=\{f^{\ssbullet} \ | f \in \mathfrak{L}^0_{\Bbb C}(\mu) ,|f|^2 \text{ is integrable}\} :=\{h : X \rightarrow \mathbb{C} \ | \int |h(x)|^2 d\mu < \infty\}$. The assignment of inverse-measure -preserving morphism $\phi:(X,\Sigma_{X},\mu) \rightarrow (Y,\Sigma_{Y},\nu) $ is given by
	\begin{equation}
	L^2(\phi)(g^{\ssbullet}) = (g\phi)^{\ssbullet}  
	\end{equation}
	where $L^2(\phi): L^2(Y,\Sigma_{Y},\nu) \rightarrow L^2(X,\Sigma_{X},\mu)$ is the morphism in $\mathbf{Hilb}$ between Hilbert spaces which is also norm-preserving, that is $\|L^2(\phi)g^{\ssbullet}\|_2=\|g^{\ssbullet}\|_2$.
\end{definition}

Again in the special case of counting measures on sets $X$ and $Y$ the Hilbert spaces $L^2(\mu)$ becomes $\mathfrak{L^2}(\mu)$ itself and customarily denoted as $l^2(X)$ since $\Sigma_{X} = \Cal PX$ and every set except the empty set has a non-zero measure or in other words the null ideal is trivial. Hence we restrict to a subcategory $\mathbf{countMeasure}$ of $\mathbf{Measure}$ with objects of type $(X,\Cal PX,counting)$ where $counting$ is point-supported counting measure.

\begin{definition}
	\label{def:l2_func_contra_Hilb} 
	Let $\mathbf{countMeasure}$, $\mathbf{Hilb}$ be the categories as defined earlier. The (contravariant) functor $l^2:$ $\mathbf{countMeasure}$ $\rightarrow \mathbf{Hilb}$ is defined as a mapping that associates to each measure space $(X,\Cal PX,counting)$ an object $l^2(X,\Cal PX,counting):=\{f \in \mathfrak{L}^0_{\Bbb C}(\mu) ,|f|^2 \text{ is integrable}\} :=\{f : X \rightarrow \mathbb{C} \ | \sum_{x\in X}|f(x)|^2  < \infty\}$. The assignment of inverse-measure -preserving morphism $\phi:(X,\Cal PX,counting) \rightarrow (X,\Cal PY,counting) $ is given by
	\begin{equation}
	l^2(\phi)(g) = (g\phi)  
	\end{equation}
	where $l^2(\phi): l^2(Y,\Cal PY,counting) \rightarrow l^2(X,\Cal PX,counting)$ is the morphism in $\mathbf{Hilb}$ between Hilbert spaces which is also norm-preserving, that is $\|l^2(\phi)g\|_2=\|g\|_2$.
\end{definition}

Before we can handle the dual of Proposition~\ref{prop:L2_func_space} which leads to a covariant form of $L^2$ from opposite of $\mathbf{LocMeasure}$ which we denote as $\mathbf{MeasureAlg}$, it is important to note that if $\frak B$ is
any algebra carrying two non-isomorphic measures say $\mu$ and $\mu'$ , the
corresponding $L^2(\mu)$ and $L^2(\mu')$ spaces are still isomorphic. Hence rather than defining the functor from $\mathbf{MeasureAlg}$ often functor $L^P (1 \le p < \infty)$ is determined by Boolean ring of elements of finite measure in a measure algebra than in terms of the whole algebra. Note that the algebra $\frak B$ is uniquely determined in certain cases but the measure $\bar\mu$ is never determined.   If $(\frak B,\bar\mu)$ is a measure algebra, then a Boolean ring is the ideal $\frak B^f=\{b:b\in\frak B,\,\bar\mu b<\infty\}$ whereas a ring homomorphism $\pi:\frak B^f\to\frak A^f$ is termed as
{\bf measure-preserving} if $\bar\nu\pi b=\bar\mu b$ for every $b\in\frak B^f$. Theorem~\ref{thm:ring_algebra_dep} makes this dependency precise.

\begin{theorem}
	\label{thm:ring_algebra_dep}
	If $(\frak B,\bar\mu)$ and $(\frak A,\bar\nu)$
	are measure algebras and $\pi:\frak B^f\to\frak A^f$ a
	measure-preserving ring homomorphism between corresponding rings of elements of finite measure, then
	
	(a) There is a unique order-continuous norm-preserving Riesz
	homomorphism $T_{\pi}:L^2(\frak B,\bar\mu)\to L^2(\frak A,\bar\nu)$ such
	that $\Bvalue{T_{\pi}u>a}=\pi\Bvalue{u>a}$, $\|T_{\pi}u\|_2=\|u\|_2$ for every
	$u\in L^2(\frak B,\bar\mu)$ and $a>0$.
	
	(b) If $(\frak C,\bar\lambda)$ is another measure algebra and
	$\theta:\frak A^f\to\frak C^f$ another measure-preserving ring
	homomorphism, then $T_{\theta\pi}=T_{\theta}T_{\pi}:
	L^2(\frak B,\bar\mu)\to L^2(\frak C,\bar\lambda)$.
\end{theorem}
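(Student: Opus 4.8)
The plan is to build $T_\pi$ first on a norm-dense Riesz subspace of simple functions and then extend by continuity; the whole construction hinges on the single extra hypothesis that $\pi$ is measure-preserving, which is precisely what upgrades the $L^0$-style assignment of Proposition~\ref{prop:L0_func_space} and Theorem~\ref{thm:dual_L0_func_space} to a norm isometry. Write $S=S(\frak B^f)$ for the linear span in $L^2(\frak B,\bar\mu)$ of $\{\chi b : b\in\frak B^f\}$. I would first record that $S$ is norm-dense in $L^2(\frak B,\bar\mu)$: for $u\ge 0$ in $L^2$ the cut elements $\Bvalue{u>a}$ satisfy $\bar\mu\Bvalue{u>a}\le a^{-2}\|u\|_2^2<\infty$ for $a>0$ by Chebyshev's inequality, so the usual dyadic truncations $u_n\uparrow u$ lie in $S$ and, since the $L^2$-norm is order-continuous, $\|u-u_n\|_2\to 0$; the general case follows by splitting $u=u^+-u^-$.

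For (a), define $T_\pi$ on $S$ by $T_\pi\bigl(\sum_{i=1}^n a_i\chi b_i\bigr)=\sum_{i=1}^n a_i\chi(\pi b_i)$. Well-definedness: every element of $S$ has a representation over a finite disjoint family in $\frak B^f$, any two such representations admit a common disjoint refinement, and since $\pi$ is a ring homomorphism it preserves $0$ and $\cap$, hence disjointness, so the two representations map to the same element; linearity is then clear. For a disjoint representation $u=\sum a_i\chi b_i$ one has $\|u\|_2^2=\sum a_i^2\bar\mu b_i$, while the $\pi b_i$ are disjoint in $\frak A^f$ and $\bar\nu(\pi b_i)=\bar\mu b_i$ because $\pi$ is measure-preserving, so $\|T_\pi u\|_2^2=\sum a_i^2\bar\nu(\pi b_i)=\|u\|_2^2$; thus $T_\pi$ is a linear isometry on $S$, and on disjointly supported simple functions it visibly commutes with $\vee,\wedge,|\cdot|$, so it is a Riesz homomorphism on $S$. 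Being isometric on the dense subspace $S$, $T_\pi$ extends uniquely to a linear isometry $T_\pi:L^2(\frak B,\bar\mu)\to L^2(\frak A,\bar\nu)$, giving $\|T_\pi u\|_2=\|u\|_2$, and since the lattice operations are norm-continuous on $L^2$ the extension is still a Riesz homomorphism, in particular positive. Order-continuity then follows from order-continuity of the $L^2$-norm: if $u_\alpha\downarrow 0$ in $L^2(\frak B,\bar\mu)$ then $\|u_\alpha\|_2\to 0$, hence $\|T_\pi u_\alpha\|_2\to 0$, and as $0\le\inf_\alpha T_\pi u_\alpha\le T_\pi u_\alpha$ the infimum is forced to be $0$. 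The identity $\Bvalue{T_\pi u>a}=\pi\Bvalue{u>a}$ for $a>0$ is immediate for $u\in S$, and passes to arbitrary $u\in L^2(\frak B,\bar\mu)$ — with $\Bvalue{u>a}\in\frak B^f$ again by the Chebyshev bound — by approximating $u$ in norm and using that $T_\pi$ is an order-continuous Riesz homomorphism together with the standard expression of $\chi\Bvalue{u>a}$ through cut (truncation) functions of $u$. Uniqueness: any order-continuous Riesz homomorphism $T'$ satisfying the same $\Bvalue{\,}$-identity must send $\chi b$ to $\chi(\pi b)$ for every $b\in\frak B^f$ (read off from $\Bvalue{\chi b>a}$ for $0<a<1$ and for $a\ge 1$), hence agrees with $T_\pi$ on $S$ and therefore, by norm-continuity, everywhere.

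For (b), note that $\theta\pi:\frak B^f\to\frak C^f$ is again a measure-preserving ring homomorphism, so $T_{\theta\pi}$ is defined by (a), and both $T_{\theta\pi}$ and $T_\theta T_\pi$ are order-continuous Riesz homomorphisms $L^2(\frak B,\bar\mu)\to L^2(\frak C,\bar\lambda)$; since $\Bvalue{T_\pi u>a}\in\frak A^f$ for $a>0$, we get $\Bvalue{T_\theta T_\pi u>a}=\theta\Bvalue{T_\pi u>a}=\theta\pi\Bvalue{u>a}=\Bvalue{T_{\theta\pi}u>a}$, so the uniqueness clause of (a) forces $T_{\theta\pi}=T_\theta T_\pi$. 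I expect the only genuinely delicate steps to be the passage of the $\Bvalue{\,}$-identity from simple functions to arbitrary $u\in L^2$ (which needs the cut-function representation and order-continuity, not merely norm limits) and the bookkeeping of the domains $\frak B^f,\frak A^f,\frak C^f$ so that $\pi,\theta,\theta\pi$ are only ever applied to finite-measure elements — both controlled throughout by restricting to $a>0$ and invoking Chebyshev; the isometry and Riesz-homomorphism properties are routine once density of $S$ and measure-preservation are in place.
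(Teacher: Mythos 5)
The paper itself does not prove this theorem --- it defers to \cite{fremlinmt3} --- so there is no in-paper argument to compare against; your proposal is, in outline, the standard proof from that reference: take the sublattice $S$ spanned by $\{\chi b: b\in\frak B^f\}$, check that measure-preservation of $\pi$ makes $\sum a_i\chi b_i\mapsto\sum a_i\chi(\pi b_i)$ a well-defined Riesz-homomorphic isometry on $S$ (via common disjoint refinements), and extend by continuity using density of $S$ in $L^2(\frak B,\bar\mu)$, which you correctly obtain from Chebyshev plus order-continuity of the $L^2$-norm. The well-definedness, isometry, extension, order-continuity and uniqueness steps, and the deduction of (b) from the uniqueness clause of (a), are all sound as written.

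The one step you yourself flag as delicate is the one that is not yet a proof: passing $\Bvalue{T_\pi u>a}=\pi\Bvalue{u>a}$ from $u\in S$ to general $u$. The route via ``the standard expression of $\chi\Bvalue{u>a}$ through cut (truncation) functions of $u$'' does not go through as stated, because those expressions involve $u\wedge a\chi 1_{\frak B}$, and $\chi 1_{\frak B}$ need not belong to $L^2(\frak B,\bar\mu)$ when $\bar\mu$ is not totally finite, so $T_\pi$ cannot be applied to them. The missing ingredient is that $\pi$ itself is sequentially order-continuous, and this is forced by measure-preservation: if $c_n\uparrow c$ in $\frak B^f$ then $\bar\nu\bigl(\pi c\setminus\sup_n\pi c_n\bigr)\le\bar\nu(\pi(c\setminus c_n))=\bar\mu(c\setminus c_n)\to 0$, so $\sup_n\pi c_n=\pi c$ by strict positivity of $\bar\nu$. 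With that in hand the step closes: reduce to $u\ge 0$ (since $\Bvalue{u>a}=\Bvalue{u^+>a}$ and $(T_\pi u)^+=T_\pi(u^+)$ for a Riesz homomorphism), take your dyadic $u_n\in S$ with $u_n\uparrow u$, note $\Bvalue{u_n>a}\uparrow\Bvalue{u>a}$ inside $\frak B^f$ (Chebyshev again), and use order-continuity of $T_\pi$ together with the identity $\Bvalue{\sup_n v_n>a}=\sup_n\Bvalue{v_n>a}$ to get $\Bvalue{T_\pi u>a}=\sup_n\pi\Bvalue{u_n>a}=\pi\Bvalue{u>a}$ for $a>0$. Once this observation about $\pi$ is inserted, your argument is complete.
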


For proof pertaining to general $p$ see~\cite{fremlinmt3}. This theorem therefore gives rise to an explicit Definition~\ref{def:L2_func_cov}.

\begin{definition}
	\label{def:L2_func_cov} 
	Let $\mathbf{MeasureAlg}$, $\mathbf{BanLatt}$ be the categories as defined earlier. The (covariant) functor $L^2:$ $\mathbf{MeasAlg}$ $\rightarrow \mathbf{BanLatt}$ is defined as a mapping that associates to each measure algebra $(\frak B,\bar\mu)$ an object $L^2(\frak B,\bar\mu):=\{u \ | a\mapsto\Bvalue{u>a} \text{for every}\ a\in\Bbb R \}$. The assignment of morphism $\pi:(\frak B,\bar\mu) \rightarrow (\frak A,\bar\nu) $ is given by $L^2(\pi)(u)$ defined by
	\begin{equation}
	\Bvalue{L^2(\pi)u>a}=\pi'\Bvalue{u>a};  \pi':\frak B^f\to\frak A^f, a>0 
	\end{equation}
	where $L^2(\pi): L^2(\frak B,\bar\mu) \rightarrow L^2(\frak A,\bar\nu)$ is the morphism in $\mathbf{BanLatt}$ between Banach lattices with multiplication, which is also multiplicative norm-preserving, that is $\|L^2(\pi)u\|_2=\|u\|_2$.
\end{definition}
As expected, $L^2(\pi): L^2(\frak B,\bar\mu) \rightarrow L^2(\frak A,\bar\nu)$ corresponds to the map
$g^{\ssbullet}\mapsto(g\phi)^{\ssbullet}:L^1(\nu)\to L^1(\mu)$ of Theorem~\ref{prop:L2_func_space}, when $\phi:X\to Y$ is an inverse-measure-preserving function between $(X,\Sigma,\mu)$ and $(Y,\Tau,\nu)$ with $\pi:(\frak B,\bar\mu)\to(\frak A,\bar\nu)$ the corresponding measure-preserving homomorphism.

Finally Table~\ref{table:duality} quickly summarizes the essential functorial definitions and their dual versions which we considered in detail here.

\begin{center}
	\begin{table}[ht]
		\centering
		%\vspace{1ex}
		\resizebox{\linewidth}{!}{
			\begin{tabular}{|c|c|}\hline
				Contravariant Functorial Model & Dual Covariant Functorial Model \\ \hline
				$\xymatrix{
					(\Bbb R,\Sigma_{\Cal B}) \ar[r]^{id} & (\Bbb R,\Sigma_{\Cal B})  \\
					(I,\Sigma_{I}) \ar[r]_{\phi} \ar@{.>}[ru]^{g\phi} \ar@{->}[u]^{g\phi} & (J,\Sigma_{J}) \ar@{->}[u]_{g}
				}$ & $\xymatrix{
					(\Sigma_{\Cal B},\symmdiff,\cap) \ar@{->}[d]_{\pi \xi} & (\Sigma_{\Cal B},\symmdiff,\cap) \ar[l]^{id} \ar@{.>}[ld]^{\pi \xi} \ar@{->}[d]^{\xi (E \mapsto g^{-1}[E])} \\
					(\Sigma_{I},\symmdiff,\cap)   & (\Sigma_{J},\symmdiff,\cap) \ar[l]^{\pi} 
				}$ \\ \hline
				$\xymatrix{
					L^0(I,\Sigma_{I},\Cal N(\mu)) & L^0(J,\Sigma_{J},\Cal N(\nu)) \ar[l]^{T = L^0(\phi^\ssbullet)}  \\
					(I,\Sigma_{I},\Cal N(\mu)) \ar[r]_{\phi^\ssbullet} & (J,\Sigma_{J},\Cal N(\nu))
				}$ & $\xymatrix{
					L^0(\frak B) & L^0(\frak B) \ar[l]^{L^0(\pi_\phi)}  \\
					(\Sigma_{I}/(\Sigma_{I} \cap \Cal N(\mu)))   & (\Sigma_{J}/(\Sigma_{J} \cap \Cal N(\nu))) \ar[l]^{\pi_\phi} 
				}$ \\ \hline
				$\xymatrix{
					L^2(\mu) & L^2(\nu) \ar[l]^{T = L^2(\phi^\ssbullet)}  \\
					(I,\Sigma_{I},\mu) \ar[r]_{\phi^\ssbullet} & (J,\Sigma_{J},\nu)
				}$ & $\xymatrix{
					L^2(\frak B,\bar\mu) & L^2(\frak A,\bar\nu) \ar[l]^{L^2(\pi_\phi)}  \\
					\frak B^f   & \frak A^f \ar[l]^{\pi_\phi|_f} \\
					(\frak B,\bar\mu)   & (\frak A,\bar\nu) \ar[l]^{\pi_\phi} 
				}$ \\ \hline
				$L^0:$ $\mathbf{LocMeas}$ $\rightarrow \mathbf{Riesz}$ & $L^0:$ $\mathbf{compBoolAlg}$ $\rightarrow \mathbf{Riesz}$. \\ \hline
				$L^2:$ $\mathbf{LocMeasure}$ $\rightarrow \mathbf{BanLatt}$ & $L^2:$ $\mathbf{MeasureAlg}$ $\rightarrow \mathbf{BanLatt}$  \\ \hline
			\end{tabular}
		}
		\caption{Duality between Covariant and Contravariant forms of $L^0$ and $L^2$ Summarized.}
		\label{table:duality}
	\end{table}
\end{center}

\subsection{Generalization to partial monic categories derived from $\mathbf{LocMeas}$}

For a reader purely interested in signal representation and redundacy from a category theory viewpoint this section can be omitted. The reason we are including it here is because these extensions were motivated by~\cite{Heunen2013} and~\cite{thesis} earlier when we constructed a visual proof of concept. However later we realized that unified treatment of measure theory in~\cite{fremlinmt1},~\cite{fremlinmt2},~\cite{fremlinmt3},~\cite{fremlinmt4} has studied the functorial aspects of $L^0$ and $L^p$ constructions especially from the categories of Boolean and measure algebras. This made our task easier while at the same time we could connect these works developing a much broader perspective and emphasizing the functorial aspects from the application viewpoint.

The category $\mathbf{LocMeas}$ being equivalent to the opposite of category of commutative von Neumann algebras~\cite{thesis} has desirable properties with all finite transversal limits which makes it possible to consider its monic and partial derivatives. Being in possession of a category with limits such as equalizers, products and therefore pullbacks, one can immediately form the monic and partial derived categories from the original category \cite{cockettlack}.

First we consider the $\mathcal{M}$-category of $\mathbf{LocMeas}$. 
The $\mathcal{M}$-category $(\mathbf{LocMeas},\mathcal{M})$ of (stable) monics in the category $\mathbf{LocMeas}$ consists of
\begin{itemize}
	\item \textbf{objects} the collection $(X,\Sigma_{X},\Cal N(\mu)),(Y,\Sigma_{Y},\Cal N(\nu)),... \in \mbox{Ob}(\mathbf{LocMeas})$
	\item \textbf{morphisms} for a pair $\mathbb{X},\mathbb{Y} \in \mbox{Ob}(\mathbf{LocMeas})$, $\mathbf{LocMeas}(\mathbb{X},\mathbb{Y})=\{f:\mathbb{X}\to \mathbb{Y}\mid f$ is \textit{monic} $\}$.
\end{itemize}

Since the isomorphisms in parent category are monic they get included in this category. Further pullbacks in $\mathbf{LocMeas}$ of monic along general map $f$ exists and is itself a monic (hence an $\mathcal{M}$-morphism).
\begin{equation*}  
\xymatrix@R=0.4in @C=0.8in{
	\mathbb{X} \times_{\mathbb{Y}} \mathbb{Z} \ar@{>->}[r]^{m'} \ar[d]_{f'} & \mathbb{Z} \ar[d]^{f} \\
	\mathbb{X} \ar@{  >->}[r]_{m} & \mathbb{Y}
}
\end{equation*}

A general $\mathcal{M}$-category $(\mathbf{C},\mathcal{M})$ is also sometimes denoted as $(\mathbf{C},Monic)$. An easier example is $(\mathbf{Set},\mathcal{M})$ or $(\mathbf{Set},Monic)$ which is the category of sets with all injections.

Next consider a partial (restriction) category Par$(\mathbf{LocMeas},\mathcal{M})$ of the above $\mathcal{M}$-category which consists of:
\begin{itemize}
	\item \textbf{objects}: a collection $\mathbb{X},\mathbb{Y},\mathbb{Z} ... \in \mbox{Ob}(\mathbf{LocMeas})$
	\item \textbf{morphisms}: pair $(m,f): \mathbb{X} \rightarrow \mathbb{Y}$ (upto equivalence) where $m \in (\mathbf{LocMeas},\mathcal{M})$ and $f \in \mathbf{LocMeas}$:
	$
	\xymatrix{
		& \mathbb{X'} \ar[rd]^f \ar@{>->}[ld]_m \\
		\mathbb{X}  & &  \mathbb{Y}
	}
	$
	\item \textbf{identity}: $(\mathbf{1}_\mathbb{X},\mathbf{1}_\mathbb{X}): \mathbb{X} \rightarrow \mathbb{X}$ 
	\item \textbf{composition}: $(m',g)(m,f)=(mm'',gf')$:
	$
	\xymatrix{
		& & \mathbb{X''} \ar[rd]^{f'} \ar@{>->}[ld]_{m''} \\ 
		& \mathbb{X'} \ar[rd]^f \ar@{>->}[ld]_{m} & & \mathbb{Y'} \ar@{>->}[ld]_{m'} \ar[rd]^g  \\
		\mathbb{X}  & &  \mathbb{Y} & & \mathbb{Z}
	}
	$
	\item \textbf{Restriction}: for the morphism $(m,f): \mathbb{X} \rightarrow \mathbb{Y}$, $\overline{(m,f)}=(m,m): \mathbb{X} \rightarrow \mathbb{X}$
\end{itemize}
with the usual unit and associative laws.

Up-to equivalence means we factor out by the equivalence relation $(\thicksim)$ where $(m,f) \thicksim (m',f')$ if there exists an isomorphism $g : \mathbb{X'} \rightarrow \mathbb{X''}$ such that $m' \circ g  =m$ and $f' \circ g = f$.

Here a basic example to keep in mind is Par$(\mathbf{Set},\mathcal{M})$ which is the category of sets and partial maps. Similarly Par$((\mathbf{Set},\mathcal{M}),\mathcal{M})$ is the category $\mathbf{PInj}$ of sets and partial injections. Further Par$(\mathbf{LocMeas},\mathcal{M})$ is the category of localizable measurable spaces and partial measurable maps. Par$((\mathbf{LocMeas},\mathcal{M}),\mathcal{M})$ is the category of localizable measurable spaces and partial monic measurable maps

Categories of partial maps were given the characterization of restriction structure and categories by \cite{cockettlack}. They are widely seen as a simple equational axiomatization for categories of partial maps. In-fact every restriction category embeds fully and faithfully into a partial map category. Very brief definitions of the restriction and dagger category are recalled in~\ref{app:parcat}. Using these following proposition~\ref{prop4} is immediate for the category studied in this subsection.

\begin{proposition}
	\label{prop4}
	Every inverse category is a dagger category. In particular Par$((\mathbf{LocMeas},\mathcal{M}),\mathcal{M})$ is a dagger category.
\end{proposition}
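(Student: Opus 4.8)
The plan is to prove the general statement first (every inverse category is a dagger category) and then instantiate it. I would begin by recalling the two relevant structures. A \emph{dagger category} is a category $\mathbf{C}$ equipped with an involutive identity-on-objects contravariant functor $\dagger:\mathbf{C}^{op}\to\mathbf{C}$; that is, for each $f:\mathbb{X}\to\mathbb{Y}$ a morphism $f^\dagger:\mathbb{Y}\to\mathbb{X}$ satisfying $\mathbf{1}_\mathbb{X}^\dagger=\mathbf{1}_\mathbb{X}$, $(g\circ f)^\dagger=f^\dagger\circ g^\dagger$, and $(f^\dagger)^\dagger=f$. An \emph{inverse category} is a restriction category in which every morphism $f$ has a (necessarily unique) restriction-inverse, i.e.\ a morphism $f^\circ$ with $f^\circ\circ f=\overline{f}$ and $f\circ f^\circ=\overline{f^\circ}$ (where $\overline{(\ )}$ is the restriction operator). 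These definitions are recalled in the appendix~\ref{app:parcat} referenced in the excerpt, so I may quote them freely.

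First I would define the candidate dagger by setting $f^\dagger := f^\circ$, the restriction-inverse, on morphisms and the identity on objects. The core of the argument is then three verifications. (1) \emph{Identity axiom:} $\mathbf{1}_\mathbb{X}$ is its own restriction-inverse, since $\mathbf{1}_\mathbb{X}\circ\mathbf{1}_\mathbb{X}=\mathbf{1}_\mathbb{X}=\overline{\mathbf{1}_\mathbb{X}}$; uniqueness of restriction-inverses then forces $\mathbf{1}_\mathbb{X}^\circ=\mathbf{1}_\mathbb{X}$. (2) \emph{Involution:} $(f^\circ)^\circ=f$ follows because $f$ satisfies the defining equations of a restriction-inverse for $f^\circ$ (the two equations are symmetric in $f$ and $f^\circ$), and again uniqueness closes the argument. (3) \emph{Contravariant functoriality:} I must show $(g\circ f)^\circ=f^\circ\circ g^\circ$; by uniqueness it suffices to check that $f^\circ\circ g^\circ$ satisfies the two restriction-inverse equations for $g\circ f$, namely $(f^\circ\circ g^\circ)\circ(g\circ f)=\overline{g\circ f}$ and $(g\circ f)\circ(f^\circ\circ g^\circ)=\overline{f^\circ\circ g^\circ}$. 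This is the step that requires the restriction-category axioms in earnest: one uses associativity together with the identities $\overline{f}$ is idempotent, $f\circ\overline{f}=f$, $\overline{g\circ f}=\overline{\overline{g}\circ f}$ (or the equivalent axiom $\overline{f}\circ\overline{g}=\overline{g}\circ\overline{f}$), and the interplay $\overline{g}\circ f=f\circ\overline{g\circ f}$. I expect \textbf{this bookkeeping with the restriction axioms to be the main obstacle} — it is not deep, but it is the one place where a careless manipulation can go wrong, and it is essentially the standard lemma that in an inverse category $(\ )^\circ$ is a contravariant functor.

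Having established the general fact, the particular claim is immediate: by Proposition~\ref{prop4}'s own hypothesis (and the discussion preceding it in the excerpt, which records that the partial-map categories built from $\mathbf{LocMeas}$ are inverse restriction categories because $\mathbf{LocMeas}$ has the requisite finite limits, following~\cite{cockettlack} and~\cite{Heunen2013}), the category $\mathrm{Par}((\mathbf{LocMeas},\mathcal{M}),\mathcal{M})$ of localizable measurable spaces and partial monic measurable maps is an inverse category. Here the restriction-inverse of a partial monic map is just the partial map obtained by reversing it onto its image — this is well-defined precisely because the maps in question are monic, so the reversed relation is again a (partial monic) morphism. Applying the general statement yields that this category carries a dagger, completing the proof. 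I would close by remarking that the dagger here coincides with the evident ``reverse the partial injection'' operation, which is the concrete content one wants for the signal-representation applications.
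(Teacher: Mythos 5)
Your proposal is correct and takes the same route as the paper: define $f^\dagger := f^\circ$ (the unique restriction-inverse) and the identity on objects, then note that the particular category is an inverse category. The paper simply asserts that the dagger axioms follow "immediately" from the definitions, whereas you actually identify and sketch the three verifications (identity, involution, contravariant functoriality of $(\,)^\circ$ via uniqueness of restriction-inverses), so your write-up is, if anything, more complete than the paper's.
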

\begin{proof}
	From the definitions of inverse and dagger categories we immediately deduce that dagger structure is given by  $\dagger(1_X) = {1_X}$ for all objects $X$ and ${f}^{\dagger} = {f}^{\circ}$ for all morphisms $f$ in the inverse category $\mathbf{C}$. It is clear that Par$((\mathbf{LocMeas},\mathcal{M}),\mathcal{M})$ being an inverse category is also dagger, which implies that every partial monic measurable non-singular morphism can be reversed uniquely which we denote as its adjoint morphism.
\end{proof}

Finally we are ready to extend the basic $L^0$ functor to the domain category constructed here.

\subsection{The $L^0$ Functor extended to partial category Par$(\mathbf{LocMeas},\mathcal{M})$}

In this section we briefly touch upon the well-known case of functor $l^2: \mathbf{PInj}^{op} \rightarrow \mathbf{Hilb}$ which was first observed by \cite{Barr}, developed in \cite{Heunen2013} and studied as an example of unique decomposition categories in [Bob Coecke]. This section is not essential purely from the perspective of signal representation however we studied this case and tried generalizing it to measure spaces from sets in reverse fashion. We show how it can be recast as the special case of Definition~\ref{def:par_L0}.        

\begin{definition}
	\label{def:par_L0} 
	Let Par$(\mathbf{LocMeas},\mathcal{M})$ ,$\mathbf{Riesz}$ be the categories. The functor $L^0:$ Par$(\mathbf{LocMeas},\mathcal{M})$ $\rightarrow \mathbf{Riesz}$ is defined as a mapping that associates to each measurable space $(Y,\Sigma_{Y},\Cal N(\nu))$ an object $L^0(Y,\Sigma_{Y},\Cal N(\nu)):=\{g^{\ssbullet} \ | g \in \mathfrak{L}^0(\nu)\}$. The assignment of morphism $f^{\ssbullet} :\mathbb{X} \rightarrow \mathbb{Y} $ or $(\xyline{\mathbb{X} & \mathbb{W} \ar@{ >->}|-{f_1}[l] \ar@{ ->}|-{f_2}[r] & \mathbb{Y}})$ is given by
	\begin{equation}
	L^0(f^{\ssbullet})(g^{\ssbullet}) = (gf)^{\ssbullet} or L^0(f^{\ssbullet})(g^{\ssbullet})(x)= g^{\ssbullet}(f_1(w)) 
	\end{equation}
	where $L^0(f^{\ssbullet}): L^0(\mathbb{Y}) \rightarrow L^0(\mathbb{X})$ is the morphism in $\mathbf{Riesz}$.
\end{definition}

\subsection{The case of linear Borel measurable function $h: (\Bbb R,\Sigma_{\Cal B}) \rightarrow (\Bbb R,\Sigma_{\Cal B})$ }
The propositions~\ref{prop:L0_func_space} and~\ref{prop:L2_func_space} can be used for non-singular and inverse-measure-preserving maps between local measurable spaces with identity map on $(\Bbb R,\Sigma_{\Cal B})$. However for mathematical modeling of transformations involving change in signal amplitudes we need a general Borel measurable function $h: (\Bbb R,\Sigma_{\Cal B}) \rightarrow (\Bbb R,\Sigma_{\Cal B})$. Although this induces in general a function $\bar{h}:L^0(\mu) \rightarrow L^0(\mu)$ it is not necessarily linear transformation. If we need this endo-transformation to be invertible and linear (which becomes an arrow in category $\mathbf{Hilb}$) we restrict to linear $h$ which covers uniform amplitude scaling most frequent in image signals. The next proposition which we extend from~\cite{fremlinmt2} makes this precise. 

\begin{proposition}
	\label{prop:lin_bor_fun}
	If $(X,\Sigma,\mu)$ is a measure space and $h:\Bbb R\to\Bbb R$ is a Borel
	measurable function. Then $hf\in\mathfrak{L^0}=\mathfrak{L^0}(\mu)$ for
	every $f\in\mathfrak{L^0}$ and $hf\eae hg$ whenever
	$f\eae g$. Define a function $\bar h:L^0\to L^0$ by
	setting $\bar h(f^{\ssbullet})=(hf)^{\ssbullet}$ for every $f\in\mathfrak{L^0}$;
	then $\bar{h}$ becomes a linear operator when $h$ is a linear function.
\end{proposition}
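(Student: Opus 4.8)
The plan is to verify the claim in three stages: first that $hf$ stays in $\mathfrak{L^0}(\mu)$, second that the assignment $f^{\ssbullet}\mapsto(hf)^{\ssbullet}$ is well-defined on equivalence classes, and third that $\bar h$ is linear precisely when $h$ is. The first two stages are essentially a repackaging of the composition-of-measurable-functions machinery already used in the proof of Proposition~\ref{prop:L0_func_space}(a)--(b); the third stage is where the hypothesis ``$h$ linear'' is actually consumed.

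For the first stage, I would take $f\in\mathfrak{L^0}(\mu)$, so $f\restr C$ is $\Sigma_X$-measurable for some conegligible $C\subseteq X$. Since $h:\Bbb R\to\Bbb R$ is Borel measurable, the composite $h\circ(f\restr C):C\to\Bbb R$ is $\Sigma_X$-measurable: for $a\in\Bbb R$, $\{x\in C: h(f(x))\le a\}=(f\restr C)^{-1}[h^{-1}[(-\infty,a]]]$, and $h^{-1}[(-\infty,a]]$ is Borel, hence the preimage lies in the subspace $\sigma$-algebra $\Sigma_C$. Thus $hf$ agrees on a conegligible set with a measurable function, so $hf\in\mathfrak{L^0}(\mu)$. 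For the second stage, if $f\eae g$ then $D=\{x:x\in\dom f\cap\dom g,\ f(x)=g(x)\}$ is conegligible, and on $D$ we have $hf(x)=h(f(x))=h(g(x))=hg(x)$; hence $hf\eae hg$, so $\bar h(f^{\ssbullet}):=(hf)^{\ssbullet}$ is well-defined as a map $L^0(\mu)\to L^0(\mu)$.

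For the third stage, suppose $h(t)=ct$ for some $c\in\Bbb R$ (the linear functions $\Bbb R\to\Bbb R$). For $f,g\in\mathfrak{L^0}$ and $\lambda\in\Bbb R$ we have, pointwise wherever defined, $h(f+\lambda g)(x)=c\,(f(x)+\lambda g(x))=c\,f(x)+\lambda\,c\,g(x)=(hf)(x)+\lambda(hg)(x)$, so $h(f+\lambda g)\eae hf+\lambda\,hg$; passing to equivalence classes and using that addition and scalar multiplication in $L^0(\mu)$ are computed representative-wise (Proposition~\ref{prop:properties_coneg_func}), $\bar h(f^{\ssbullet}+\lambda g^{\ssbullet})=\bar h(f^{\ssbullet})+\lambda\,\bar h(g^{\ssbullet})$. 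Hence $\bar h$ is a linear operator.

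The only genuine subtlety — and the step I would flag as the main obstacle — is keeping the domains straight: elements of $\mathfrak{L^0}(\mu)$ are only defined on conegligible subsets, and are only \emph{virtually} measurable, so one must consistently restrict to a conegligible set on which the representatives are genuinely measurable (and on which two $\eae$-equal representatives coincide) before invoking Borel measurability of $h$ and the composition lemma~\ref{thm:composite_meas_func}. Once that bookkeeping is handled exactly as in Proposition~\ref{prop:L0_func_space}, the linearity computation is routine. It is also worth remarking, though not required by the statement, that linearity of $h$ is \emph{necessary}: if $\bar h$ is linear then applying it to constant functions forces $h(s+t)=h(s)+h(t)$ and $h(\lambda t)=\lambda h(t)$, and together with Borel measurability (hence measurability, hence continuity of an additive function) this yields $h(t)=ct$.
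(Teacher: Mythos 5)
Your proposal is correct and follows essentially the same route as the paper's proof: establish $hf\in\mathfrak{L^0}$ via composition with the Borel measurable $h$ (Theorem~\ref{thm:prop_meas_func}(f)), check well-definedness on $\eae$-classes, and then use pointwise additivity and homogeneity of the linear $h$ together with the representative-wise operations of Proposition~\ref{prop:properties_coneg_func}. Your well-definedness step is in fact slightly more economical than the paper's (you observe directly that $hf$ and $hg$ agree on the conegligible set where $f$ and $g$ agree, rather than re-deriving measurability of $h(f\restr C)$), but this is a streamlining of the same argument rather than a different approach.
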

\begin{proof}
	From Theorem~\ref{thm:prop_meas_func} parts (f) and (g) and Proposition~\ref{prop:properties_coneg_func} 
	it follows that $hf\in\mathfrak{L^0}=\mathfrak{L^0}(\mu)$ for every $f\in\mathfrak{L^0}$.
	
	For simplicity we assume that $f:X\to\Bbb R$ is $\Sigma_X$-measurable
	while $h:\Bbb R\to\Bbb R$ is $\Sigma_{\Cal B}$-measurable or more precisely
	$\dom h=\Bbb R$ and $\dom f=X$
	$f\eae g$, then $C=\{x:x\in\dom g,\,f(x)=g(x)\}$ is conegligible and
	$f\restr C=g\restr C$ is measurable. Now if $a\in\Bbb R$, then $\{y:h(y)<a\} = E$, where $E$ is a Borel subset of $\Bbb R$ and
	$(f\restr C)^{-1}[E]$ is of the form $F\cap\dom (f\restr C)$, where $F\in\Sigma_X$, then $\{x:(h(f\restr C))(x)<a\}=\{x:(h(g\restr C))(x)<a\}=F\cap C\in\Sigma_{C}$. As $a$ is arbitrary, $h(f\restr C)=h(g\restr C)$ is $\Sigma_{X}$-measurable. But $hf : X \rightarrow \Bbb R$ is $\Sigma_X$-measurable and extends $h(f\restr C) = hf \restr C$ on a conegligible $C \subseteq X$. Consequently $hf \eae h(f\restr C) \eae h(g\restr C)$ and $(hf)^{\ssbullet} = (hg)^{\ssbullet}$ in $L^0(\mu)$. Indeed we have well-defined $\bar{h}:L^0(\mu)\to L^0(\mu)$ where $\bar{h}f^{\ssbullet}=(hf)^{\ssbullet}$ for every $f \in \mathfrak{L}^0(\mu)$.
	
	Next By Theorem~\ref{thm:prop_meas_func} part (b), $f + f'$ is $\Sigma_X$-measurable,  where $(f + f')(x) = g(x) + g'(x) $ for $x \in X$. Therefore there exists an $F\in\Sigma_X$ such that $\{x:(f+f')(x)< a\}=\{x:f(x) + f'(x)< a\}=F$. But there exists a Borel $B = (\infty,a'] \in\Sigma_{\Cal B}$ such that $h^{-1}[B]=E=(\infty,a]$ and $f^{-1}[E]=F \in\Sigma_{X}$. Hence $\{x:[h(f+f')](x)\le a'\}=\{x:h(f(x)+f'(x))\le a'\}= F\in\Sigma_X$. Since $a'$ is arbitrary, $hf + hf' = h(f + f')$ is $\Sigma_X$-measurable. Now from Proposition~\ref{prop:properties_coneg_func} part (b) $(f + f')^{\ssbullet} = f^{\ssbullet} + f'^{\ssbullet}$ in $L^0(\mu)$ and using the fact $h(y+y')=hy+hy'$ for a linear $h$ we have $(h(f + f'))^{\ssbullet}= (hf + hf')^{\ssbullet} = (hf)^{\ssbullet} + (hf')^{\ssbullet} $ in $L^0(\mu)$. Hence $\bar{h}(f + f')^{\ssbullet}=(h(f + f'))^{\ssbullet}=(hf)^{\ssbullet} + (hf')^{\ssbullet} = \bar{h}f^{\ssbullet} + \bar{h}f'^{\ssbullet} $. Similarly using Theorem~\ref{thm:prop_meas_func} and Proposition~\ref{prop:properties_coneg_func} part (c) it follows in same manner for any scalar $c \in \Bbb R$, that $\bar{h}(c.f)^{\ssbullet}=(h(c.f))^{\ssbullet}= c.(hf)^{\ssbullet} = c. \bar{h}f^{\ssbullet} $. Hence $\bar{h}$ is a linear operator.	
\end{proof}

\section{Signal as functor}
\label{sigfunc}
In Section~\ref{sec:motiv} we summarized in Table~\ref{table:summary_sig_rep} the natural differences which are our prime motivations for proposing the concept of functorial signal representation. In this section we discuss these differences in detail using Figure~\ref{fig:prototype_thesis_1} and Equation~\ref{eq:sig_func}. Note that one can formulate all these differences alternatively using Equation~\ref{eq:sig_func_cat_1} instead of Equation~\ref{eq:sig_func} by substituting $F(G_1) = G_1l, F(G_2) = G_2l, F(a) = (a_M,a_o) = (h,\phi)$ using the fact that $\mathbf{Meas}^{2}$ is same as $\mathbf{Meas}^{\rightarrow}$ appropriately as discussed in Section~\ref{meas2}.

\begin{figure}
	\resizebox*{\textwidth}{!}{
		\begin{tikzpicture}
		\draw[domain=-1.57:1.57,samples=100] plot(\x,{sin(2*\x r)});
		%\draw[blue, smooth,domain=2.5:8] plot (\x,rand);
		\draw[domain=7.85:11,samples=100] plot(\x,{sin(2*\x r)});
		\draw plot [smooth] coordinates {(1.57,0) (3,-0.5) (4,0.75) (5,1) (6,-0.4) (7,1) (7.85,0)};
		\draw[<->] (-4.5,0) -- (13,0) node [pos=1,below] {$\mathbb{R}$};
		\draw plot [smooth] coordinates {(-1.57,0) (-2.5,1) (-2.8,0) (-3.5,-0.5) (-3.7,-1) (-4,0)};
		\draw plot [smooth] coordinates {(11,0) (11.5,-0.5) (12,1) (12.5,0.5)};
		\draw[->] (-3.5,-1) --(-3.5,2.0) node [pos=1,left] {$\mathbb{R}$};
		\draw[-,dashed] (-1.57,1.5) -- (-1.57,-1.5);
		\draw[-,dashed] (1.57,1.5) -- (1.57,-1.5);
		
		\draw[-,dashed] (7.85,1.5) -- (7.85,-1.5);
		\draw[-,dashed] (11,1.5) -- (11,-1.5);
		\draw[<->] (-1.57,-1.3) --(1.57,-1.3) node[midway,fill=white] {$I$};
		\draw[<->] (7.85,-1.3) --(11,-1.3) node[midway,fill=white] {$J$};
		\node at (0,1.3) [draw=none] {$f\restr I = FG_1$};
		\node at (9.4,1.3) [draw=none] {$f\restr J = FG_2$};
		\node at (-2.5,0) [draw=none,above left] {$(0,0)$};
		\node at (12.5,0) [draw=none,above right] {$time$};
		\node at (-3.5,1.5) [draw=none,above right] {$amplitude$};
		\end{tikzpicture}
	}
	\caption{In complete time signal $f$ (global on its entire duration), the restrictions $f\restr I$ and $f\restr J$ (or local sub-signals) on half-open intervals $I$, $J$ are generated by objects $G_1$, $G_2$ of base category isomorphic to each other through arrow $a$.}
	\label{fig:prototype_thesis_1}
\end{figure}
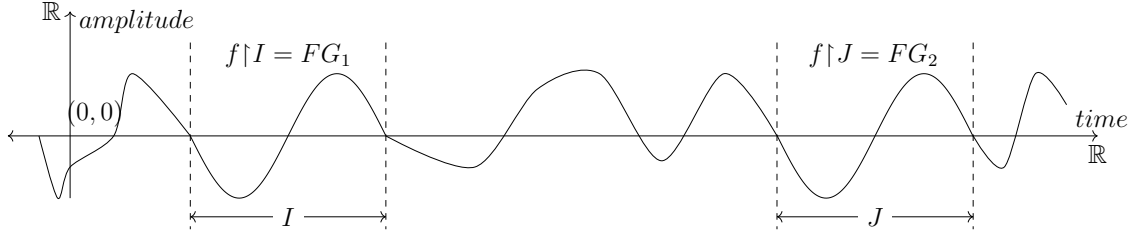

\begin{equation}
\label{eq:sig_func}
\xymatrix{
	G_1 \ar[r]^{a} & G_2
} 
\quad \\
\xrightarrow{F}
\quad
\xymatrix{
	(\Bbb R,\Sigma_{\Cal B}) \ar[r]^{h} & (\Bbb R,\Sigma_{\Cal B})  \\
	(I,\Sigma_{I}) \ar[r]_{\phi} \ar@{->}[u]^{f\restr I = F(G_1)} & (J,\Sigma_{J}) \ar@{->}[u]_{F(G_2) \cong F(a)[F(G_1)]}
}
%\quad
%\xymatrix{
%	(\Sigma_{\Cal B},\symmdiff,\cap) \ar@{->}[d]_{\phi^{-1} \circ g^{-1}} & (\Sigma_{\Cal B},\symmdiff,\cap) \ar[l]^{id} \ar@{.>}[ld]^{\phi^{-1} \circ g^{-1}} \ar@{->}[d]^{g^{-1} (E \mapsto g^{-1}[E])} \\
%	(\Sigma_{I},\symmdiff,\cap)   & (\Sigma_{J},\symmdiff,\cap) \ar[l]^{\phi^{-1}} 
%}
\end{equation}
\begin{equation}
\label{eq:sig_func_cat_1}
\xymatrixcolsep{1.5cm}
\xymatrix{
	\mathbf{2}
	\ar@/^2pc/[rr]_{\quad}^{G_1}="1"
	\ar@/_2pc/[rr]_{G_2}="2"
	&& \mathbf{Meas}
	\ar@{}"1";"2"|(.2){\,}="7"
	\ar@{}"1";"2"|(.8){\,}="8"
	\ar@{=>}"7" ;"8"^{a} } \quad 
%\xymatrix{
%	\mathbf{2} \ruppertwocell^{R}{\alpha}
%	\rlowertwocell_{T}{\beta}
%%	\ar[r]|{S}
%	&\mathbf{Meas}\\
%}
\quad
\xymatrix{
	G_1M=(\Bbb R,\Sigma_{\Cal B}) \ar@{<->}[r]^{a_{M}= h} 	& Sc_2=(\Bbb R,\Sigma_{\Cal B}) \\
	G_1O=(I,\Sigma_{I})	\ar@{<->}[r]_{a_{O}=\phi}	\ar@{->}[u]^{G_1l=f\restr I} & Sc_1=(J,\Sigma_{J}) 	\ar@{->}[u]^{G_2l \cong (h,\phi)f\restr I} \\
}
\end{equation}

\subsection{Underlying Intuition}

The classic model represents a signal as some real $\mathbb{R}$ or complex $\mathbb{C}$-valued function\footnote{function $f:\mathcal{D} \rightarrow \mathcal{R}$ is a mapping from one set of values the domain $\mathcal{D}$, to another set the codomain $\mathcal{R}$, assigning each member $x \in \mathcal{D}$, the value $f(x) \in \mathcal{R}$.} of time or space denoted as $f:\Bbb R \rightarrow \Bbb R$. Depending on the mathematical structures such as measure, topology, group on the domain $\Bbb R$, tools from measure theory, functional or harmonic analysis are utilized in analyzing the signal. The proposed model represents same signal appropriately as a functor\footnote{functor is a mapping from $\mathbf{C}$ the domain category, to $\mathbf{D}$ the codomain category, assigning each object $O \in \mathbf{C}$, an object $F(O) \in \mathbf{D}$, and each arrow $a:O \rightarrow M \in \mathbf{C}$, an arrow $F(a):F(O) \rightarrow F(M) \in \mathbf{D}$, preserving the composition of arrows $F(a \cdot b) = F(a) \circ F(b)$ and the identity arrows $F(id_O) = id_{F(O)}$.} $F:\mathbf{C} \to \mathbf{D}$ from a base category to a codomain category additionally bringing the powerful relative perspective and other tools from category theory. The base category mathematically models the natural generative structure~\cite{Leyton01} of a signal. Its objects correspond to true generators while the arrows capture relationship between them. The codomain category is at times arrow category of another category. This is often the case since signals classically considered as functions are most appropriately modeled as arrows of a category. The study of functor as a structure-preserving-morphism and trivial categorification can be found in~\cite{salilp1}. The base category interpreted as category of generators and transfers serves to capture intuition of Leyton's generative theory as in~\cite{Leyton86a},~\cite{Leyton01} and will often be a groupoid. 

\begin{center}
	\begin{table}[ht]
		\centering
		%\vspace{1ex}
		\resizebox{\linewidth}{!}{
			\begin{tabular}{|c|c|}\hline
				{\bf Classical function based model} & {\bf Proposed functorial model} \\ \hline
				\makecell{Signal as real (or complex) valued entity varying \\in time or space domains modeled mathematically 
					\\ as a real (or complex) valued function of time or space.} & 
				\makecell{Signal as a functor from a category (modeling\\generative structure)to a category (modeling observed waveforms)}\\ \hline
			\end{tabular}
		}
		\caption{Difference in underlying intuitions between conventional and proposed signal representation model}
		\label{table:diff1_sig_rep}
	\end{table}
\end{center}

\subsection{Case of basic measurable structure}

Considering a measurable structure on the domain $\Bbb R$ while taking the signal to be real-valued the classical model is precisely a Lebesgue\footnote{$\hat\Sigma_{\Cal B}$ is $\sigma$-algebra of Lebesgue measurable sets, the completion of $\Sigma_{\Cal B}$ the $\sigma$-algebra of Borel subsets of $\Bbb R$.}-measurable function\footnote{A ($\Sigma_{X}$-) measurable function $f:(X,\Sigma_{X}) \to (Y,\Sigma_{Y})$ is a function $f: X \to Y$ such that $\{f^{-1}[F]:F\in\Sigma_{Y}\} \subseteq \Sigma_X$.} $f:(\Bbb R,\hat\Sigma_{\Cal B}) \to (\Bbb R,\Sigma_{\Cal B})$. The proposed model represents same signal as a functor $F:\mathbf{C} \to \mathbf{Meas}^{\rightarrow}$\footnote{{$\mathbf{Meas}$} is the category of measurable spaces $(X,\Sigma_{X})$ and measurable functions $f:(X,\Sigma_{X}) \to (Y,\Sigma_{Y})$. $\mathbf{Meas}^{\rightarrow}$ is the arrow category of $\mathbf{Meas}$ with measurable functions $f:(X,\Sigma_{X}) \to (Y,\Sigma_{Y})$ as objects and pairs of measurable functions $(h,\phi): f \to g$ as arrows such that $h \circ f = g \circ \phi$.} denoted simply as the image subcategory $F\mathbf{C}$. The work in~\cite{robinson1} is the study of signal sheaf $F: \mathfrak{Face}_X \rightarrow \mathbf{Vect}$ using face category of simplical complex $X$ based on presheaf model of a signal $F:\mathfrak{Opn}_X \rightarrow \mathbf{Vect}$. These are based on topological structure of signal domain appropriately modeled as base category. In our proposed model $F:\mathbf{C} \to \mathbf{D}$, the base category $\mathbf{C}$ is expected to capture the natural generative structure of signal to be represented; the intuition of which is based on work in~\cite{Leyton86a},~\cite{Leyton01}. The major difference in our approach is that the mathematical structure to be considered on time or space domain is a groupoid determined by generative category $\mathbf{C}$ specific to the signal and depending on choice of $\mathbf{D}$ can vary over multiple structures such as measure(which will be explored here choosing $\mathbf{D} = \mathbf{Meas}^{\rightarrow}$), topology and graph which are present in specific classes of signals. Considering a signal generally as $F\mathbf{C}$, leads us towards general arrow-based redundancy, Grothendieck's relative viewpoint in signal analysis and understanding how compression basically occurs.

\begin{center}
	\begin{table}[ht]
		\centering
		%\vspace{1ex}
		%			\resizebox{\linewidth}{!}{
		\begin{tabular}{|c|c|}\hline
			{\bf Classical function based model} & {\bf Proposed functorial model} \\ \hline
			\makecell{simple 1D time signal as measurable\\ function $f:(\Bbb R,\hat\Sigma_{\Cal B}) \to (\Bbb R,\Sigma_{\Cal B})$} & 
			\makecell{Simple 1D time signal as $F\mathbf{C}$\\ using a functor $F:\mathbf{C} \to \mathbf{Meas}^{\rightarrow}$}\\ \hline
		\end{tabular}
		%			}
		\caption{Difference in mathematical expressions using most basic measurable structure}
		\label{table:diff2_sig_rep}
	\end{table}
\end{center}

\subsection{Trivial arrows versus Non-trivial arrows}

The classic model $f:(\Bbb R,\hat\Sigma_{\Cal B}) \to (\Bbb R,\Sigma_{\Cal B})$ can be considered as a special case of proposed model $F:\mathbf{1}$\footnote{$\mathbf{1}$ is the trivial category with a single (trivial) object $1$ and just identity (trivial) arrow $id_1$.} $\to \mathbf{Meas}^{\rightarrow}$ (or $F:\mathbf{C} \to \mathbf{Meas}^{\rightarrow}$ with $\mathbf{C}$ as a discrete category) where $F(\mathbf{1})$ is a category with object $F(1) = f: (\Bbb R,\hat\Sigma_{\Cal B}) \to (\Bbb R,\Sigma_{\Cal B})$ along with identity $F(id_1) = id_f$ (or discrete image subcategory consisting of objects $f\restr I,f\restr J,...$). Thus the classic model from the viewpoint of proposed model is essentially a single object $f$ of $\mathbf{Meas}^{\rightarrow}$ with trivial identity or a discrete subcategory of $\mathbf{Meas}^{\rightarrow}$. The proposed model is then viewed as generalization to arbitrary base category. If we denote objects of $\mathbf{C}$ as $G_1, G_2,...$ and the arrows as $a, a', a'',...$. Then signal is a collection or family of objects $F(G_1) = f\restr I$\footnote{$f\restr I$ is the restriction ($f|_{I}$) of function $f$ to domain $I$.} $,F(G_2) = f\restr J,...$ and the arrows $F(a) = (h,\phi), F(b) = (h',\phi'),...$. The multiple non-trivial arrows across objects in proposed model truly bring the relative perspective of whole category theory in signal analysis. In arrow category, $(h,\phi) : f \to g$ where $h \circ f = g \circ \phi$; the pair $(h,\phi)$ is not unique but general and in particular specifying $f \to g$ makes it unique in that context. Hence we can perfectly have $(h,\phi) : f' \to g'$ where $h \circ f' = g' \circ \phi$ and therefore it is always understood with particular domain and codomain.

\begin{center}
	\begin{table}[ht]
		\centering
		%\vspace{1ex}
		\resizebox{\linewidth}{!}{
			\begin{tabular}{|c|c|}\hline
				{\bf Classical function based model} & {\bf Proposed functorial model} \\ \hline
				\makecell{Signal $f:(\Bbb R,\hat\Sigma_{\Cal B}) \to (\Bbb R,\Sigma_{\Cal B})$ viewed as a special case\\$F:\mathbf{C} \to \mathbf{Meas}^{\rightarrow}$; $\mathbf{C}$ being some discrete category.} & 
				\makecell{Signal as a family of objects $F(G_1) = f\restr I, F(G_2) =f\restr J,...$\\ along with non-trivial arrows $F(a_1),F(a_2),...$}\\ \hline
			\end{tabular}
		}
		\caption{Difference of trivial arrows versus additional non-trivial arrows}
		\label{table:diff3_sig_rep}
	\end{table}
\end{center}

\subsection{Collection of independent elements versus arrow based generalized elements and differentials.}

Now referring Figure~\ref{fig:visual_depiction_signal}, in set-theoretic measure theory the signal $f \in \mathfrak{L^0}_{\Bbb R}$\footnote{$\mathfrak{L^0}_{X}$ is the Riesz space of all measurable real-valued functions $f$ defined on $X$. $\mathfrak{L^0}_{X} \subseteq \mathfrak{L}^0(X,\Sigma_{X}) = \mathfrak{L}^0= \mathfrak{L}^0(\mu)$ where $\mathfrak{L}^0(\mu)$ is the set of virtually measurable real-valued functions $f$ defined on conegligible subsets of $X$.} is simply an element of a global space $\mathfrak{L^0}_{\Bbb R}$ canonically isomorphic to $\prod_{i\in (I,J,..)}\mathfrak{L^0}_{i}$ via $f\mapsto(f\restr I,f\restr J,...)$. Thus the collection $(f\restr I,f\restr J,...)$ in classic case has local components such as $f\restr I$ which are always linearly independent. This independence in classic model does not authentically reflect the natural generative relationship between these components which make them interdependent typical to sources with memory from information theory viewpoint. The proposed model precisely utilizes the arrows and fundamental concept of generalized elements to represent the components differentially. Referring Equation~\ref{eq:prac_sig_func}, $F(G_1)=f\restr I$, $F(G_2)=f\restr J$ and $F(a)=(h,\phi)$ this implies $f\restr J$ is related to $f\restr I$ via arrow $F(a)$ therefore we express $f\restr J = f\restr I$-valued point of $f\restr J + \Delta_J$ where $\Delta_J$ is differential error in actual measured signal. Thus signal becomes a collection $(f\restr I,F(a)f\restr I + \Delta_J,...)$ where generalized elements\footnote{A generalized element of N (also called M-valued point of N) is just an arrow $a : M \rightarrow N$ in a category. Hence an element of a set $S$ in set-theory is simply the arrow $e : T \rightarrow S$ in the category $\mathbf{Set}$ where $T$ is a terminal object.} and $\Delta$s together faithfully model the interdependencies between these components. Differential $\Delta_J = F(G_2) - F(a)F(G_1)$ in Riesz space $\mathfrak{L^0}_{J}$ roughly indicates the linear deviation of $G_2$ from $G_1$ and is relatively small when arrow $a$ is (total) isomorphism. Note that vector addition in $F(a)F(G_1) + \Delta_J$ is not a coproduct operation in $\mathbf{Meas}^{\rightarrow}$ but an operation in $\mathfrak{L^0}_{J}$ putting certain limitations on faithfulness of functor $F$ which we discuss in Section~\ref*{limitation}. Note that $f\restr J:(J,\Sigma_{J}) \to (\Bbb R,\Sigma_{\Cal B})$ is a measurable function therefore an object of $\mathbf{Meas}^{\rightarrow}$. But it is also a trivial category and therefore by considering additional property of $\Bbb R$ being a field, it is can be point-wise added to and multiplied by any other measurable function $g\restr J:(J,\Sigma_{J}) \to (\Bbb R,\Sigma_{\Cal B})$. In other words, $f\restr J$ is simultaneously an element of Riesz space $\mathfrak{L^0}_{J}$ and an object of $\mathbf{Meas}^{\rightarrow}$ which recognizes only measurable structure on $\Bbb R$. This is a novel concept of using set-theory alongside category-theory simultaneously for signal representation application.

\begin{equation}
\label{eq:prac_sig_func}
\xymatrix{
	G_1 \ar[r]^{a} & G_2
} 
\quad 
\xrightarrow{F}
\quad
\xymatrix{
	& (\Bbb R,\Sigma_{\Cal B}) \\ 
	(\Bbb R,\Sigma_{\Cal B}) \ar[r]^{h} &   \ar@{->}[u]_{\Delta_J}\\
	(I,\Sigma_{I}) \ar[r]_{\phi} \ar@{->}[u]^{f\restr I = F(G_1)} & (J,\Sigma_{J}) \ar@{->}[u]_{F(a)[F(G_1)]}
}
\end{equation}      

\begin{center}
	\begin{table}[ht]
		\centering
		%\vspace{1ex}
		\resizebox{\linewidth}{!}{
			\begin{tabular}{|c|c|}\hline
				{\bf Classical function based model} & {\bf Proposed functorial model} \\ \hline
				\makecell{Space $\mathfrak{L^0}_{\Bbb R}(\mu)$ canonically isomorphic\\to $\prod_{i\in (I,J,..)}\mathfrak{L^0}_{i}(\mu_i)$ via $f\mapsto(f\restr I,f\restr J,...)$} & 
				\makecell{$f\restr J = Fa(FG_1) = f\restr I$-valued point of $f\restr J + \Delta_J$ \\$(f\restr I,f\restr J,...)$ as family of generalized elements and $\Delta$s}\\ \hline
			\end{tabular}
		}
		\caption{Difference of independent elements versus arrow based generalized elements and differentials.}
		\label{table:diff4_sig_rep}
	\end{table}
\end{center}

\subsection{Signal space as a signal matched category}
\label{sec:sig_matched_cat_1}

Using additional properties of equivalence relation $\eae$, Lebesgue measure $(\mu)$ and $|f|^2$ being integrable, the classic signal $f$ is usually not differentiated from almost equal everywhere measurable functions on $\Bbb R$. Thus $f^{\ssbullet} \in L^0(\Bbb R,\hat\Sigma_{\Cal B},\Cal N(\mu))$\footnote{${L^0}(X,\Sigma_{X},\mu) = {L}^0={L}^0(\mu)={L^0}(X,\Sigma_{X},\Cal N(\mu))$ is set of equivalence classes $f^{\ssbullet}$ in $\mathfrak{L}^0(\mu)$ or $\mathfrak{L}^0_X$  under $\eae$} and using measure along with property $|f|^2$ is integrable, $f^{\ssbullet} \in L^2(\Bbb R,\hat\Sigma_{\Cal B},\mu)$\footnote{${L^p}(X,\Sigma_{X},\mu) = {L}^p={L}^p(\mu)$, $p\in(1,\infty)$ is set of functions $\{f^{\ssbullet}:f\in\mathfrak{L}^p\}\subseteq L^0=L^0(\mu)$ in $\mathfrak{L}^0(\mu)$ under $\eae$}. Thus classic signal $f^{\ssbullet}$ is a vector in global linear spaces $L^0,L^2$. Similarly using properties of equivalence relation $\eae$ and subspace measure on $(I,\Sigma_I)$ we have $(f\restr I)^{\ssbullet} \in L^0(I,\Sigma_I,\Cal N(\mu_I))$, $(f\restr I)^{\ssbullet} \in L^2(I,\Sigma_{I},\mu_I)$ while in certain cases (see Propositions~\ref{prop:L0_func_space},~\ref{prop:L2_func_space},~\ref{prop:lin_bor_fun} for such cases) arrows $(h,\phi),(h',\phi'),...$ uniquely define linear operators $(f\restr I)^{\ssbullet}\mapsto (h(f\restr I)\phi^{-1})^{\ssbullet}$ from $L^2(I,\Sigma_{I},\mu_I)$ to $L^2(J,\Sigma_{J},\mu_J)$. Then in proposed model signal is thought of lying inside a {\bf signal matched category} whose objects are $L^0(I,\Sigma_I,\Cal N(\mu_I)),L^0(J,\Sigma_J,\Cal N(\mu_J))$$,...$ or $L^2(I,\Sigma_{I},\mu_I),L^2(J,\Sigma_{J},\mu_J),...$ and non-trivial morphisms are operators defined by $(h,\phi),(h',\phi')$. Set-theoretically the signal is a family of elements (vectors) comprising one element from each object of the signal matched category. Further these elements are related by arrows which are nothing but restrictions of the operators within category to individual elements leading to representation with generalized elements and differentials.

\begin{center}
	\begin{table}[ht]
		\centering
		%\vspace{1ex}
		\resizebox{\linewidth}{!}{
			\begin{tabular}{|c|c|}\hline
				{\bf Classical function based model} & {\bf Proposed functorial model} \\ \hline
				\makecell{Using $\eae$, measure, $|f|^2$ integrable\\$f^{\ssbullet} \in L^0(\Bbb R,\hat\Sigma_{\Cal B},\Cal N(\mu))$, $f^{\ssbullet} \in L^2(\Bbb R,\hat\Sigma_{\Cal B},\mu)$} & 
				\makecell{Additional properties of null-ideal,measure on $(I,\Sigma_I)$\\permits $(f\restr I)^{\ssbullet} \in L^0(I,\Sigma_I,\Cal N(\mu_I))$, $(f\restr I)^{\ssbullet} \in L^2(I,\Sigma_{I},\mu_I)$}\\ \hline
				\makecell{Signal space is generically fixed \\as $L^0(\Bbb R,\hat\Sigma_{\Cal B},\Cal N(\mu))$ or $L^2(\Bbb R,\hat\Sigma_{\Cal B},\mu)$.} & 
				\makecell{When arrows $(h,\phi),(h',\phi'),...$ uniquely define operators \\ the resulting subcategory in $\mathbf{Hilb}$ or $\mathbf{Riesz}$ is signal matched.} \\ \hline
			\end{tabular}
		}
		\caption{Difference of generic function space versus subcategory in $\mathbf{Hilb}$ or $\mathbf{Riesz}$.}
		\label{table:diff5_sig_rep}
	\end{table}
\end{center}

\subsection{Invoking functorial $L^2$ construction for special class of inverse-measure preserving maps on measure spaces}
\label{sec:sig_matched_cat_2}

In special class of signals where the relationships between local sub-signals are completely captured by the inverse-measure-preserving maps\footnote{A function $\phi:(X,\Sigma_X,\mu)\to (Y,\Sigma_Y,\nu)$ is {\bf inverse-measure-preserving} if $\phi^{-1}[F]\in\Sigma_X$ and $\mu(\phi^{-1}[F])=\nu F$ for every
	$F\in\Sigma_Y$.} denoted by $\phi, \phi',...$, the functorial structure of $L^2$ construction, $L^2:\mathbf{LocMeasure}$\footnote{{$\mathbf{LocMeasure}$} is the category of localizable measure spaces $(X,\Sigma_{X},\mu)$ and $f^\ssbullet$ a.e. equivalence classes of inverse-measure-preserving maps $f:X\to Y$} $\rightarrow \mathbf{Riesz}$\footnote{$\mathbf{Riesz}$ is the category of Riesz spaces and Riesz homomorphisms.} or $L^2:\mathbf{LocMeasure} \rightarrow \mathbf{Hilb}$\footnote{$\mathbf{Hilb}$ is the category of Hilbert spaces and continuous(bounded) linear maps} can be invoked leading to {\bf signal matched category} as $L^2(\mathbf{C})$ where $\mathbf{C} \subseteq \mathbf{LocMeasure}$. This captures generative structure where generators are related by common translations of signal domain. Ofcourse, the classic signal space is fixed global space $L^2(\Bbb R,\hat\Sigma_{\Cal B},\mu)$, a single object in $\mathbf{Riesz}$ or $\mathbf{Hilb}$. See real-world examples in Section~\ref{sec:redundacy_examples}.

\begin{center}
	\begin{table}[ht]
		\centering
		%\vspace{1ex}
		\resizebox{\linewidth}{!}{
			\begin{tabular}{|c|c|}\hline
				{\bf Classical function based model} & {\bf Proposed functorial model} \\ \hline
				\makecell{Signal space fixed $L^2(\Bbb R,\hat\Sigma_{\Cal B},\mu)$\\where $L^2:\mathbf{LocMeasure} \rightarrow \mathbf{Riesz}$} & 
				\makecell{When $\mathbf{C} \subseteq \mathbf{LocMeasure}$ for affine transformations\\of time,then signal matched subcategory is $L^2(\mathbf{C})$} \\ \hline
			\end{tabular}
		}
		\caption{Difference of generic function space versus subcategory in $\mathbf{Hilb}$ or $\mathbf{Riesz}$.}
		\label{table:diff6_sig_rep}
	\end{table}
\end{center}
\subsection{Change of base and Grothendieck's relative point of view}	

By using the opposite category of (complete) Boolean Algebras, it is possible to think of local sub-signals as determined by pullbacks which are unique upto isomorphism, in cases where Boolean homomorphism $\phi^{-1}$ determined by $\phi$ on measurable spaces is an isomorphism. In such cases the pullback object $\Sigma_{\Cal B} \times_{\Sigma_{I}} \Sigma_{J}$ is isomorphic to $\Sigma_{\Cal B}$ via some isomorphism $h^{-1}$ as opposite arrow of some Borel measurable function $h :(\Bbb R,\Sigma_{\Cal B}) \rightarrow (\Bbb R,\Sigma_{\Cal B})$. Thus the opposite arrow $g^{-1}$ of observed measurable function $g$ is pulled back map of $\phi^{-1} \circ g^{-1}$ determined upto isomorphism using $g^{-1} = h^{-1} \circ ((\phi^{-1} \circ g^{-1})*)$. The change of base from $(\Sigma_{I},\symmdiff,\cap)$ to $(\Sigma_{J},\symmdiff,\cap)$ along $\phi^{-1}$ leads to a relative perspective namely that $(\Sigma_{\Cal B},\symmdiff,\cap)$  fibred on $(\Sigma_{I},\symmdiff,\cap)$ via $\phi^{-1} \circ g^{-1}$ produces an object $\Sigma_{\Cal B} \times_{\Sigma_{I}} \Sigma_{J}$ over $(\Sigma_{J},\symmdiff,\cap)$ which is isomorphic to $(\Sigma_{\Cal B},\symmdiff,\cap) $ fibered on $(\Sigma_{J},\symmdiff,\cap)$ via $g^{-1}$.       

\begin{equation}
\label{eq:gro_rel}
\xymatrix{
	(\Bbb R,\Sigma_{\Cal B}) \ar[r]^{id} & (\Bbb R,\Sigma_{\Cal B})  \\
	(I,\Sigma_{I}) \ar[r]_{\phi} \ar@{.>}[ru]^{g\phi} \ar@{->}[u]^{g\phi} & (J,\Sigma_{J}) \ar@{->}[u]_{g}
}
\quad
\xymatrix{
	(\Sigma_{\Cal B},\symmdiff,\cap) \ar@{->}[d]_{\phi^{-1} \circ g^{-1}} & (\Sigma_{\Cal B},\symmdiff,\cap) \ar[l]^{id} \ar@{.>}[ld]^{\phi^{-1} \circ g^{-1}} \ar@{->}[d]^{g^{-1} (E \mapsto g^{-1}[E])} \\
	(\Sigma_{I},\symmdiff,\cap)   & (\Sigma_{J},\symmdiff,\cap) \ar[l]^{\phi^{-1}} 
}
\end{equation}
\begin{equation}
\xymatrix{
	\Sigma_{\Cal B} \ar@/_/[ddr]_{g^{-1}} \ar@/^/[drr]^{id}
	\ar@{<.>}[dr]|-{h^{-1}}            \\
	& \Sigma_{\Cal B} \times_{\Sigma_{I}} \Sigma_{J} \ar[d]^{(\phi^{-1} \circ g^{-1})*} \ar@{<->}[r]^{(\phi^{-1})*}
	& \Sigma_{\Cal B} \ar[d]^{\phi^{-1} \circ g^{-1}}       \\
	& \Sigma_{J} \ar@{<->}[r]_{\phi^{-1}}   & \Sigma_{I} }
\label{eq:pullback}
\end{equation}

\begin{center}
	\begin{table}[ht]
		\centering
		%\vspace{1ex}
		\resizebox{\linewidth}{!}{
			\begin{tabular}{|c|c|}\hline
				{\bf Classical function based model} & {\bf Proposed functorial model} \\ \hline
				\makecell{When $\mathbf{C}= \mathbf{1}$, time axis thought of as single base \\corresponding to the object $(\Bbb R,\hat\Sigma_{\Cal B})$ of $\mathbf{Meas}$.} & 
				\makecell{Multiple $(I,\Sigma_{I})$,$(J,\Sigma_{J})$,... bases and change of base~\cite{BJ}\\in opposite $\mathbf{Meas}^{op}$ leading to celebrated relative viewpoint.} \\ \hline
			\end{tabular}
		}
		\caption{Difference of fixed base versus multiple bases leading to application of Grothendieck's relative point of view in proposed model.}
		\label{table:diff7_sig_rep}
	\end{table}
\end{center}

\subsection{Arrow-theoretic understanding of redundancy and compression}

In classic signal representation since the signal space is a fixed object such as $L^2(\Bbb R,\hat\Sigma_{\Cal B},\mu)$ the compression is mathematically explained via choosing a certain basis (or frame) in this space resulting into sparse representation. However the functorial viewpoint gives the freedom of choosing spaces and their transformations respecting or matched with generative structure. The model as a category naturally leads to interpreting redundancy through arrows across objects or sub-signals. As an illustration, referring Figure~\ref{fig:visual_depiction_signal} we consider the common redundancies purely due to inverse-preserving maps between local intervals. The general redundancy between $f\restr I$ and $f\restr J$ is captured by $F(a)$ and $f\restr J$ is represented as $F(a)f\restr I + \Delta_J$. In translational redundancy this is solely determined by $\phi:(I,\Sigma_{I}) \rightarrow (J,\Sigma_{J})$ which is also additionally measure-preserving by considering additional property of null-ideals and measures. Thus $\phi:(I,\Sigma_{I},\mu_I) \rightarrow (J,\Sigma_{J},\mu_J)$ and using {\bf signal matched category} as $L^2(\mathbf{C})$ where $\mathbf{C} \subseteq \mathbf{LocMeasure}$ as described in Section~\ref{sec:L2_func_space}; we have functorial representation as,  

\begin{equation}
\label{rep}
f^{\ssbullet} =
\underbrace{[(f\restr I)^{\ssbullet}                       \rule[-12pt]{0pt}{5pt}}_{\mbox{classic}}
,\underbrace{\Delta_J                \rule[-12pt]{0pt}{5pt}}_{\mbox{relative error}}
+\underbrace{L^2(\phi^{-1})(f\restr I)^{\ssbullet} \rule[-12pt]{0pt}{5pt}}_{\mbox{generative relative term}}
,...]
\end{equation}
where,
\begin{itemize}
	\item $(f\restr I)^{\ssbullet}$ : local signal Representation using some basis in $L^2(I,\Sigma_{I},\mu_I)$
	
	\item $(f\restr J)^{\ssbullet}$ : local signal Representation using some basis in $L^2(J,\Sigma_{J},\mu_J)$
	
	\item $L^2(\phi^{-1})(f\restr I)^{\ssbullet}$ : Transformed/Transfered local signal from $L^2(I,\Sigma_{I},\mu_I)$ to $L^2(J,\Sigma_{J},\mu_J)$
	
	\item $\Delta_J = (f\restr J)^{\ssbullet} - L^2(\phi^{-1})(f\restr I)^{\ssbullet}$ : Differential or error between transformed and observed local signal in $L^2(J)$.
	
\end{itemize}

By noting that $(\Bbb R,\hat\Sigma_{\Cal B},\mu)$ is the direct sum of $\langle(i,\Sigma_i,\mu_i)\rangle_{i\in (I,J,...)}$ a
family of measure spaces, the canonical isomorphism between $L^0(\Bbb R,\hat\Sigma_{\Cal B},\Cal N(\mu))$ and $\prod_{i\in (I,J,...)}L^0(\mu_i)$ induces an isomorphism between $L^2(\Bbb R,\hat\Sigma_{\Cal B},\mu)$ and the subspace
$\{u:u\in\prod_{i\in (I,J,...)}L^2(\mu_i),\,\|u\|=$ $\bigl(\sum_{i\in I}\|u(i)\|_2^2)^{1/2}<\infty\}$ of $\prod_{i\in (I,J,...)}L^2(\mu_i)$. Now the classical signal $f^{\ssbullet} \in {L^2}(\mu)$ belongs to a global space ${L^2}(\mu)$ canonically isomorphic to subspace of $\prod_{i\in (I,J,..)}{L^2}(\mu_i)$ via $f^{\ssbullet}\mapsto[(f\restr I)^{\ssbullet},(f\restr J)^{\ssbullet},...]$. In classical representation the components $(f\restr I)^{\ssbullet},(f\restr J)^{\ssbullet},...$ are always linearly independent and since all these belong to a global space the entire signal is expressed as single element $(f\restr I)^{\ssbullet}+(f\restr J)^{\ssbullet}+...$ where vector addition is in global space ${L^2}(\Bbb R,\hat\Sigma_{\Cal B},\mu)$ and components such as $(f\restr I)^{\ssbullet}$ are automatically thought of having zero value outside restricted interval on whole $(\Bbb R,\hat\Sigma_{\Cal B},\mu)$. However the same components in proposed representation belong to different sub-spaces $L^2(\mu_i)$ which are treated as separate objects, related by special arrows in category $\mathbf{Riesz}$ or $\mathbf{Hilb}$. These arrows make the components dependent or related while differentials together with arrows are the real innovations which are often encoded. When the arrows $L^2(\phi^{-1})$ are compactly represented using $\phi^{-1}$ then $L^2$ being fixed construction overall gain in encoding results since one encodes differential and $\phi^{-1}$ to generate $\Delta_J +L^2(\phi^{-1})(f\restr I)^{\ssbullet}$ instead of encoding $(f\restr J)^{\ssbullet}$ independently as in classic case. In various lossless compression standards such as PNG or DPCM in lossless JPEG the map $\phi$ is fixed specified by the filter type. This provides a category-theoretic mathematical explanation for achieving enhanced compression in differential encoding such as PNG or DPCM in lossless JPEG as compared to standard lossy JPEG using wavelets or discrete cosine transforms especially in certain class of images such as line drawings, text, and iconic graphics involving frequent isomorphisms between neighboring pixels as illustrated using examples in~\ref{sec:redundacy_examples}
\begin{center}
	\begin{table}[ht]
		\centering
		%\vspace{1ex}
		\resizebox{\linewidth}{!}{
			\begin{tabular}{|c|c|}\hline
				{\bf Classical function based model} & {\bf Proposed functorial model} \\ \hline
				\makecell{Classic signal $f^{\ssbullet} \in L^2(\Bbb R,\Sigma_{Leb},\mu) = {L^2}(\mu)$\\$f^{\ssbullet}\mapsto[(f\restr I)^{\ssbullet},(f\restr J)^{\ssbullet},...]$} & 
				\makecell{When $\mathbf{C} \subseteq \mathbf{LocMeasure}$, signal $f^{\ssbullet}$ = $[(f\restr I)^{\ssbullet},(f\restr J)^{\ssbullet},...]$\\$(f\restr J)^{\ssbullet}$ represented as $\Delta_J +L^2(\phi^{-1})(f\restr I)^{\ssbullet}$} \\ \hline
			\end{tabular}
		}
		\caption{Arrow-theoretic Redundancy, compression and relative information content of Signal.}
		\label{table:diff8_sig_rep}
	\end{table}
\end{center}

\subsection{Equivalent formulation using $\mathbf{Meas}^{\mathbf{2}}$}
\label{meas2}
So far we have discussed the functorial framework using a basic functor however in this section we formulate an equivalent viewpoint using the higher categorical concept of functor category $\mathbf{B}^{\mathbf{C}}$ referring~\cite{CWM}. This viewpoint has the advantage of not having to deal with abstract base category as this is often not given in the application but practically we have only concrete observed signals. Since generators (capturing the intuition of generative theory in~\cite{Leyton01}) of a signal generate concrete measurable waveforms therefore, rather than working with abstract category $\mathbf{C}$ we directly model them as functors $G_1,G_2,G_3,...:\mathbf{2}$\footnote{$\mathbf{2}$ is the category with two objects $O, M$ with identity morphisms and one non-trivial morphism $l:O \to M$.}$ \rightarrow \mathbf{Meas}$. The transfers (or isomorphisms) between the generators then automatically become natural transformations (or natural isomorphisms) $a_1,a_2,...$ forming some signal matched subcategory (or groupoid) of the usual functor category~\cite{CWM} namely $\mathbf{Meas}^{\mathbf{2}}$. Thus corresponding to 
objects $(f\restr I),(f\restr K),(f\restr J),...$ which are local real-valued partial functions on disjoint half-open intervals $I$, $J$, $K$, .... we associate generators as functors while transfers between generators take the form of natural transformations. From the signal representation point of view both these ways are equivalent and a reader can more or less substitute one for other using the fact that functor category $\mathbf{Meas}^{2}$ is same as $\mathbf{Meas}^{\rightarrow}$, using Equation~\ref{eq:sig_func_cat_1} along with arguments presented in earlier subsections of this section. Whenever the category $\mathbf{C}$ is not known apriori or abstract, in such cases the generators and their relationships can be concretely modeled as functors $G_1:\mathbf{2} \to \mathbf{D}$ and natural transformations $a:G_1 \to G_2$.

\begin{equation}
\label{eq:sig_func_cat}
\xymatrixcolsep{1.5cm}
\xymatrix{
	\mathbf{2} \ruppertwocell^{R}{\alpha}
	\rlowertwocell_{T}{\beta}
	\ar[r]|{S}
	&\mathbf{Meas}\\
}
\quad
\xymatrix{
	Rc_2=(\Bbb R,\Sigma_{\Cal B}) \ar@/^2.0pc/@{<->}[rr]^{(\beta \cdot \alpha)_{c_2}} \ar@{<->}[r]^{\alpha_{c_2}= h_1} 	& Sc_2=(\Bbb R,\Sigma_{\Cal B}) \ar@{<->}[r]^{\beta_{c_2}=h_2}	 & Tc_2=(\Bbb R,\Sigma_{\Cal B}) \\
	Rc_1=(I,\Sigma_{I}) \ar@/_2.0pc/@{<->}[rr]_{(\beta \cdot \alpha)_{c_1}}	\ar@{<->}[r]_{\alpha_{c_1}=\phi_1}	\ar@{->}[u]^{Rf=f_1} & Sc_1=(J,\Sigma_{J}) \ar@{<->}[r]_{\beta_{c_1}=\phi_2}	\ar@{->}[u]^{Sf=f_2}	 & Tc_1=(K,\Sigma_{K}) \ar@{->}[u]^{Tf=f_3}	\\
}
\end{equation}

The generators (capturing the intuition of generative theory in~\cite{Leyton01}) of a signal generate measurable waveforms therefore can be modeled as functors $R,S,T,...$. The transfers (or isomorphisms) between the generators then automatically become natural transformations (or natural isomorphisms) $\alpha,\beta,...$ forming some signal matched subcategory of the usual functor category $\mathbf{Meas}^{\mathbf{2}}$ in~\cite{CWM}.

\subsection{Dealing partial isomorphisms, practical observation errors and general limitations.}
\label{limitation}

In this work, the category for natural generative structure (or equivalently the subcategory of functor category with functors $\mathbf{2} \rightarrow \mathbf{Meas})$ is taken as a groupoid since isomorphism most appropriately models the transfer of previously occurring object in the generative theory of of~\cite{Leyton01}. Moreover the codomain of an isomorphism can be completely determined from its domain and serves to model redundancy of structures and signals in particular. Thus from a given observed signal thought of as a functor (or equivalently as a coproduct of measurable functions on disjoint half-open intervals), the goal is to reversely determine its generative groupoid structure. This functor model is generally faithful for groupoids whose objects generate waveforms with disjoint supports. In such cases from the image subcategory we can infer the isomorphic domain category. However when the objects in generative category generate waveforms with overlapping supports then faithfulness breaks down as from the category perspective superposition operation is not a coproduct operation. This introduces limitations on the generative groupoid which can be inferred from observed signal as described next,

\begin{enumerate}
	\item {\bf Non-faithfulness of functor}: In case of generators producing waveforms or measurable functions whose domain intervals are not naturally disjoint, the functor becomes non-faithful or in other words some structure is lost and therefore all isomorphisms are not preserved. This occurs since observed signal is a superposition of these waveforms through additive field property on $\Bbb R$ which is not a coproduct operation in $\mathbf{Meas}^{\rightarrow}$ or a coproduct in functor category $\mathbf{Meas}^{2}$. Common examples include combination of melodies in music signal or two different speakers uttering words simultaneously in speech signal. 
	
	\item {\bf Structures recognized in a category}: The constraints imposed by choice of codomain category which determine the type of isomorphisms that can be modeled. In the case of $\mathbf{Meas}^{\rightarrow}$ using isomorphisms of type $(h,\phi)$ only measurable structures on the generators can be modeled through isomorphisms which includes translations, scaling, amplitude changes but not any other structure such as topology of images etc.
	
	\item {\bf Practical limitations of generating systems}: Due to practical limitations of equipments it is quite possible that perfectly isomorphic generators may produce slightly dissimilar waveforms. While if there is a partial isomorphism between natural generators of the signal or in other words the generative category is a partial groupoids then certainly we have correspondingly dissimilar waveforms.
\end{enumerate}

These limitations of the functorial model can be dealt with in different ways as we discuss now. Note since direct sum is the coproduct in $\mathbf{Meas}$ we have for disjoint half-open intervals $I$ and $J$ the coproduct of two measurable functions $f_1$ and $f_2$ $\left(I \xrightarrow{f_1} \Bbb R\right) \bigoplus \left(J \xrightarrow{f_2} \Bbb R \right) \cong \left( I \bigoplus J \xrightarrow{f_1 \bigoplus f_2} \Bbb R \bigoplus \Bbb R \right)$ 

In the first limitation, if $f_1$, $f_2$ are two measurable functions (corresponding to two generators) the domains of which intersect each other then using Definition 34 and Theorem 49(b) we have three disjoint measurable spaces $dom f_1 \setminus dom f_2$, $dom f_2 \setminus dom f_1$, $dom f_1 \cap dom f_2$. Correspondingly we have three functions $f_1 \restr (dom f_1 \setminus dom f_2)$, $f_2 \restr (dom f_2 \setminus dom f_1)$, $(f_1 + f_2) \restr (dom f_1 \cap dom f_2)$ whose coproduct (direct sum) $f_1 \restr (dom f_1 \setminus dom f_2) \bigoplus (f_2 \restr (dom f_2 \setminus dom f_1)\bigoplus(f_1 + f_2) \restr (dom f_1 \cap dom f_2)$ is the observed signal to be represented. Thus the two generators get treated as being three corresponding to the three measurable functions on disjoint intervals and can be recovered or filtered only in special cases as discussed in third limitation. Extending to overlapping of finite $n$ generators we have $f_1$, $f_2$,...,$f_n$ as measurable functions of which we can similarly form coproduct of disjoint measurable functions.

The second limitation is dealt with by varying the codomain category $\mathbf{D}$ to categories of sheaves, graphs, topological measurable spaces etc. This can be corresponding to structures such as topology, graphs depending on specific class of signals we need to represent where a particular structure is prevalent.

The third limitation is on account of the arrows which recognize only measurable structure on $\Bbb R$ but don't respect the field structure thereby restricting faithful functorial modeling of partial groupoid as the generative category. Workaround for this limitation is using set-theoretic measure theory and considering the usual linear structure of the signal spaces such as $\mathfrak{L^0}_{X}$, ${L^0}_{X}$ or ${L^2}_{X}$ which take into account the field structure on $\Bbb R$. Referring Equation~\ref{eq:prac_sig_func}, if the isomorphism is partial which means $G_2$ is not totally isomorphic then differential $\Delta_J = F(G_2) - F(a)F(G_1)$ in Riesz space $\mathfrak{L^0}_{J}$ serves to indicate linear deviation of $G_2$ from $G_1$. In case of $a$ being total isomorphism, $\Delta_J$ also directly indicates the difference between theoretical local signal $F(a)F(G_1)$ from practical observed signal $F(G_2)$. Only when the generating and sensing equipments are ideal we will have theoretically the signal $F(a)F(G_1)$ (corresponding to object $(a)(G_1)$) equal to the observed signal $F(G_2)$ (corresponding to object $G_2$).          

\begin{equation}
\label{eq:prac_sig_func}
\xymatrix{
	G_1 \ar[r]^{a} & G_2
} 
\quad \\
\xrightarrow{F}
\quad
\xymatrix{
	& (\Bbb R,\Sigma_{\Cal B}) \\ 
	(\Bbb R,\Sigma_{\Cal B}) \ar[r]^{h} &   \ar@{->}[u]_{\Delta_J}\\
	(I,\Sigma_{I}) \ar[r]_{\phi} \ar@{->}[u]^{f\restr I = F(G_1)} & (J,\Sigma_{J}) \ar@{->}[u]_{F(a)[F(G_1)]}
}
\end{equation}

When $\Delta_J$ is relatively very small as compared to $F(a)F(G_1)$ it can be attributed to practical limitations of equipments and we can infer total isomorphism between $G_1$ and $G_2$. In case of partial isomorphism between $G_1$ and $G_2$, $\Delta_J$ is quite large. In this case we can continue to theoretically treat arrow $a$ in the base as total isomorphism and then operate on $\Delta_J$ as new local signal corresponding to subobject of $G_2$ apart from remaining subobject $(a)(G_1)$ isomorphic to $G_1$ as illustrated in the next example. 

\subsection{A prototypical example beneath the design of Portable network graphics.}\label{eg:2}
\pgfplotsset{
	standard/.style={%Axis format configuration
		axis x line=middle,
		axis y line=middle,
		enlarge x limits=0.05,
		enlarge y limits=0.05,
		every axis x label/.style={at={(current axis.right of origin)},anchor=north west},
		every axis y label/.style={at={(current axis.above origin)},anchor=north east},
		every axis plot post/.style={mark options={fill=white}}
	}
}
%\begin{figure}
%	%\centering
%	\input{signal1}
%%	\caption{(a) A signal (b) its hierarchy in category theory}
%%	\label{fig:hierarchy_redun}
%\end{figure}

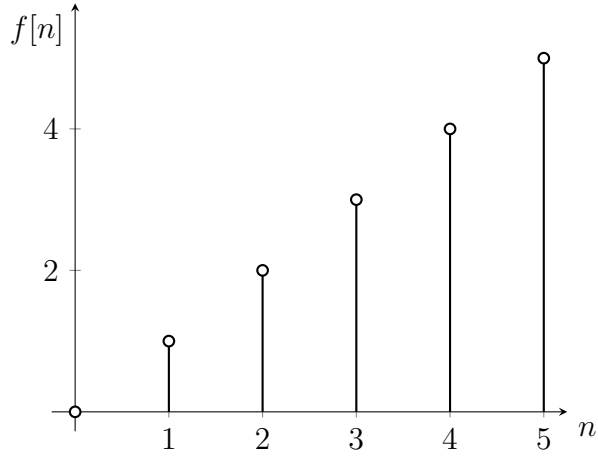
\begin{figure}[!h]
	\centering
	\begin{tikzpicture}
	\begin{axis}[%
	standard,
	domain = 0:5,
	samples = 6,
	xlabel={$n$},
	ylabel={$f[n]$},
	ymin=0,
	ymax=5.5]
	\addplot+[ycomb,black,thick] table {
		0 0
		1 1
		2 2
		3 3
		4 4
		5 5
		6 6
	};
	\end{axis}
	\end{tikzpicture}
	\caption{Sequence $f[n] = [1,2,3,4,5,...]$: A prototype differential coding example underlying PNG.}
	\label{fig:1}
\end{figure}
As shown in Figure~\ref{fig:1}, consider a signal as sequence of real values $1,2,3,....$ terminating at some finite $n$. This well-known sequence is encoded using differential coding by subtracting from each value the previous occurring value to obtain highly compressible sequence $1,1,1,...$. This example can be recast in a functorial way making use of the partial groupoid structure in base category. The example is specifically used to illustrate the limitation where functor is non-faithful and therefore we invoke set-theoretic measure theory to provide a work-around.

\begin{figure}[!h]
	\centering
	$\xymatrixcolsep{0.8cm}
	\xymatrix{
		&  & (\Bbb R,\Sigma_{\Cal B}) \\
		& (\Bbb R,\Sigma_{\Cal B}) \ar[r]^{id_{(\Bbb R,\Sigma_{\Cal B})}} & \ar@{->}[u]_{\Delta_K=f[3]-(id,\phi_2)f[2]}\\ 
		(\Bbb R,\Sigma_{\Cal B}) \ar[r]^{id_{(\Bbb R,\Sigma_{\Cal B})}} &   \ar@{->}[u]_{\Delta_J=f[2]-(id,\phi_1)f[1]} & \\
		(\{1\},\{\emptyset,\{1\}\}) \ar[r]_{\phi_1} \ar@{->}[u]^{f[1]} & (\{2\},\{\emptyset,\{2\}\}) \ar@{->}[u]_{(id,\phi_1)f[1]}
		\ar[r]_{\phi_2} & (\{3\},\{\emptyset,\{3\}\}) \ar@{->}[uu]_{(id,\phi_2)f[2]}\\
		G_1 \ar[r]^{a_1} & G_2' \amalg G_2'' \ar[r]^{a_2} & G_3' \amalg G_3'' \amalg G_3''' 
	}$
	\caption{The functorial signal representation model for $f[n]$ (Identities and composite arrows are not shown).}
	\label{fig:2}
\end{figure}
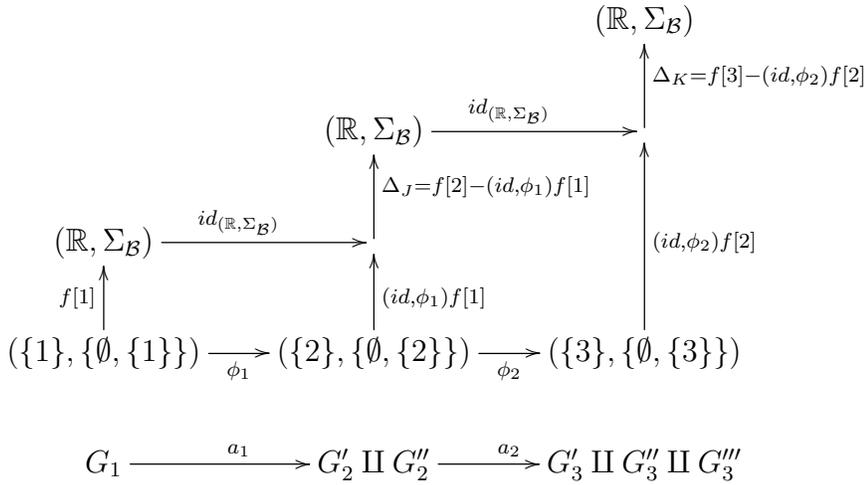

We utilize a subsequence of just three samples $f[1],f[2],f[3]$ to illustrate the limitations and workarounds as discussed in Section~\ref{limitation}. Consider a subgroupoid within the category $\mathbf{Set}$ with six singletons as objects $G_1, G_2', G_2'', G_3', G_3'', G_3'''$ and isomorphisms $a_1$, $a_2$, $a_2'$, $A_1$, $A_2$. Note that we have not considered identities and compositions in this groupoid explicitly as they are not relevant to limitations we are discussing; however they are present implicitly. Now define a functor $F$ which takes these objects and arrows to measurable functions as illustrated in Table~\ref{table:diff_coding_1}. Now since the generators generate measurable functions whose supports are not completely disjoint the observed signal is sequence $f[1] = g_1, f[2]= g_2 + \Delta_J, f[3]= g_3 + g_3' + \Delta_K$ where the field addition is not a coproduct operation in $\mathbf{Meas}^{\rightarrow}$. This forces us to reduce the base category to discrete groupoid with objects $G_1, G_2' \amalg G_2'', G_3' \amalg G_3''\amalg G_3'''$ corresponding to measurable functions $f[1],f[2],f[3]$ if we want functor to be isomorphism-preserving. Fortunately using Equation~\ref{eq:prac_sig_func} as discussed in workaround for limitations we can bring the set-theory once again to undo the field addition using inverses or subtraction operation. Utilizing the fact that $G_2' \amalg G_2''$ is partially isomorphic to $G_1$ we can represent this isomorphic subobject as $F(a_1)F(G_1)$ and recover the remaining subobject as $\Delta_J = F(G_2' \amalg G_2'') - F(a_1)F(G_1)$ as shown in Figure~\ref{fig:2}.    

\begin{equation}
\label{eq:par_groupoid_base}
\xymatrix{
	%	(\Bbb R,\Sigma_{\Cal B}) \ar[r]^{id} & (\Bbb R,\Sigma_{\Cal B})  \\
	%	(I,\Sigma_{I}) \ar[r]_{\phi} \ar@{->}[u]^{f = g\phi} & (J,\Sigma_{J}) \ar@{->}[u]_{f'= g} 
	&	 & G_3''' \\ 
	& G_2'' \ar@{<.>}[ru]^{A_2} \ar@{<->}[r]^{a_2'} & G_3'' \ar@{<.>}[u] \\
	G_1 \ar@{<.>}[ru]^{A_1} \ar@{<->}[r]^{a_1} & G_2' \ar@{<.>}[u] \ar@{<->}[r]^{a_2} & G_3' \ar@{<.>}[u]
} 
\quad
\xymatrix{
	&	 & (\Bbb R,\Sigma_{\Cal B}) \\
	&	 & (K,\Sigma_{K}) \ar@{->}[u]_{\Delta_K} \\
	& (\Bbb R,\Sigma_{\Cal B})  \ar@{<->}[r]^{id} \ar@{<.>}[ruu]	 & (\Bbb R,\Sigma_{\Cal B}) \\
	& (J,\Sigma_{J}) \ar@{<->}[r]_{\phi_2}	\ar@{->}[u]^{\Delta_J} \ar@{<.>}[ruu]	 & (K,\Sigma_{K}) \ar@{->}[u]^{g_3'} \\
	(\Bbb R,\Sigma_{\Cal B}) \ar@{<->}[r]^{id} \ar@{<.>}[ruu]	& (\Bbb R,\Sigma_{\Cal B}) \ar@{<->}[r]^{id}	 & (\Bbb R,\Sigma_{\Cal B}) \\
	(I,\Sigma_{I})	\ar@{<->}[r]_{\phi_1}	\ar@{->}[u]^{g_1} \ar@{<.>}[ruu]	& (J,\Sigma_{J}) \ar@{<->}[r]_{\phi_2}	\ar@{->}[u]^{g_2}	 & (K,\Sigma_{K}) \ar@{->}[u]^{g_3}
}
\end{equation}

\begin{center}
	\begin{table}[ht]
		\centering
		%\vspace{1ex}
		\resizebox{\linewidth}{!}{
			\begin{tabular}{|c|c|}\hline
				{\bf Functorial framework parameters} & {\bf Values for differential coding example} \\ \hline
				\makecell{Base Category $\mathbf{C}$ Objects} & 
				\makecell{$G_1$, $G_2'$, $G_2''$, $G_3'$, $G_3''$, $G_3'''$\\ $G_2''$, $G_3'''$ model $\Delta_J$, $\Delta_K$ since example is highly structured.} \\ \hline
				\makecell{Base Category $\mathbf{C}$ Arrows} & 
				\makecell{$a_1$, $a_2$, $a_2'$, $A_1$, $A_2$\\ $A_1$, $A_2$ model $\Delta_J$, $\Delta_K$ inter-relationship in this example.} \\ \hline
				\makecell{Image subcategory $F(\mathbf{C})$ Objects} & 
				\makecell{$F(G_1) = g_1: (I,\Sigma_{I}) \to (\Bbb R,\Sigma_{\Cal B})$ where $(I,\Sigma_{I})=(\{1\},\{\emptyset,\{1\}\})$\\ $F(G_2') = g_2: (J,\Sigma_{J}) \to (\Bbb R,\Sigma_{\Cal B})$ where $(J,\Sigma_{J})=(\{2\},\{\emptyset,\{2\}\})$\\$F(G_3') = g_3: (K,\Sigma_{K}) \to (\Bbb R,\Sigma_{\Cal B})$ where $(K,\Sigma_{K})=(\{3\},\{\emptyset,\{3\}\})$\\$g_1$, $g_2$, $g_3$ are all measurable functions each with value $1 \in \Bbb R$\\$F(G_2'') = \Delta_J: (J,\Sigma_{J}) \to (\Bbb R,\Sigma_{\Cal B})$ where $(J,\Sigma_{J})=(\{2\},\{\emptyset,\{2\}\})$\\ $F(G_3'') = g_3': (K,\Sigma_{K}) \to (\Bbb R,\Sigma_{\Cal B})$ where $(K,\Sigma_{K})=(\{3\},\{\emptyset,\{3\}\})$\\$F(G_3''') = \Delta_K: (K,\Sigma_{K}) \to (\Bbb R,\Sigma_{\Cal B})$ where $(K,\Sigma_{K})=(\{3\},\{\emptyset,\{3\}\})$\\$\Delta_J$, $g_3'$, $\Delta_K$ are all measurable functions each with value $1 \in \Bbb R$ .} \\ \hline
				\makecell{Image subcategory $F(\mathbf{C})$ Arrows} & 
				\makecell{$F(a_1) = (id,\phi_1): g_1 \to g_2$ such that $id \circ g_1 = g_2 \circ \phi_1$ or $g_2 = g_1 \circ \phi_1^{-1}$\\ $F(a_2) = (id,\phi_2): g_2 \to g_3$ such that $id \circ g_2 = g_3 \circ \phi_2$ or $g_3 = g_2 \circ \phi_2^{-1}$\\$F(a_2') = (id,\phi_2): \Delta_J \to g_3'$ where $id \circ \Delta_J = g_3' \circ \phi_2$ or $g_3' = \Delta_J \circ \phi_2^{-1}$\\$\phi_1$, $\phi_2$ are all measurable functions each being isomorphism.\\$F(A_1) = (id,\phi_1): g_1 \to \Delta_J$ such that $id \circ g_1 = \Delta_J \circ \phi_1$ or $\Delta_J = g_1 \circ \phi_1^{-1}$\\ $F(A_2) = (id,\phi_2): \Delta_J \to \Delta_K$ such that $id \circ \Delta_J = \Delta_K \circ \phi_2$ or $\Delta_K = \Delta_J \circ \phi_2^{-1}$\\$\Delta_J$, $\Delta_K$ only in this example are identical with $g_2$ $g_3'$ due to special structure.} \\ \hline
			\end{tabular}
		}
		\caption{The differential coding prototype example underlying the Image enconding standards such as PNG.}
		\label{table:diff_coding_1}
	\end{table}
\end{center}

Thus in case of generators such as $G_1$ and $G_2 = G_2' \amalg G_2''$ being partially isomorphic either they are treated as having no interconnected isomorphism or as being connected through a (partial) isomorphism such as $a: G_1 \to G_2$ as shown in Figure~\ref{fig:3}. In that case we represent the measurable function $F(G_2) = F(a_1)F(G_1) + \Delta_J$ interpreted as generalized element with differential. Such a model using partial isomorphism $a_1$ can be utilized especially when the sequence is expected to be highly structured in the sense that differentials themselves are also related to each other as in this prototype example. 

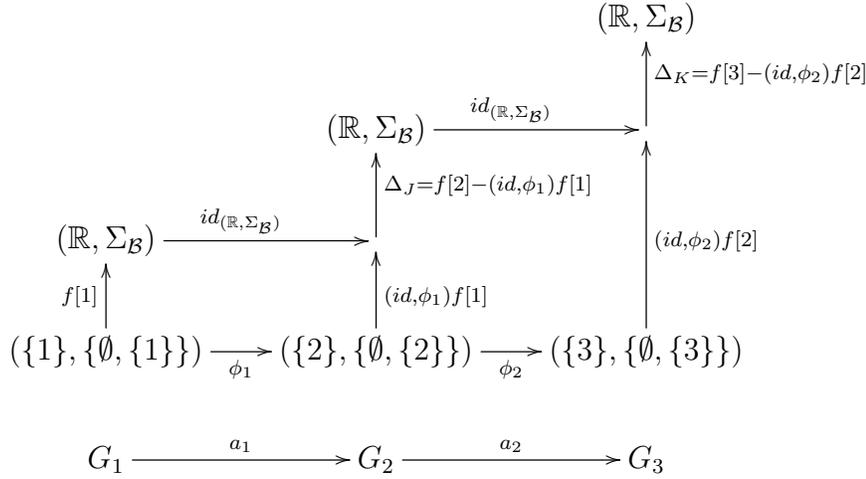
\begin{figure}[!h]
	\centering
	$\xymatrixcolsep{0.8cm}
	\xymatrix{
		&  & (\Bbb R,\Sigma_{\Cal B}) \\
		& (\Bbb R,\Sigma_{\Cal B}) \ar[r]^{id_{(\Bbb R,\Sigma_{\Cal B})}} & \ar@{->}[u]_{\Delta_K=f[3]-(id,\phi_2)f[2]}\\ 
		(\Bbb R,\Sigma_{\Cal B}) \ar[r]^{id_{(\Bbb R,\Sigma_{\Cal B})}} &   \ar@{->}[u]_{\Delta_J=f[2]-(id,\phi_1)f[1]} & \\
		(\{1\},\{\emptyset,\{1\}\}) \ar[r]_{\phi_1} \ar@{->}[u]^{f[1]} & (\{2\},\{\emptyset,\{2\}\}) \ar@{->}[u]_{(id,\phi_1)f[1]}
		\ar[r]_{\phi_2} & (\{3\},\{\emptyset,\{3\}\}) \ar@{->}[uu]_{(id,\phi_2)f[2]}\\
		G_1 \ar[r]^{a_1} & G_2 \ar[r]^{a_2} & G_3 
	}$
	\caption{The functorial signal representation model for $f[n]$ (Identities and composite arrows are not shown).}
	\label{fig:3}
\end{figure}

since the relationships are completely determined by fixed maps $\phi_1,\phi_2,...$ which are also inverse-measure-preserving isomorphisms by considering counting measure on all measurable spaces which are finite sets with a singleton. Hence functorial signal representation is directly given by Equation~\ref{rep} which becomes,

\begin{equation}
\label{rep_prototype}
Signal= [f[1],\Delta_J + l^2(\phi_1^{-1})f[1],\Delta_K + l^2(\phi_2^{-1})[\Delta_J + l^2(\phi_1^{-1})f[1]],...]
\end{equation}
since in this case,
\begin{itemize}
	\item $(f\restr I)^{\ssbullet}=f[1]$ : local signal using basis $\hat{e}_I$ in $L^2(I,\Sigma_{I},\mu_I)=l^2(\{1\},\{\emptyset,\{1\}\},count)$
	
	\item $(f\restr J)^{\ssbullet}=f[2]$ : local signal using basis $\hat{e}_J$ in $L^2(J,\Sigma_{J},\mu_J)=l^2(\{2\},\{\emptyset,\{2\}\},count)$
	
	\item $(f\restr K)^{\ssbullet}=f[3]$ : local signal using basis $\hat{e}_K$ in $L^2(K,\Sigma_{K},\mu_K)=l^2(\{3\},\{\emptyset,\{3\}\},count)$
	
	\item $l^2(\phi_1^{-1})f[1]$ : Transformed local signal from $L^2(I,\Sigma_{I},\mu_I)$ to $L^2(J,\Sigma_{J},\mu_J)$
	
	\item $\Delta_J = f[2] - l^2(\phi_1^{-1})f[1]$ : Error between transformed and observed local signal in $L^2(J)$.
	
	\item $\Delta_K = f[3] - l^2(\phi_1^{-1})f[2]$ : Error between transformed and observed local signal in $L^2(K)$.
	
\end{itemize}

The functorial framework mathematically makes precise the underlying intuition behind working of the classic filtering phase of all major image encoding standards utilizing differential coding techniques. Especially if the generative category is such that overall most of the generators are isomorphic to their preceding generators then this type of differential filtering will result into almost sparse sequence the reason for which was illustrated through the prototype example. For practical examples see Section~\ref{red_exam}.

\subsection{Differences between classic versus functorial signal space structures}
In this section, we discuss differences between classic versus functorial signal space structures. This includes standard properties such as linearity, lattice and multiplication of classical signal spaces  $L^0(\Bbb R,\Sigma_{Leb},\Cal N(\mu)$ (or $\mathfrak{L}^0_{X}$) along with additional norm and inner-product of $L^2(\Bbb R,\Sigma_{Leb},\mu)$. In general all these properties in subcategory of $\mathbf{Riesz}$ or $\mathbf{BanLatt}$ are simply local or valid on objects within this category. The structure-preserving morphisms in these subcategories are linear operators preserving these properties considered as structures on objects.  

\subsection{Linear structure}
\label{sec:linear}
For the spaces given by Definitions~\ref{def:virtually_measurable_fun_space},~\ref{def:measurable_fun_space},~\ref{def:L0_space},\ref{def:Lp_space}, we have the following properties, 

\begin{enumerate}
	\item Let $f$, $f'$, $g$, $g' \in \mathfrak{L}^0$,
	$f \eae f'$ and $g \eae g'$ then $f + g \eae f' + g'$. This defines
	addition on $\mathfrak{L}^0_{X}$ a subspace with $f$, $f'$, $g$, $g'$
	defined on complete $X$. Also it defines addition on $L^0(\mu)$ by setting
	$f^{\ssbullet}+g^{\ssbullet}=(f+g)^{\ssbullet}$ for all $f$, $g \in \mathfrak{L}^0$.
	Let $f$, $g \in  \mathfrak{L}^2$. If
	$c$, $c' \in \Bbb R$ then $|c + c'|^2 \le 2^2\max(|c|^2,|c'|^2)$, therefore $|f+g|^2 \leae 2^2(|f|^2 \vee |g|^2)$.
	But $|f+g|^2 \in \mathfrak{L}^0$ and
	$2^2(|f|^2 \vee |g|^2)$ is integrable so $|f+g|^2$ is integrable. Hence
	$f+g \in \mathfrak{L}^2$ for all $f$, $g\in \mathfrak{L}^2$; therefore
	$f^{\ssbullet}+g^{\ssbullet}=(f+g)^{\ssbullet}$ for all $f$, $g \in \mathfrak{L}^2$
	defining an addition on $L^2(\mu)$.
	
	\item If $f$, $g \in \mathfrak{L}^0$ and $f \eae g$,
	then $cf \eae cg $ for every $c\in\Bbb R$. This defines
	scalar multiplication on $\mathfrak{L}^0_{X}$ with $f$, $g$
	defined on complete $X$. And it defines scalar multiplication on $L^0(\mu)$ by setting
	$c \cdot f^{\ssbullet} = (cf)^{\ssbullet}$ for
	all $f\in \mathfrak{L}^0$ and every $c\in\Bbb R$. Now let $f\in \mathfrak{L}^2$ and $c\in\Bbb R$ then
	$|cf|^2 = |c|^2 |f|^2$ is integrable, so $cf \in \mathfrak{L}^2$.
	Therefore $c f^{\ssbullet} \in L^2(\mu)$ whenever $f^{\ssbullet} \in L^2(\mu)$ and $c \in\Bbb R$.
	
\end{enumerate} 

Thus $\mathfrak{L}^0_{X}$, $L^0(\mu)$ are linear spaces over $\Bbb R$, with zero $\mathbf{0}$ (the function with domain $X$ and
constant value $0$) and $\mathbf{0}^{\ssbullet}$ respectively. The negatives are given by $-(f)$ and $-(f^{\ssbullet})=(-f)^{\ssbullet}$ respectively. Also $L^2(\mu)$ is a linear subspace of $L^0(\mu)$. Thus all axioms of a linear space are satisfied namely,
\begin{itemize}
	\item {$f+(g+h)=(f+g)+h$ for all $f$, $g$, $h \in \mathfrak{L}^0$,} hence {$f'+(g'+h')=(f'+g')+h'$ for all $f'$, $g'$, $h' \in \mathfrak{L}^0_{X}$.} and {$f^{\ssbullet}+(g^{\ssbullet}+h^{\ssbullet})=(f^{\ssbullet}+g^{\ssbullet})+h^{\ssbullet}$ for all $f^{\ssbullet}$, $g^{\ssbullet}$, $h^{\ssbullet} \in L^0(\mu)$.}
	
	\item {$f+\mathbf{0}=\mathbf{0}+f=f$ for every $f\in \mathfrak{L}^0$,} hence {$f'+\mathbf{0}=\mathbf{0}+f'=f'$ for every $f'\in \mathfrak{L}^0_{X}$,} and {$f^{\ssbullet}+\mathbf{0}^{\ssbullet}=\mathbf{0}^{\ssbullet}+f^{\ssbullet}=f^{\ssbullet}$ for every $f^{\ssbullet} \in L^0(\mu)$.}

	\item {$f+ (-f) \eae \mathbf{0}$ for every $f\in \mathfrak{L}^0$,} hence {$f'+ (-f') \eae \mathbf{0}$ for every $f'\in \mathfrak{L}^0_{X}$,} and {$f^{\ssbullet}+(-f)^{\ssbullet}=\mathbf{0}^{\ssbullet}$ for every $f^{\ssbullet} \in L^0(\mu)$.}
	
	\item {$f + g=g + f$ for all $f$, $g\in  \mathfrak{L}^0$,} hence {$f'+g'=g'+f'$ for all $f'$, $g'\in \mathfrak{L}^0_{X}$,} and {$f^{\ssbullet} + g^{\ssbullet}=g^{\ssbullet} + f^{\ssbullet}$ for all $f^{\ssbullet}$,
		$g^{\ssbullet} \in L^0(\mu)$.}
	
	\item {$c(f + g)=cf + cg$ for all $f$, $g\in \mathfrak{L}^0$ and $c\in\Bbb R$,} hence {$c(f'+g')=cf'+cg'$ for all $f'$, $g'\in\mathfrak{L}^0_{X}$ and $c\in\Bbb R$,} and {$c(f^{\ssbullet} + g^{\ssbullet})=cf^{\ssbullet} + cg^{\ssbullet}$ for all $f^{\ssbullet}$, $g^{\ssbullet}\in L^0(\mu)$ and $c\in\Bbb R$.}
	
	\item {$(c+c')f=cf+c'f$ for all $f\in \mathfrak{L}^0$ and $c$,$c'\in\Bbb R$,} hence {$(c+c')g=cf+c'g$ for all $g\in \mathfrak{L}^0_{X}$ and $c$,$c'\in\Bbb R$,} and {$(c+c')f^{\ssbullet}=cf^{\ssbullet}+c'f^{\ssbullet}$ for all $f\in L^0(\mu)$ and $c$, $c'\in\Bbb R$.}
	
	\item {$(cc')f=c(c'f)$ for all $f\in \mathfrak{L}^0$ and $c$, $c'\in\Bbb R$,} hence {$(cc')g=c(c'g)$ for all $g\in \mathfrak{L}^0_{X}$ and $c$, $c'\in\Bbb R$,} and {$(cc')f^{\ssbullet}=c(c'f^{\ssbullet})$ for all $f^{\ssbullet}\in L^0(\mu)$ and $c$, $c'\in\Bbb R$.}
	
	\item {$1f = f$ for all $f\in \mathfrak{L}^0$,} hence {$1f'=f'$ for all $f'\in \mathfrak{L}^0_{X}$,} and {$1f^{\ssbullet}=f^{\ssbullet}$ for all $f^{\ssbullet}\in L^0(\mu)$. }
\end{itemize}

\subsection{Partial order and lattice structure}
\label{sec:partialorder}
For the spaces given by Definitions~\ref{def:virtually_measurable_fun_space},~\ref{def:measurable_fun_space},~\ref{def:L0_space},\ref{def:Lp_space}, we have properties related to partial order and lattice in addition to linearity.

\begin{enumerate}
	\item Let $f$, $f'$, $g$, $g' \in \mathfrak{L}^0$, $f \eae f'$,
	$g \eae g'$ and $f \leae g $, then $f' \leae g'$. This defines a relation $\le$ on
	$\mathfrak{L}^0_{X}$ by declaring $f_1 \le g_1$ iff $f \leae g$ where $f \eae f_1$
	and $g \eae g_1$ for $f_1$, $g_1 \in \mathfrak{L}^0_{X}$ which simply means $f_1 \leae g_1$ for $f_1$, $g_1$ defined on whole domain ${X}$. Also it defines a relation $\le$ on $L^0(\mu)$ by declaring that
	$f^{\ssbullet}\le g^{\ssbullet}$ iff $f\leae g$. For $p\in[1,\infty]$, $f$, $g \in \mathfrak{L}^p$ means $f$, $g \in \mathfrak{L}^0$ such that $|f|^p,|g|^p$ are integrable thus $\leae$ inherited from $\mathfrak{L}^0$  
	defines a relation $\le$ on $L^p(\mu)$ by saying that $f^{\ssbullet}\le g^{\ssbullet}$ iff $f\leae g$ for $f$, $g \in \mathfrak{L}^p$. Alternately we can inherit this relation directly from $L^0(\mu)$ by noting that $L^p(\mu)$ is its linear subspace. This of-course holds for special case $p=2$.
	
	\item Next $\le$ is a partial order on $\mathfrak{L}^0_{X}$,$L^0(\mu)$ and $L^p(\mu)$. (i) Let $f$, $g$, $h\in\mathfrak{L}^0$, if $f\leae g$ and $g\leae h$, then $f\leae h$. Therefore $f_1\le h_1$ whenever $f_1$, $g_1$,
	$h_1 \in \mathfrak{L}^0_{X}$, $f_1\le g_1$ and $g_1\le h_1$. Similarly $f^{\ssbullet}\le h^{\ssbullet}$ whenever $f^{\ssbullet}$, $g^{\ssbullet}$,
	$h^{\ssbullet}\in L^0(\mu)$, $f^{\ssbullet}\le g^{\ssbullet}$ and $g^{\ssbullet}\le h^{\ssbullet}$.   (ii) Let $f\in \mathfrak{L}^0$ and
	$f\leae f$;  then $f_1 \le f_1$, $f^{\ssbullet} \le f^{\ssbullet}$ for every $f^{\ssbullet}\in L^0(\mu)$.   (iii) Let $f$,
	$g\in \mathfrak{L}^0$, $f \leae g$ and $g \leae f$, then $f \eae g$, therefore if $f_1\le g_1$ and $g_1\le f_1$ then $f_1=g_1$ in $\in \mathfrak{L}^0_{X}$, similarly $f^{\ssbullet}\le g^{\ssbullet}$ and $g^{\ssbullet}\le f^{\ssbullet}$ then $f^{\ssbullet}=g^{\ssbullet}$ in $L^0(\mu)$. For linear subspace $L^p(\mu)$, $\le$ is a partial order on inherited from $L^0(\mu)$ and holds for $L^2(\mu)$.
	
	\item  Adding linear properties of $\mathfrak{L}^0_{X}$, $L^0(\mu)$ and $L^p(\mu)$, with $\le$, make these {\bf partially
		ordered linear spaces}, which are real linear spaces with a partial order $\le$ such that for elements $x,y,z$ of these we have (i) if $x \le y$ then $x+z\le y+z$ for every $z$, since if $f$, $g$, $h\in \mathfrak{L}^0$ and
	$f \leae g$, then $f+h \leae g+h$. (ii) if $0\le x$ then $0\le cx$ for every $c\ge 0$, since if $f\in \mathfrak{L}^0$ and
	$f \ge 0$ a.e., then $cf \ge 0$ a.e.\ for every $c \ge 0$.
	
	\item Next $\mathfrak{L}^0_{X}$, $L^0(\mu)$ and $L^p(\mu)$ are {\bf Riesz spaces} or {\bf
		vector lattices}, which are partially ordered linear spaces such that
	$x \vee y=\sup\{x,y\}$ and $x\wedge y=\inf\{x,y\}$ are defined for all
	$x$, $y \in $ $\mathfrak{L}^0_{X}$, $L^0(\mu)$ or $L^p(\mu)$
	Let $f$, $g\in \mathfrak{L}^0$ such that $f^{\ssbullet}=x$
	and $g^{\ssbullet}=y$.  Then $f\vee g$, $f\wedge g \in\mathfrak{L}^0$, hence $f_1\vee g_1$, for $f_1\wedge g_1 \in\mathfrak{L}^0_{X}$. Now expressing {$(f \vee g)(x)=\max(f(x),g(x))$, \quad $(f \wedge
		g)(x)=\min(f(x),g(x))$} for $x\in (\dom f\cap\dom g)$. But, for any $h\in  \mathfrak{L}^0$, we have
	
	\centerline{$f \vee g \leae h\iff f\leae h$ and $g\leae h$,}
	
	\centerline{$h\leae f \wedge g\iff h\leae f$ and $h\leae g$,}
	
	Hence for any $z \in L^0(\mu)$ we have
	
	\centerline{$(f\vee g)^{\ssbullet}\le z\iff x \le z$ and $y \le z$,}
	
	\centerline{$z \le(f\wedge g)^{\ssbullet} \iff z \le x$ and $z \le y$.}
	
	Indeed $(f\vee g)^{\ssbullet}=\sup\{x,y\}=x \vee y$, \quad $(f \wedge g)^{\ssbullet}=\inf\{x,y\}=x \wedge y$ in $L^0(\mu)$. Again this lattice structure in linear subspace $L^p(\mu)$ is inherited from $L^0(\mu)$ and present in $L^2(\mu)$.
	Finally there are some additional properties of Archimedeaness and Dedekind completeness that can be found in~\cite{fremlinmt2}.
	
\end{enumerate} 

\subsection{Multiplicative structure}
\label{sec:multiplicative}
For the spaces given by Definitions~\ref{def:virtually_measurable_fun_space},~\ref{def:measurable_fun_space},~\ref{def:L0_space},\ref{def:Lp_space}, we have properties related to multiplicative structure in addition to being Riesz spaces. Let $f$, $f'$, $g$, $g' \in \mathfrak{L}^0$, $f \eae f'$ and $g \eae g'$ then $f \times g \eae f' \times g'$. Thus multiplication in $\mathfrak{L}^0_{X}$ is the usual multiplication from $\mathfrak{L}^0$, while the multiplication on $L^0(\mu)$ is defined by setting $f^{\ssbullet} \times g^{\ssbullet}=(f\times g)^{\ssbullet}$ for all $f$, $g\in \mathfrak{L}^0$. This also valid for the subspace $L^p(\mu)$. For all $x$, $y$, $z \in L^0(\mu)$ and $c \in \Bbb R$, the following properties can be easily verified,

\begin{itemize}
	\item $x \times (y \times z) =(x \times y)\times z$,
	
	\item $x \times \mathbf{1}^{\ssbullet}= \mathbf{1}^{\ssbullet} \times x =x$, where $\mathbf{1}$ is the
	equivalence class of the function with constant value $1$,
	
	\item $c(x\times y)=cx\times y=x\times cy$,
	
	\item $x\times(y+z)=(x\times y)+(x\times z)$,
	
	\item $(x+y)\times z=(x\times z)+(y\times z)$,
	
	\item $x\times y=y\times x$,
	
	\item $|x\times y|=|x|\times |y|$,
	
	\item $x\times y=0$ iff  $|x|\wedge|y|=0$,
	
	\item $|x|\le|y|$ iff there is a $z$ such that
	$|x|\le \mathbf{1}^{\ssbullet}$ and $x=y\times z$.
\end{itemize}
\section{Redundancy and Examples}
\label{red_exam}
In this Section, we contribute a novel category-theoretic Definition~\ref{def:redundancy} of intra-signal redundancy and some of its special cases~\ref{def:trans_redundancy},~\ref{def:aff_redundancy},~\ref{def:aff_amp_redundancy} using the isomorphism arrow in a category. Next we illustrate taking examples of some real-world images how compression occurs. The well-known heuristic yielding better compression of PNG standard in image types such as line drawings, iconic image, text etc. compared to classic JPEG standard at a given SNR can be precisely explained using category-theory. Finally we comment on base-structured category perspective of signals and their spaces leading to possible unification of various existing techniques of representing signals.    

\begin{figure}
	\resizebox*{\textwidth}{!}{
		\begin{tikzpicture}
		\draw[domain=-1.57:1.57,samples=100] plot(\x,{sin(2*\x r)});
		%\draw[blue, smooth,domain=2.5:8] plot (\x,rand);
		\draw[domain=7.85:11,samples=100] plot(\x,{sin(2*\x r)});
		\draw plot [smooth] coordinates {(1.57,0) (3,-0.5) (4,0.75) (5,1) (6,-0.4) (7,1) (7.85,0)};
		\draw[->] (-4,0) -- (13,0) node [pos=1,below] {$t$};
		\draw plot [smooth] coordinates {(-1.57,0) (-2.5,1) (-2.8,0) (-3.5,-0.5) (-3.7,-1) (-4,0)};
		\draw plot [smooth] coordinates {(11,0) (11.5,-0.5) (12,1) (12.5,0.5)};
		\draw[->] (-3.5,-1) --(-3.5,1.5) node [pos=1,left] {$\mathbb{R}$};
		\draw[-,dashed] (-1.57,1.5) -- (-1.57,-1.5);
		\draw[-,dashed] (1.57,1.5) -- (1.57,-1.5);
		
		\draw[-,dashed] (7.85,1.5) -- (7.85,-1.5);
		\draw[-,dashed] (11,1.5) -- (11,-1.5);
		\draw[<->] (-1.57,-1.3) --(1.57,-1.3) node[midway,fill=white] {$I$};
		\draw[<->] (7.85,-1.3) --(11,-1.3) node[midway,fill=white] {$J$};
		\node at (0,1.3) [draw=none] {$f$};
		\node at (9.4,1.3) [draw=none] {$f'$};
		\node at (-3.5,0) [draw=none,above left] {$0$};
		\end{tikzpicture}
	}
	\caption{A global signal where the local signal $f=F(G_1)=G_1l \in L^2(I,\Sigma_{I},\mu)$ in half-open interval $I$ is translated to $f'=F(G_2)=G_2l \in L^2(J,\Sigma_{J},\nu)$ in half-open interval $J$}
	\label{fig:redundancy}
\end{figure}
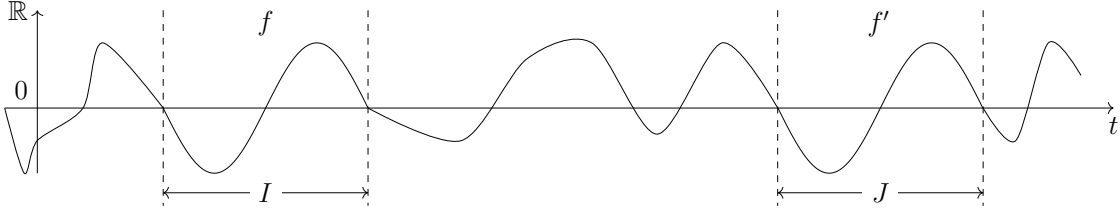 

\begin{equation}
\label{eq:sig_func_red}
\xymatrix{
	G_1 \ar[r]^{a} & G_2
} 
\quad \\
\xrightarrow{F}
\quad
\xymatrix{
	(\Bbb R,\Sigma_{\Cal B}) \ar[r]^{h} & (\Bbb R,\Sigma_{\Cal B})  \\
	(I,\Sigma_{I}) \ar[r]_{\phi} \ar@{->}[u]^{f\restr I = F(G_1)} & (J,\Sigma_{J}) \ar@{->}[u]_{F(G_2) \cong F(a)[F(G_1)]}
}
%\quad
%\xymatrix{
%	(\Sigma_{\Cal B},\symmdiff,\cap) \ar@{->}[d]_{\phi^{-1} \circ g^{-1}} & (\Sigma_{\Cal B},\symmdiff,\cap) \ar[l]^{id} \ar@{.>}[ld]^{\phi^{-1} \circ g^{-1}} \ar@{->}[d]^{g^{-1} (E \mapsto g^{-1}[E])} \\
%	(\Sigma_{I},\symmdiff,\cap)   & (\Sigma_{J},\symmdiff,\cap) \ar[l]^{\phi^{-1}} 
%}
\end{equation}
\begin{equation}
\label{eq:sig_func_cat_1_red}
\xymatrixcolsep{1.5cm}
\xymatrix{
	\mathbf{2}
	\ar@/^2pc/[rr]_{\quad}^{G_1}="1"
	\ar@/_2pc/[rr]_{G_2}="2"
	&& \mathbf{Meas}
	\ar@{}"1";"2"|(.2){\,}="7"
	\ar@{}"1";"2"|(.8){\,}="8"
	\ar@{=>}"7" ;"8"^{a} } \quad 
%\xymatrix{
%	\mathbf{2} \ruppertwocell^{R}{\alpha}
%	\rlowertwocell_{T}{\beta}
%%	\ar[r]|{S}
%	&\mathbf{Meas}\\
%}
\quad
\xymatrix{
	G_1M=(\Bbb R,\Sigma_{\Cal B}) \ar@{<->}[r]^{a_{M}= h} 	& Sc_2=(\Bbb R,\Sigma_{\Cal B}) \\
	G_1O=(I,\Sigma_{I})	\ar@{<->}[r]_{a_{O}=\phi}	\ar@{->}[u]^{G_1l=f\restr I} & Sc_1=(J,\Sigma_{J}) 	\ar@{->}[u]^{G_2l = f\restr J} \\
}
\end{equation}

Using Figure~\ref{fig:redundancy} and Equations~\ref{eq:sig_func_cat_1_red},~\ref{eq:sig_func_red} we give the following definitions;

\begin{definition}
	\label{def:redundancy}
	Consider a complete signal modeled as some subcategory of a functor category $\mathbf{Meas}^{2}$ with functors $G_1, G_2,...$ and natural transformations $a_1, a_2,...$, where $\mathbf{2}$ is the category with two objects $O, M$ with identity morphisms and one non-trivial morphism $l:O \to M$ and $\mathbf{Meas}$ is the category of measurable spaces and measurable functions. A local signal $G_2l$ corresponding to the generator $G_2: \mathbf{2} \rightarrow \mathbf{Meas}$ is defined as {\bf redundant} relative to a local signal $G_1l$ corresponding to the generator $G_1: \mathbf{2} \rightarrow \mathbf{Meas}$ iff there exists a isomorphism $(h,\phi): G_1l \rightarrow G_2l$ between them, resulting from the natural isomorphism $a: G_1 \rightarrow G_2$ between its generators. 
	
	Alternatively, given a signal modeled as some faithful isomorphism-preserving functor $F:\mathbf{C} \to \mathbf{Meas}^{\rightarrow}$ with generators $G_1, G_2,...$ and transfers $a_1, a_2,...$ as objects and arrows of $\mathbf{C}$, a local signal $F(G_2)$ is defined as {\bf redundant} relative to local signal $F(G_1)$ iff there exists an isomorphism $F(a): F(G_1) \leftrightarrow F(G_2)$ which the image of isomorphism $a: G_1 \rightarrow G_2$ in $\mathbf{C}$. 
\end{definition}

\subsection{Special cases of redundancies: Translational, affine, affine amplified/attenuated}

The specific cases of isomorphisms $(h,\phi)$ lead to three special type of redundancies common in signal theory. 

Let $I = [a,b)$ and $J = [a+T,b+T)$ be its translation on real line by $T \in \Bbb R$. More precisely this means there exists a map $\phi : I \rightarrow J$ given by $\phi(i) = i + T, i \in I$. Now $\phi$ is both one-to-one and onto or bijection of sets, hence its inverse is given by $\phi^{-1}(j) = j - T, j \in J$. Moreover by using the result that if set $A$ is measurable then for any $T \in \Bbb R$, $A + T = \{a + T; a \in A\}$ is measurable, we conclude that both $\phi$ and $\phi^{-1}$ are measurable. Also by using Lebesgue subspace measures $\mu_I, \mu_J$ on $I$ and $J$, $\phi$, $\phi^{-1}$ are both inverse-measure-preserving since translation leaves Lebesgue measure invariant. 

\begin{definition}
	\label{def:trans_redundancy}
	Consider a complete signal modeled as some subcategory of a functor category $\mathbf{Meas}^{2}$ with functors $G_1, G_2,...$ and natural transformations $a_1, a_2,...$, where $\mathbf{2}$ is the category with two objects $O, M$ with identity morphisms and one non-trivial morphism $l:O \to M$ and $\mathbf{Meas}$ is the category of measurable spaces and measurable functions. A local signal $G_2l$ corresponding to the generator $G_2: \mathbf{2} \rightarrow \mathbf{Meas}$ is defined as {\bf translation redundant} relative to a local signal $G_1l$ corresponding to the generator $G_1: \mathbf{2} \rightarrow \mathbf{Meas}$ iff there exists a natural isomorphism $(id_{\Bbb R},\phi): G_1l \rightarrow G_2l$ between them, where the measurable map $\phi : (I,\Sigma_I) \rightarrow (J,\Sigma_J)$ given by $\phi(i) = i+T, \forall i \in I$ defines the translation of domain of $G_1l$ signal to the domain of $G_2l$ signal.
	
	Alternatively, given  a signal as faithful functor $F:\mathbf{C} \to \mathbf{Meas}^{\rightarrow}$, a local signal $F(G_2)$ is defined as {\bf translation redundant} relative to local signal $F(G_1)$ iff there exists an isomorphism $F(a)= (id_{\Bbb R},\phi): F(G_1) \rightarrow F(G_2)$ between them, where the measurable map $\phi : (I,\Sigma_I) \rightarrow (J,\Sigma_J)$ given by $\phi(i) = i+T, \forall i \in I$ defines the translation of $F(G_1)$ signal domain to the domain of $F(G_2)$.
\end{definition}

Let $I = [a,b)$ and $J = [Sa,Sb)$ be its scaling on real line by $S \in \Bbb R$. More precisely this means there exists a map $\phi : I \rightarrow J$ given by $\phi(i) = Si, i \in I$. Again $\phi$ is both one-to-one and onto or bijection of sets, hence its inverse is given by $\phi^{-1}(j) = (1/S)j, j \in J$. Moreover by using the result that if set $A$ is measurable then for any $S \in \Bbb R$, $SA = \{Sa; a \in A\}$ is measurable, we conclude that both $\phi$ and $\phi^{-1}$ are measurable. Also by using Lebesgue subspace measure $\mu_I$ on $I$ and if $\mu_J$ is usual Lebesgue subspace measure on $J$, then setting $\nu_J = (1/S)\mu_J$ we conclude $\phi$, $\phi^{-1}$ are both inverse-measure-preserving.

\begin{definition}
	\label{def:aff_redundancy}
	Consider a complete signal modeled as some subcategory of a functor category $\mathbf{Meas}^{2}$ with functors $G_1, G_2,...$ and natural transformations $a_1, a_2,...$, where $\mathbf{2}$ is the category with two objects $O, M$ with identity morphisms and one non-trivial morphism $l:O \to M$ and $\mathbf{Meas}$ is the category of measurable spaces and measurable functions. A local signal $G_2l$ corresponding to the generator $G_2: \mathbf{2} \rightarrow \mathbf{Meas}$ is defined as {\bf affine redundant} relative to a local signal $G_1l$ corresponding to the generator $G_1: \mathbf{2} \rightarrow \mathbf{Meas}$ iff there exists a natural isomorphism $(id_{\Bbb R},\phi): G_1l \rightarrow G_2l$ between them, where the measurable map $\phi : (I,\Sigma_I) \rightarrow (J,\Sigma_J)$ given by $\phi(i) = Si+T, \forall i \in I$ models the affine transformation of domain of $G_1l$ signal into the domain of $G_2l$ signal.
	
	Alternatively, given a signal as faithful functor $F:\mathbf{C} \to \mathbf{Meas}^{\rightarrow}$, a local signal $F(G_2)$ is defined as {\bf affine redundant} relative to local signal $F(G_1)$ iff there exists an isomorphism $F(a)= (id_{\Bbb R},\phi): F(G_1) \rightarrow F(G_2)$ between them, where the measurable map $\phi : (I,\Sigma_I) \rightarrow (J,\Sigma_J)$ given by $\phi(i) = Si+T, \forall i \in I$ models affine transformation of the domain of $F(G_1)$ signal into the domain of $F(G_2)$ signal.
\end{definition}

Finally additionally using a linear Borel measurable function $h: (\Bbb R,\Sigma_{\Cal B}) \rightarrow (\Bbb R,\Sigma_{\Cal B})$. in addition to measurable function $\phi$ as above we can model amplified or attenuated redundant signal.

\begin{definition}
	\label{def:aff_amp_redundancy}
	Consider a complete signal modeled as some subcategory of a functor category $\mathbf{Meas}^{2}$ with functors $G_1, G_2,...$ and natural transformations $a_1, a_2,...$, where $\mathbf{2}$ is the category with two objects $O, M$ with identity morphisms and one non-trivial morphism $l:O \to M$ and $\mathbf{Meas}$ is the category of measurable spaces and measurable functions. A local signal $G_2l$ corresponding to the generator $G_2: \mathbf{2} \rightarrow \mathbf{Meas}$ is defined as {\bf affine amplified/attenuated redundant} relative to a local signal $G_1l$ corresponding to the generator $G_1: \mathbf{2} \rightarrow \mathbf{Meas}$ iff there exists a natural isomorphism $(h,\phi): G_1l \rightarrow G_2l$ between them, where the measurable map $\phi : (I,\Sigma_I) \rightarrow (J,\Sigma_J)$ given by $\phi(i) = Si+T, \forall i \in I$ and $h: (\Bbb R,\Sigma_{\Cal B}) \rightarrow (\Bbb R,\Sigma_{\Cal B})$ being a linear Borel measurable function both together model the affine transformation of domain of $G_1l$ signal into the domain of $G_2l$ signal along with amplification/attenuation of $G_1l$ into $G_2l$. 
	
	Alternatively, given a signal as faithful functor $F:\mathbf{C} \to \mathbf{Meas}^{\rightarrow}$, a local signal $F(G_2)$ is defined as {\bf affine amplified/attenuated redundant} relative to local signal $F(G_1)$ iff there exists an isomorphism $F(a)= (h,\phi): F(G_1) \rightarrow F(G_2)$ between them, where the measurable map $\phi : (I,\Sigma_I) \rightarrow (J,\Sigma_J)$ given by $\phi(i) = Si+T, \forall i \in I$ and $h: (\Bbb R,\Sigma_{\Cal B}) \rightarrow (\Bbb R,\Sigma_{\Cal B})$ being a linear Borel measurable function both together model the affine transformation of domain of $F(G_1)$ signal into the domain of $F(G_2)$ signal along with amplification/attenuation of $F(G_1)$ into $F(G_2)$.
\end{definition}

\subsection{Examples: Groupoid base structures in signals. Category-theoretic explanation for superior compression of differential encoding standards.}
\label{sec:redundacy_examples}

In this section, first we demonstrate the inherent natural generative groupoid structures in real-world image and audio signals. Figure~\ref{fig:5} is a sample iconic image of an umbrella while Figure~\ref{fig:7} is another sample image of a real cameraman. By assigning an underlying generator corresponding to each pixel value one can immediately notice that each image is generated by maximal transfer (isomorphism) of existing generators in the language of~\cite{Leyton01}. This is often the case with line drawings, iconic image, text etc. Also in some real-world photographic images there is a lot of groupoid structure resulting from maximization of transfer such as in the case of background sky, the grass and coat of cameraman. New images as in Figures~\ref{fig:6},~\ref{fig:9}, are 3D surface plots while Fig~\ref{fig:9} is 2D plot of images produced by subtracting from each pixel value the previous occurring neighboring pixel value often termed as differential coding to obtain highly compressible or sparse images as observed. These signals are representations of original signals by using a signal-space as the category $L^2(\mathbf{C})$ as where $\mathbf{C} \subseteq \mathbf{LocMeasure}$ as discussed in Sections~\ref{sec:sig_matched_cat_1},~\ref{sec:sig_matched_cat_2} and Equation~\ref{rep_prototype}. Notice that only where isomorphisms of the underlying generators break down or in other words generators not related to each other, non-sparse signal values $\Delta_I, \Delta_J, \Delta_K, ....$ are observed such as near the boundary of umbrella etc. Thus using category-theory we can explain the known heuristic yielding better compression of PNG standard in image types such as line drawings, iconic image, text etc. compared to classic JPEG standard at a given SNR.

\begin{figure}[!t]
	\centering
	\includegraphics[width=1.0in]{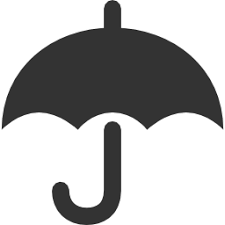}
	\caption{A simple iconic image of umbrella~\cite{umbrella}}
	\label{fig:5}
\end{figure}

\begin{figure}[!t]
	\centering
	\resizebox*{\textwidth}{!}{
		\includegraphics[width=5in]{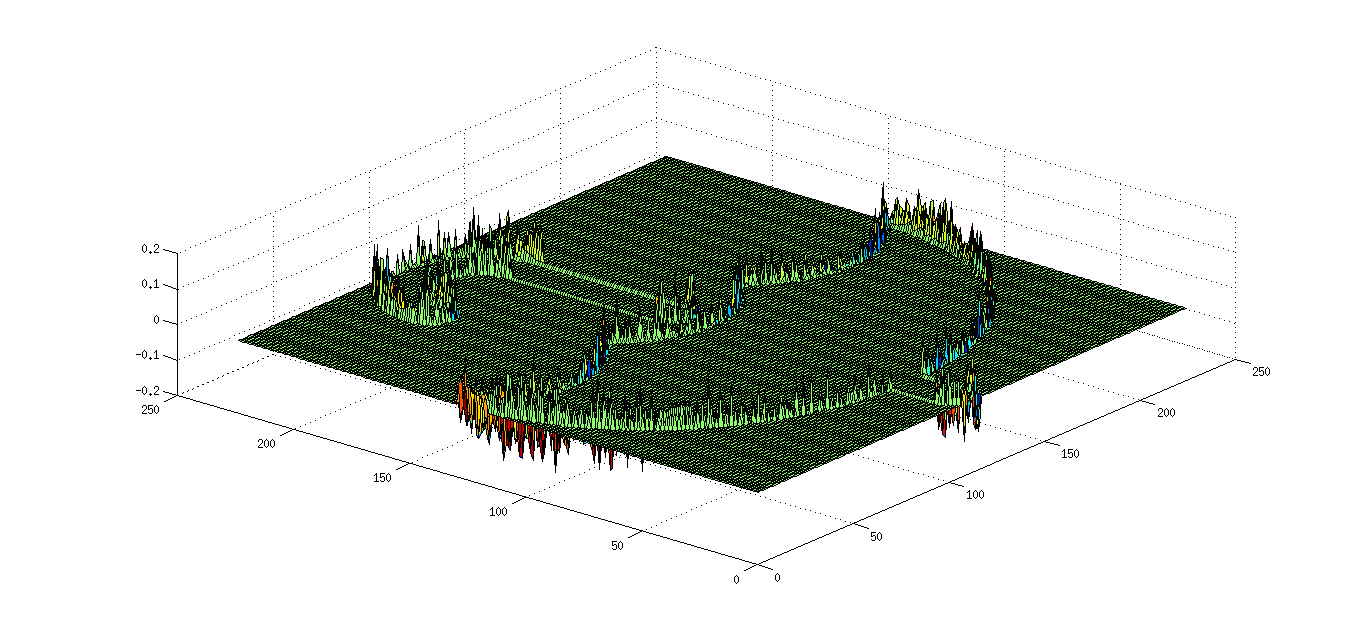}
	}
	\caption{Surface 3D plot of filtered umbrella image}
	\label{fig:6}
\end{figure}

\begin{figure}[!t]
	\centering
	\includegraphics[width=3in]{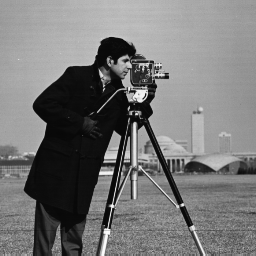}
	\caption{Sample photographic cameraman image~\cite{cameraman}}
	\label{fig:7}
\end{figure}

\begin{figure}[!t]
	\centering
	\includegraphics[width=6in]{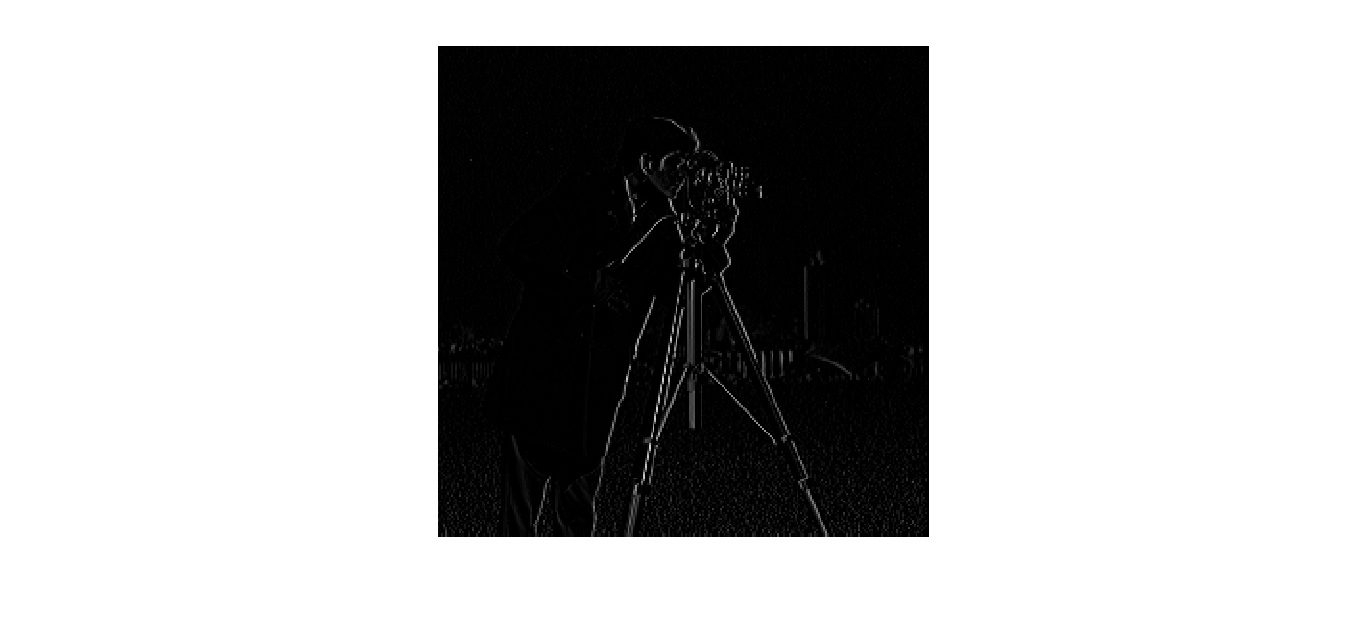}
	\caption{2D plot of filtered cameraman}
	\label{fig:8}
\end{figure}

\begin{figure}[!t]
	\centering
	\resizebox*{\textwidth}{!}{
		\includegraphics[width=5in]{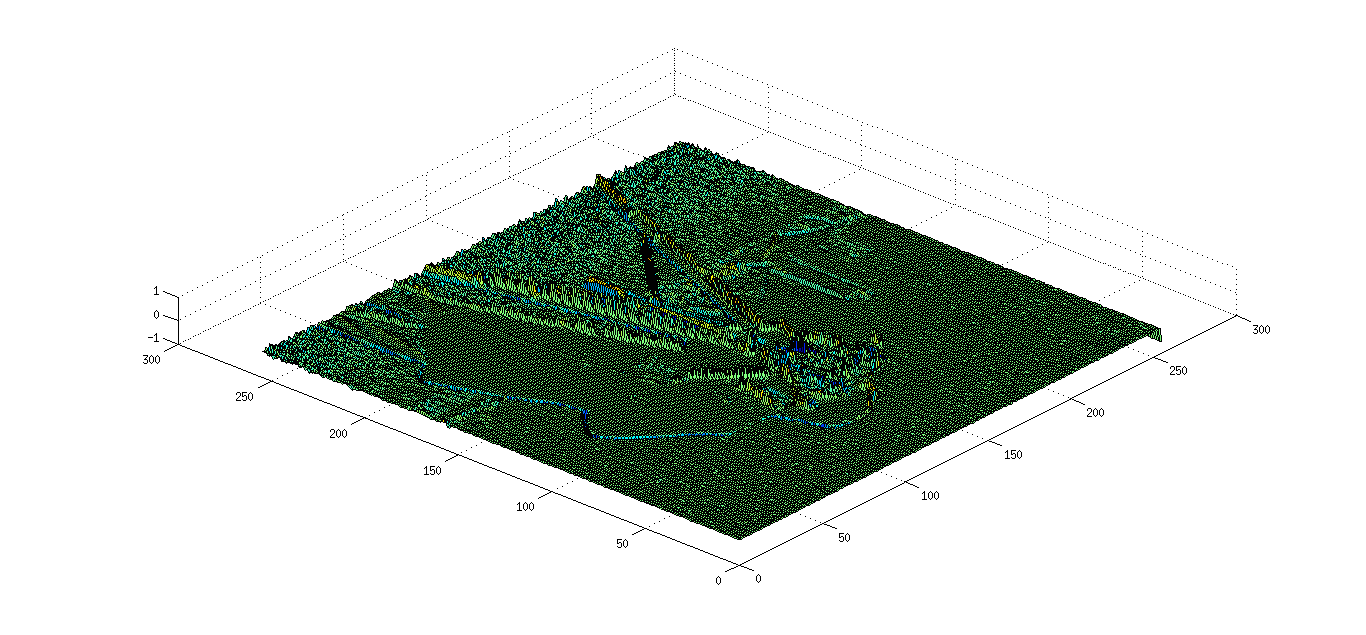}
	}
	\caption{Surface 3D plot of filtered cameraman}
	\label{fig:9}
\end{figure}

However such kind of transfers are also inherently present in other kinds of signals such as music etc since the underlying melodies are maximally transfered by artists as argued in~\cite{Leyton01}. We illustrate these groupoid structures in a real-world sample audio BBC countdown compilation signal as shown in Figure~\ref{fig:10}. The three marked rectangles illustrate the translation transfer of particular melody twice in a window of 30 seconds as shown superimposed on each-other in Figure~\ref{fig:11}  

\begin{figure}[!t]
	\centering
	\resizebox*{\textwidth}{!}{
		\includegraphics{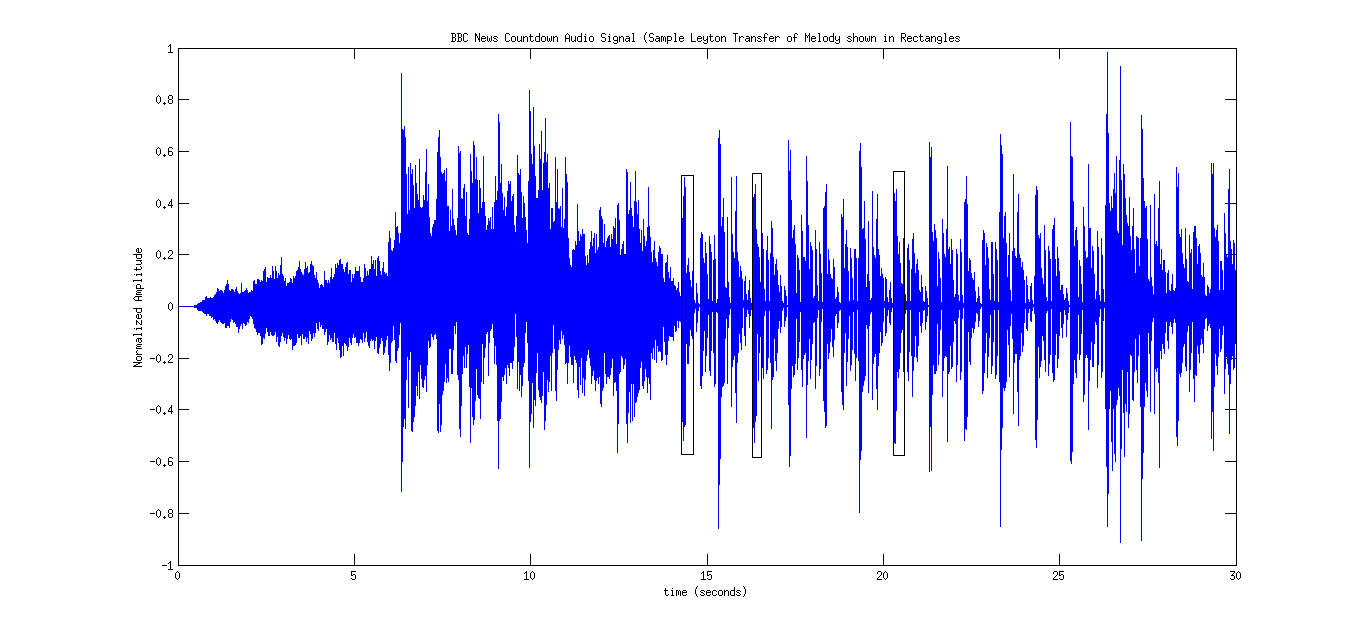}
	}
	\caption{An audio signal of BBC News countdown with sample translation transfer of local melodies in boxed rectangles~\cite{bbc}}
	\label{fig:10}
\end{figure}

\begin{figure}[!t]
	\centering
	\resizebox*{\textwidth}{!}{
		\includegraphics{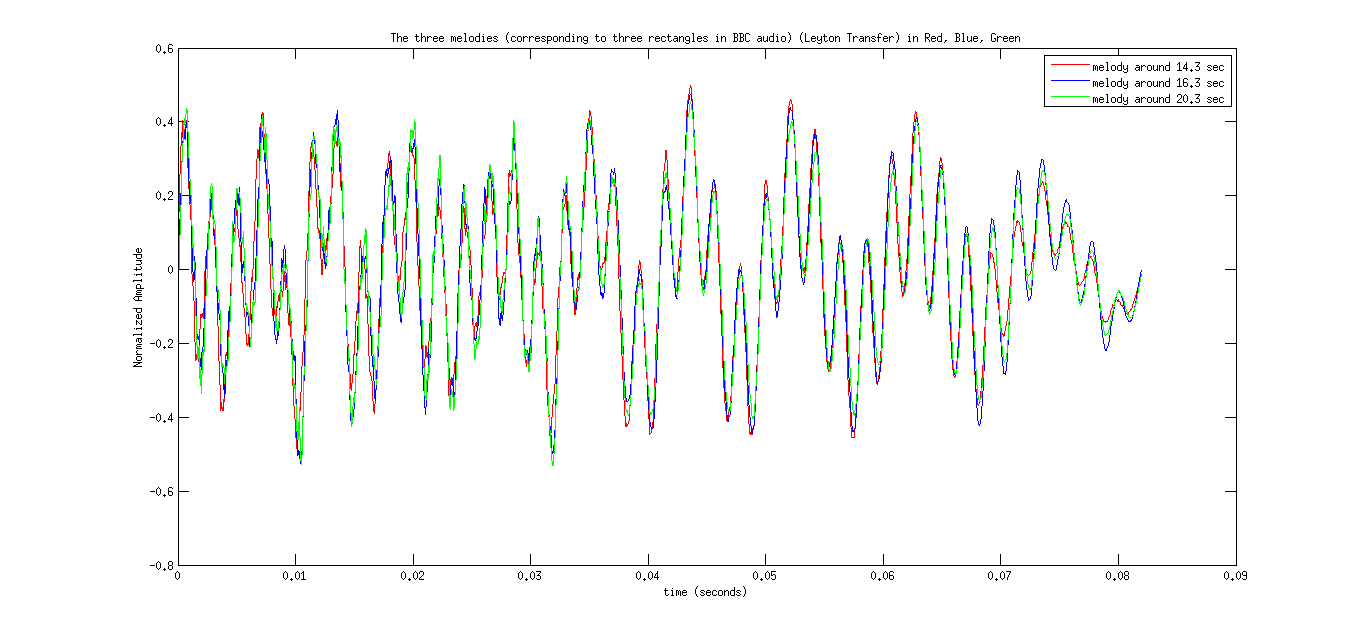}
	}
	\caption{Transfered melodies in sample audio superimposed.}
	\label{fig:11}
\end{figure}

\subsection{A base structured category perspective and a towards unified view of signal representation techniques}
By taking into account that every signal of interest has an appropriate generative model through intuition of \cite{Leyton01} which we have mathematically modeled as some (abstract) category $\mathbf{C}$, the signal model becomes functorial or precisely as $F:\mathbf{C} \to \mathbf{D}$. In~\cite{salilp1} we argued that the construction $(F,\mathbf{C},\mathbf{D})$ offers a distinctly refreshingly new perspective on any ordinary functor ${F}: \mathbf{C} \rightarrow \mathbf{D}$. It is the mathematical expression which combines the intuitive generative perspective of Leyton's Theory in Psychology~\cite{Leyton86a},~\cite{Leyton01} along with well-known Grothendieck's relative point of view~\cite{grorel} by treating it as a fibration. The distinct perspective was that while general fibred categories carry both base category as well as fibre category structure through cartesian and vertical arrows, the base structured categories carry only the base structure concealing the vertical structure within the objects since the vertical arrows are just identity arrows.

Recalling that every functor being a structure preserving morphism we can model same signal equivalently as the base structured category $(F,\mathbf{C},\mathbf{D})$ where the first component of its objects and morphisms signifies natural generators and transfers in generative theory of \cite{Leyton01}.  However by considering objects $F(G_1),F(G_2),...$ set-theoretically as vectors in linear spaces, it becomes possible to view the signal spaces as concrete categories. We illustrate in the case of signal as a functor $F:\mathbf{C} \to \mathbf{Meas}^{\rightarrow}$ or equally as the subcategory $F(\mathbf{C})$ of $\mathbf{Meas}^{\rightarrow}$.  Now we take both set-theoretic and category-theoretic viewpoints simultaneously by noting that $(f\restr I)$ is both an object of $\mathbf{Meas}^{\rightarrow}$ and an element of Riesz space $\mathfrak{L^0}_I$ using the underlying field property of $\Bbb R$. But $\mathfrak{L^0}_I$ is itself an object of category $\mathbf{Riesz}$ thus in spirit of classical signal space if we say global signal $f \in \mathfrak{L^0}_{\Bbb R}$ then functorial signal $F(\mathbf{C})$ belongs to a {\bf signal-matched subcategory} $\mathbf{S}$ of $\mathbf{Riesz}$. The local signal signals $(f\restr I) \in \mathfrak{L^0}_I$, $(f\restr J) \in \mathfrak{L^0}_J$,... where $\mathfrak{L^0}_I$, $\mathfrak{L^0}_J$ form objects of that subcategory while the arrows $(h,\phi)$ giving rise to Riesz homomorphisms $T$ between these objects in certain cases (see Propositions~\ref{prop:L0_func_space},~\ref{prop:L2_func_space} and~\ref{prop:lin_bor_fun} for such cases)  form the arrows of that subcategory. Often since in classic signal representations we don't distinguish between $f,g \in \mathfrak{L^0}_{\Bbb R}$ where $f\eae g$ we consider $(f\restr I)^{\ssbullet}$ by passing to quotient spaces ${L^0}(\mu_I)$ or ${L^2}(\mu_I)$ (with additional property that $|(f\restr I)|^2$ is integrable) by defining additional functor $F': \mathbf{S} \rightarrow \mathbf{Riesz}$ or $F': \mathbf{S} \rightarrow \mathbf{Hilb}$ which sends objects such as $\mathfrak{L^0}_I$ to ${L^0}(\mu_I)$ (or ${L^2}(\mu_I)$) and homomorphisms $g\mapsto h^{-1}g\phi:\mathfrak{L}^0_J \to\mathfrak{L}^0_I$ to $g^{\ssbullet}\mapsto (h^{-1}g\phi)^{\ssbullet}:{L}^0(\mu_J) \to {L}^0(\mu_I)$ or $g^{\ssbullet}\mapsto (h^{-1}g\phi)^{\ssbullet}:{L}^2(\mu_J) \to {L}^2(\mu_I)$. Since there is no standard arrow category in which $(f\restr I)^{\ssbullet}$ of $L^0(\mu_I)$ could be considered as an object, we have used an additional functor $F'$. We term category $\mathbf{S}$ or $F'(\mathbf{S})$ as {\bf signal matched space}, reflecting the intuitive fact that such a category is compatible with the generative structure of the specific signal to be represented in contrast to a generically fixed space for all signals.

As a simple example of translational redundancy in a one-dimensional (global) time signal as shown in Figure~\ref{fig:redundancy}. The functor $F:\mathbf{C} \to \mathbf{Meas}^{\rightarrow}$ and duality is illustrated in Equation~\ref{eqn:redundancy}.

\begin{equation}
\label{eqn:redundancy}
\xymatrix{
	(\Bbb R,\Sigma_{\Cal B}) \ar[r]^{id} & (\Bbb R,\Sigma_{\Cal B})  \\
	(I,\Sigma_{I}) \ar[r]_{\phi} \ar@{->}[u]^{f = g\phi} & (J,\Sigma_{J}) \ar@{->}[u]_{f'= g} \\
	G_1 \ar[r]^{a} & G_2
} 
\quad
\xymatrix{
	(\Sigma_{\Cal B},\symmdiff,\cap) \ar@{->}[d]_{f^{op} = \pi \xi} & (\Sigma_{\Cal B},\symmdiff,\cap) \ar[l]_{id} \ar@{->}[d]^{f'^{op} = \xi (E \mapsto g^{-1}[E])} \\
	(\Sigma_{I},\symmdiff,\cap)   & (\Sigma_{J},\symmdiff,\cap) \ar[l]^{\pi} \\
	G_1  & G_2 \ar[l]_{a^{op}}
}
\end{equation}

In this case the objects $F(G_1),F(G_2),...$ belong to linear spaces $\mathfrak{L}^0$ of real-valued measurable functions on $I$ and $J$ respectively. Now since $a: G_1 \rightarrow G_2$ in this case is translation, $F(a) = (\phi,id)$ and therefore the linear transformation $T$ in opposite direction is purely determined by $\phi$. The spaces $\mathfrak{L}^0$ of arrows containing $F(G_1)$, $F(G_2),...$ become the objects of such a signal space while the arrows such as $F(a),F(a'),...$ give rise to the transformations of those spaces in contravariant reverse way. In case of $a: G_1 \rightarrow G_2$ purely determined by $\phi$ where $\phi$ also has the additional property of being inverse-measure-preserving (which includes the popular cases of translation or scaling between $I$ and $J$) we can consider measures on spaces $(I,\Sigma_{I})$ $(J,\Sigma_{J})$ and consider equivalence classes $f^{\ssbullet}$, $f'^{\ssbullet}$ of arrows under $\eae$. Then one can invoke the functorial nature of $L^0$ or $L^2$ as studied in Section~\ref{sec:funcsigspace} and {\bf signal matched space} becomes the image subcategory of $L^2|_{\mathbf{C}}:$ $\mathbf{C}$ $\rightarrow \mathbf{BanLatt}$ or $L^2|_{\mathbf{C}}:$ $\mathbf{C}$ $\rightarrow \mathbf{Hilb}$ if we are interested only in linear and norm structure instead of additional lattice and multiplicative structures. In conclusion both local signals such as $f\restr I$ and local spaces $L^2(I,\Sigma_{I},\mu_I)$ containing $f\restr I^{\ssbullet}$ have {\bf dual structures viz the translation structure coming from base arrow and the linear structure coming from the underlying sets with additional properties} in contrast to pure linear structure in classic representation techniques.

At the end of this paper we briefly hint at the possible potential of unifying existing signal representation techniques. By variation of base category to include graph with edge composition or topology through $\mathfrak{Opn}_X$ or faces of simplical complex instead of measure and varying codomain category to $\mathbf{Vect}$ instead of $\mathbf{Riesz}$ appropriately it should be possible cast these~\cite{robinson-nt},~\cite{shuman_SPM_2013} as sort of special cases of functorial framework in near future. The case of treating the Weyl Hiesenberg or affine Group as base category and codomain category as $\mathbf{Hilb}$ covers the Gabor or wavelet signal representation in $L^2(\Bbb R)$ as studied in classic paper~\cite{heilwalnut}. Further the isomorphisms in these categories could potentially model other types of structural redundancies in the similar fashion we used isomorphisms of measurable spaces. 
\section{Conclusion and Extension}
\label{conclude}
The paper developed concept first presented as an abstract at CT 2016 \cite{salilct2016}. Table~\ref{table:summary_sig_rep} summarizes the comparison between classic signal representation framework with the new proposed functorial framework. By modeling a source with memory as a groupoid in tune with generative intuition we seek to capture isomorphic relationships between waveforms generated by the source directly impacting the amount of perceived information in signal. The memoryless source as a special case having no interdependencies of successive messages is modeled as a discrete category. In functorial framework, the relative perspective offered by category theory provides an authentic tool to model interdependence between sub-signals. This leads to arrow-theoretic structural definition of redundancy. It also becomes possible to understand compression in a natural category-theoretic way. By treating objects as trivial categories we can utilize additional set-theoretic properties of objects independently along with treating them simply as objects of enclosing category. This novel concept of using set-theory alongside category-theory simultaneously was utilized in the proposed functorial framework.

\bibliography{p1.bib}
\bibliographystyle{acm}

\appendix
\label{sec:measuretheory}
The references for appendices pertaining measure-theory and functional analysis are~\cite{fremlinmt1},~\cite{fremlinmt2},~\cite{fremlinmt3}~\cite{fremlinmt4}.

\subsection{Appendix for objects of $\mathbf{C}$}
First we quickly recall the definitions of $\sigma$-algebra and measure space. 

\begin{definition}%(Sigma algebra)
	If $X$ is a set, then a {\bf $\sigma$-algebra} { (or a {\bf $\sigma$-field})}
	of subsets of $X$ is a family $\Sigma_X \subseteq\Cal PX$ of subsets of $X$ such that
	
	\qquad(i) $\emptyset\in\Sigma_X$;
	
	\qquad(ii) for every $E\in\Sigma_X$, its complement $X\setminus E$ in $X$
	belongs to $\Sigma_X$;
	
	\qquad(iii) for every (countable) sequence $\langle E_n\rangle_{n\in\Bbb N}$ in
	$\Sigma_X$, its union $\bigcup_{n\in\Bbb N}E_n$ belongs to $\Sigma_X$.
\end{definition}

\begin{definition}
	
	A {\bf measure space} is a triple $(X,\Sigma_X,\mu)$, where
	
	\quad(i) $X$ is a set;
	
	\quad(ii) $\Sigma_X$ is a $\sigma$-algebra of subsets of $X$;
	
	\quad(iii) $\mu:\Sigma_X\to[0,\infty]$ is a function such that
	
	\qquad($a$) $\mu\emptyset=0$;
	
	\qquad($b$) if $\langle E_n\rangle_{n\in\Bbb N}$ is any disjoint
	sequence in $\Sigma_X$, then $\mu(\bigcup_{n\in\Bbb N}E_n)=\sum_{n=0}^{\infty}\mu E_n$.
	
	The function $\mu$ is called a {\bf measure on $X$}, and the members of $\Sigma_X$ are called {\bf measurable} sets.
\end{definition}

Let $(X,\Sigma_X,\mu)$ be any measure
space. A set $N\subseteq X$ is {\bf negligible} (or
{\bf null} or {\bf $\mu$-negligible}) if
there exists a set $E\subseteq\Sigma_X$ such that $N\subseteq E$ and
$\mu E=0$.

\begin{proposition}
	\label{propneg}
	
	Let $\Cal N$ be the family of negligible subsets of $X$. Then 
	
	\qquad(i) $\emptyset\in\Cal N$;
	
	\qquad(ii) if $A\subseteq B\in\Cal N$ then $A\in\Cal N$;
	
	\qquad(iii) if $\langle A_n\rangle_{n\in\Bbb N}$ is any sequence
	in $\Cal N$, $\bigcup_{n\in\Bbb N}A_n\in\Cal N$.
	
\end{proposition}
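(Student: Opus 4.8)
The plan is to verify that $\Cal N$ is a $\sigma$-ideal of $\Cal PX$ by checking the three clauses separately; each reduces to the definition of negligibility combined with the axioms for a $\sigma$-algebra and for a measure, and the only clause requiring any real argument is (iii).

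For (i) I would simply observe that $\emptyset\in\Sigma_X$ and $\mu\emptyset=0$, so $\emptyset$ is covered by the measurable null set $\emptyset$ itself; hence $\emptyset\in\Cal N$. For (ii), given $A\subseteq B\in\Cal N$, I unfold the definition of $B\in\Cal N$ to obtain $E\in\Sigma_X$ with $B\subseteq E$ and $\mu E=0$; then $A\subseteq B\subseteq E$, so the same $E$ witnesses $A\in\Cal N$. No property of $\mu$ beyond the existence of this covering set is used here.

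For (iii), I would, for each $n\in\Bbb N$, choose $E_n\in\Sigma_X$ with $A_n\subseteq E_n$ and $\mu E_n=0$, and set $E=\bigcup_{n\in\Bbb N}E_n$. By axiom (iii) of a $\sigma$-algebra $E\in\Sigma_X$, and clearly $\bigcup_{n\in\Bbb N}A_n\subseteq E$, so it remains only to show $\mu E=0$. Since the measure axioms supply additivity only along disjoint sequences, I disjointify: put $F_n=E_n\setminus\bigcup_{i<n}E_i$, so that $\langle F_n\rangle_{n\in\Bbb N}$ is a disjoint sequence in $\Sigma_X$ with $\bigcup_{n\in\Bbb N}F_n=E$ and $F_n\subseteq E_n$; then $\mu F_n\le\mu E_n=0$ (monotonicity of $\mu$, which follows in one line from additivity applied to the disjoint pair $F_n$ and $E_n\setminus F_n$), and therefore $\mu E=\sum_{n=0}^{\infty}\mu F_n=0$. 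Hence $\bigcup_{n\in\Bbb N}A_n\in\Cal N$.

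The main (indeed the only) obstacle is the observation that countable subadditivity of $\mu$ is not one of the stated axioms and must be extracted via the disjointification trick, together with the auxiliary fact that $\mu$ is monotone. Once these are in place the proposition is entirely routine, amounting to little more than rewriting the definition of negligibility three times.
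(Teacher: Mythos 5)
Your proof is correct and follows essentially the same route as the paper: clause (i) is immediate, clause (ii) reuses the covering set for $B$, and clause (iii) covers $\bigcup_n A_n$ by the measurable set $E=\bigcup_n E_n$ and shows $\mu E=0$. The only difference is that you explicitly derive the countable subadditivity $\mu E\le\sum_n\mu E_n$ from the disjointification $F_n=E_n\setminus\bigcup_{i<n}E_i$ together with monotonicity, whereas the paper simply asserts that inequality; your extra step is a genuine (if small) improvement in rigor, and the same disjointification device reappears in the paper's proof of Theorem~\ref{meas_alg}.
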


These are easy to prove applying the basic definition of negligible sets and properties of measure spaces; Of course $\mu\emptyset=0$. Next there exists a set $E\in\Sigma_X$ such that $\mu E=0$ and $B\subseteq E$ which implies $A\subseteq E$. Finally For each $n\in\Bbb N$ we choose an $E_n\in\Sigma_X$ such that $A_n\subseteq E_n$ and $\mu E_n=0$. But $E=\bigcup_{n\in\Bbb N}E_n\in\Sigma_X$ by definition; while $\bigcup_{n\in\Bbb N}A_n\subseteq\bigcup_{n\in\Bbb N}E_n$, and since $\mu(\bigcup_{n\in\Bbb N}E_n)\le\sum_{n=0}^{\infty}\mu E_n$, so $\mu(\bigcup_{n\in\Bbb N}E_n)=0$ again implying $\bigcup_{n\in\Bbb N}A_n \in\Cal N $.  

In general a family of sets satisfying the conditions (i)-(iii) of Proposition~\ref{propneg} is called a {\bf $\sigma$-ideal} of sets. In particular $\Cal N$ is called the {\bf null ideal} of the measure $\mu$.

Conventionally the term {measurable space} is used to mean a pair $(X,\Sigma_{X})$ where $X$ is a set along with $\Sigma_{X}$ as a $\sigma$-algebra of its subsets. However for the purpose of this sequel we shall avoid using this terminology. Here by the phrase {\bf measurable space} we mean the triple $(X,\Sigma_{X},\Cal N)$ where $\Cal N$ is the null ideal of the measure $\mu$.

\begin{proposition}
	\label{eq_rel_ideal}
	Let $(X,\Sigma_X,\mu)$ be a measure space and for
	$E$, $F\in\Sigma_X$ write $E\sim F$ if $E\symmdiff F \in \Cal I$ where $\Cal I$ is $\sigma$-ideal of subsets of $X$. We now  show that
	$\sim$ is an equivalence relation on $\Sigma_X$.   Let $\frak B$ be the
	set of equivalence classes in $\Sigma_X$ for $\sim$;  for $E\in\Sigma_X$,
	write $E^{\ssbullet}\in\frak B$ for its equivalence class. We also show that
	there is a partial ordering $\subseteq$ on $\frak B$ defined by saying
	that, for $E$, $F\in\Sigma_X$,
	
	\centerline{$E^{\ssbullet}\subseteq F^{\ssbullet}
		\,\iff\,E\setminus F \in \Cal I$.}
\end{proposition}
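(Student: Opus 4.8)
The plan is to establish Proposition~\ref{eq_rel_ideal} in three routine stages: first that $\sim$ is an equivalence relation on $\Sigma_X$, then that the relation $E^{\ssbullet}\subseteq F^{\ssbullet}$ is well-defined on equivalence classes, and finally that this well-defined relation is a partial order. All three stages rest entirely on elementary Boolean-algebra identities for the symmetric difference $\symmdiff$ together with the three closure properties (i)--(iii) of a $\sigma$-ideal (equivalently the properties of $\Cal N$ in Proposition~\ref{propneg}), so no measure-theoretic input beyond $\Cal I$ being a $\sigma$-ideal is needed.

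For the equivalence relation, reflexivity follows from $E\symmdiff E=\emptyset\in\Cal I$ by property (i); symmetry follows because $E\symmdiff F=F\symmdiff E$; transitivity follows from the inclusion $E\symmdiff G\subseteq(E\symmdiff F)\cup(F\symmdiff G)$, so that if both $E\symmdiff F\in\Cal I$ and $F\symmdiff G\in\Cal I$ then their union lies in $\Cal I$ by property (iii) (a two-term union, padding with $\emptyset$), hence $E\symmdiff G\in\Cal I$ by property (ii). This produces the quotient set $\frak B$ and the class map $E\mapsto E^{\ssbullet}$.

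For well-definedness of $\subseteq$, I would suppose $E_1^{\ssbullet}=E_2^{\ssbullet}$ and $F_1^{\ssbullet}=F_2^{\ssbullet}$, i.e.\ $E_1\symmdiff E_2\in\Cal I$ and $F_1\symmdiff F_2\in\Cal I$, and show that $E_1\setminus F_1\in\Cal I$ iff $E_2\setminus F_2\in\Cal I$. The key combinatorial fact is the inclusion $(E_1\setminus F_1)\symmdiff(E_2\setminus F_2)\subseteq(E_1\symmdiff E_2)\cup(F_1\symmdiff F_2)$; since the right-hand side lies in $\Cal I$, so does the left, and then $E_1\setminus F_1$ and $E_2\setminus F_2$ differ by a member of $\Cal I$, so one belongs to $\Cal I$ exactly when the other does. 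Thus the stipulation $E^{\ssbullet}\subseteq F^{\ssbullet}\iff E\setminus F\in\Cal I$ depends only on the classes.

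Finally, for the partial-order axioms: reflexivity is $E\setminus E=\emptyset\in\Cal I$; transitivity uses $E\setminus G\subseteq(E\setminus F)\cup(F\setminus G)$ with properties (ii),(iii); antisymmetry uses the identity $E\symmdiff F=(E\setminus F)\cup(F\setminus E)$, so if both $E\setminus F\in\Cal I$ and $F\setminus E\in\Cal I$ then $E\symmdiff F\in\Cal I$, i.e.\ $E^{\ssbullet}=F^{\ssbullet}$. I expect the only mild obstacle to be bookkeeping the set-theoretic inclusions involving $\symmdiff$ and $\setminus$ cleanly; none of these are deep, but stating each inclusion explicitly (rather than waving at "Boolean algebra") keeps the argument self-contained. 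Everything else is immediate from the $\sigma$-ideal axioms already recorded in Proposition~\ref{propneg}.
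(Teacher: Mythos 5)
Your proof is correct and follows essentially the same route as the paper's: a direct verification of the equivalence-relation and partial-order axioms from the three $\sigma$-ideal closure properties, using the standard inclusions relating $\symmdiff$ and $\setminus$ (the paper appeals to ``Venn diagrams'' where you write the inclusions out). The one point where you go beyond the paper is the explicit well-definedness check for $E^{\ssbullet}\subseteq F^{\ssbullet}$ via the inclusion $(E_1\setminus F_1)\symmdiff(E_2\setminus F_2)\subseteq(E_1\symmdiff E_2)\cup(F_1\symmdiff F_2)$ --- a step the paper's proof silently omits but which is genuinely needed for the relation on $\frak B$ to be independent of representatives, so keep it.
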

\begin{proof}
	Proof : 1. Reflexivity $E\sim E$ since $E\symmdiff E = \emptyset \in \Cal I$.
	2. Symmetry: if $E\symmdiff F \in \Cal I$ then $F\symmdiff E \in \Cal I$ and vice versa.
	3. Transitivity: if $E\symmdiff F \in \Cal I$ and $F\symmdiff G \in \Cal I$ then 
	using venn diagrams and the second clause that if $A\subseteq B\in\Cal I$ then $A\in\Cal I$
	it follows that $E\symmdiff G \in \Cal I$.
	
	Next the axioms for the partial order are easily verified. 
	1. Reflexivity: {$E^{\ssbullet}\subseteq E^{\ssbullet}
		\,\iff\,E\setminus E = \emptyset \in \Cal I$.} 
	2. Antisymmetry: If $E^{\ssbullet}\subseteq F^{\ssbullet}
	\,\iff\,E\setminus F \in \Cal I$ and $F^{\ssbullet}\subseteq E^{\ssbullet}
	\,\iff\,F\setminus E \in \Cal I$ then $E\symmdiff F \in \Cal I$ and therefore
	$E^{\ssbullet} = F^{\ssbullet}$ using the third clause of ideal definition.
	3. Transitivity: If $E^{\ssbullet}\subseteq F^{\ssbullet}
	\,\iff\,E\setminus F \in \Cal I$ and $F^{\ssbullet}\subseteq G^{\ssbullet}
	\,\iff\,F\setminus G \in \Cal I$ then again using venn diagrams and second clause
	of ideal definition, it readily follows that $E^{\ssbullet}\subseteq G^{\ssbullet}
	\,\iff\,E\setminus G \in \Cal I$.
\end{proof}

As a special case when $\Cal I = \Cal N$, we have the following corollary.

\begin{corollary}
	\label{cor:eq_rel_ideal}
	Let $(X,\Sigma_X,\mu)$ be a measure space, and for
	$E$, $F\in\Sigma_X$ write $E\sim F$ if $\mu(E\symmdiff F)=0$.   Show that
	$\sim$ is an equivalence relation on $\Sigma_X$.   Let $\frak B$ be the
	set of equivalence classes in $\Sigma_X$ for $\sim$;  for $E\in\Sigma_X$,
	write $E^{\ssbullet}\in\frak B$ for its equivalence class.   Show that
	there is a partial ordering $\subseteq$ on $\frak B$ defined by saying
	that, for $E$, $F\in\Sigma_X$,
	
	\centerline{$E^{\ssbullet}\subseteq F^{\ssbullet}
		\,\iff\,\mu(E\setminus F)=0$.}
\end{corollary}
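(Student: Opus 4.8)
The plan is to observe that this corollary is nothing more than the $\Cal I = \Cal N$ instance of Proposition~\ref{eq_rel_ideal}, once we know that $\Cal N$ qualifies as a $\sigma$-ideal and that for measurable sets ``negligible'' and ``measure zero'' coincide. First I would invoke Proposition~\ref{propneg}, which establishes exactly that the null ideal $\Cal N$ satisfies conditions (i)--(iii), i.e.\ $\Cal N$ is a $\sigma$-ideal of subsets of $X$ in the sense used in Proposition~\ref{eq_rel_ideal}.

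Next I would reconcile the two descriptions of the relation. For $E,F\in\Sigma_X$ both $E\symmdiff F$ and $E\setminus F$ again lie in $\Sigma_X$, since a $\sigma$-algebra is closed under complements and countable (in particular finite) unions and intersections. For a set $A\in\Sigma_X$ I would check that $A\in\Cal N$ if and only if $\mu A=0$: the ``if'' direction is immediate by taking the witnessing superset to be $A$ itself, and the ``only if'' direction follows because $A\subseteq E$ with $\mu E=0$ forces $\mu A\le\mu E=0$ by monotonicity of $\mu$ (a consequence of countable additivity applied to the disjoint pair $A$, $E\setminus A$). Applying this with $A=E\symmdiff F$ shows $\mu(E\symmdiff F)=0\iff E\symmdiff F\in\Cal N$, and with $A=E\setminus F$ shows $\mu(E\setminus F)=0\iff E\setminus F\in\Cal N$.

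Consequently the relation $E\sim F\iff\mu(E\symmdiff F)=0$ of this corollary is literally the relation $E\sim F\iff E\symmdiff F\in\Cal I$ of Proposition~\ref{eq_rel_ideal} with $\Cal I=\Cal N$, and the proposed order $E^{\ssbullet}\subseteq F^{\ssbullet}\iff\mu(E\setminus F)=0$ is literally the order $E^{\ssbullet}\subseteq F^{\ssbullet}\iff E\setminus F\in\Cal I$ of that proposition. Hence all the assertions---that $\sim$ is an equivalence relation on $\Sigma_X$, that $\frak B$ is its set of equivalence classes, and that $\subseteq$ is a well-defined partial order on $\frak B$ (reflexive, antisymmetric, transitive)---follow directly from Proposition~\ref{eq_rel_ideal}, with no further work. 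There is essentially no obstacle here; the only point deserving care is the brief verification that, inside $\Sigma_X$, negligibility and vanishing measure agree, which is the bridge that lets us quote the more general proposition verbatim.
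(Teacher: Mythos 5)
Your proposal is correct and matches the paper's own treatment: the paper introduces the corollary with exactly the remark that it is the special case $\Cal I=\Cal N$ of Proposition~\ref{eq_rel_ideal}, relying on Proposition~\ref{propneg} for $\Cal N$ being a $\sigma$-ideal. Your extra step checking that, for sets in $\Sigma_X$, negligibility coincides with having measure zero is a small but worthwhile piece of care that the paper leaves implicit.
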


\subsection{Appendix for objects of $\mathbf{C}^{op}$}

To understand and utilize the dual relationship between the categories involving measure spaces and measure algebras, first we need to briefly review the intuition relating sets to and Boolean algebras. 	

%\centerline{\bf Section: Recap of Intuitive relationship between Sets and Boolean algebras}

\begin{definition}
	If $X$ be a set, then an {\bf algebra} or a {\bf field} is a family $\Cal A\subseteq\Cal PX$  of
	subsets of $X$ such that
	
	\inset{(i) $\emptyset\in\Cal A$;
		
		(ii) for every $A\in\Cal E$, its complement $X\setminus A$
		belongs to $\Cal A$;
		
		(iii) for every $E$, $F\in\Cal A$, $E\cup F\in\Cal A$. (closed under finite unions).}
\end{definition}

Since closure under countable unions implies closure under finite unions; every $\sigma$-algebra of subsets of $X$ is always
an algebra of subsets of $X$.

A {\bf Boolean algebra} is a Boolean ring $(\frak B,+,.)$ (Boolean means $b^2=b$ for every $b\in\frak B$)
with a multiplicative identity $1_{\frak B}$.

\begin{proposition}
	If $\Cal A\subseteq\Cal PX$ is an {\bf algebra} of subsets of any set $X$ then $(\Cal A,\symmdiff,\cap)$ is a Boolean algebra;  its additive identity $0_{\Cal A}$ is $\emptyset$ while its multiplicative identity $1_{\Cal A}$ is $X$. Here the operations $\symmdiff$, $\cup$ are the usual operations of set theory.
\end{proposition}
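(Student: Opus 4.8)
**Proof plan for the statement "If $\Cal A\subseteq\Cal PX$ is an algebra of subsets of $X$ then $(\Cal A,\symmdiff,\cap)$ is a Boolean algebra."**

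The plan is to verify directly that $(\Cal A,\symmdiff,\cap)$ satisfies the definition of a Boolean algebra recalled just above, namely that it is a commutative ring with identity in which every element is idempotent. First I would check that $\Cal A$ is closed under the two operations: closure under $\cap$ follows from the algebra axioms via $E\cap F = X\setminus((X\setminus E)\cup(X\setminus F))$ together with axioms (ii) and (iii), and closure under $\symmdiff$ follows since $E\symmdiff F = (E\setminus F)\cup(F\setminus E) = (E\cap(X\setminus F))\cup(F\cap(X\setminus E))$, which lies in $\Cal A$ by the closures already established. This uses only the three defining clauses of an algebra of sets.

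Next I would verify the ring axioms one at a time, all of which are elementary set-theoretic identities. For the additive structure: $\symmdiff$ is commutative and associative (the standard fact that symmetric difference of sets is associative, most cleanly seen by viewing subsets as functions to $\Bbb Z/2$), $\emptyset$ is the additive identity since $E\symmdiff\emptyset = E$, and every element is its own additive inverse since $E\symmdiff E = \emptyset$. For the multiplicative structure: $\cap$ is commutative and associative, $X$ is the multiplicative identity since $E\cap X = E$ (here I use $X\in\Cal A$, which follows from $\emptyset\in\Cal A$ and axiom (ii)). Distributivity $E\cap(F\symmdiff G) = (E\cap F)\symmdiff(E\cap G)$ is again a routine Venn-diagram identity. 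Finally, idempotence ("Boolean means $b^2=b$") is immediate: $E\cap E = E$. Assembling these gives that $(\Cal A,\symmdiff,\cap)$ is a Boolean ring with multiplicative identity $1_{\Cal A}=X$, hence a Boolean algebra, with additive identity $0_{\Cal A}=\emptyset$ as claimed.

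There is no real obstacle here — the statement is essentially a bookkeeping exercise translating the set-algebra axioms into the ring axioms. The only point deserving a moment's care is the associativity of $\symmdiff$, which is the least visually obvious of the identities; I would either cite the standard correspondence between subsets of $X$ and $\Bbb F_2$-valued functions (under which $\symmdiff$ becomes pointwise addition and $\cap$ pointwise multiplication, making all ring axioms transparent) or simply note it as well-known. Everything else is direct substitution. The remark that closure under countable unions implies closure under finite unions, noted just before the statement, is what lets us apply this proposition in particular to every $\sigma$-algebra $\Sigma_X$, yielding the Boolean algebra $(\Sigma_X,\symmdiff,\cap)$ used throughout the preceding sections.
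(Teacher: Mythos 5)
Your proposal is correct and follows essentially the same route as the paper: a direct axiom-by-axiom verification that $(\Cal A,\symmdiff,\cap)$ is a Boolean ring with identity $X$ and zero $\emptyset$, the individual identities being routine set-theoretic facts (the paper checks them via Venn diagrams where you suggest the $\Bbb F_2$-valued-function correspondence). Your explicit derivation of closure of $\Cal A$ under $\cap$ and $\symmdiff$ from the three defining clauses is a small addition the paper leaves implicit, but it does not change the argument.
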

\begin{proof}
	We verify the axioms, which are all easily established, using Venn diagrams.
	
	1. {$A\symmdiff B \in \Cal A$ for all $A$, $B\in \Cal A$, (additive closure)}. 
	
	2. {$(A\symmdiff B)\symmdiff C=A\symmdiff(B\symmdiff C)$ for all $A$, $B$, $C\in \Cal A$, (additive associativity)} 
	
	3. {$A\symmdiff\emptyset=\emptyset\symmdiff A=A$ for every $A\in \Cal A$, (additive identity)} 
	
	4. {$A\symmdiff A=\emptyset$ for every $A\in \Cal A$,} 
	
	\noindent so that every element of $\Cal A$ is its own inverse in 
	$(\Cal A,\symmdiff)$, and $(\Cal A,\symmdiff)$ is a group; 
	
	5.{$A\symmdiff B=B\symmdiff A$ for all $A$, $B\in \Cal A$, (commutativity)} 
	
	\noindent so that $(\Cal A,\symmdiff)$ is an abelian group; 
	
	6. {$A\cap B\in \Cal A$ for all $A$, $B\in \Cal A$, (multiplicative closure)} 
	
	7. {$(A\cap B)\cap C=A\cap(B\cap C)$ for all $A$, $B$, $C\in \Cal A$, (multiplicative associativity)} 
	
	8. {$A\cap(B\symmdiff C)=(A\cap B)\symmdiff(A\cap C)$, $(A\symmdiff 
		B)\cap C=(A\cap C)\symmdiff(B\cap C)$ for all $A$, $B$, $C\in \Cal A$,(distributivity)} 
	
	\noindent so that $(\Cal A,\symmdiff,\cap)$ is a ring; 
	
	9. {$A\cap A=A$ for every $A\in \Cal A$,} (Boolean axiom $b^2=b$) 
	
	\noindent so that $(\Cal A,\symmdiff,\cap)$ is a Boolean ring; 
	
	10. {$A\cap X=X\cap A=A$ for every $A\in \Cal A$, (multiplicative identity)}  
	
	\noindent so that $(\Cal A,\symmdiff,\cap)$ is a Boolean algebra and 
	$X$ is its identity.
\end{proof} 

Thus every algebra of sets $(\Cal A,\symmdiff,\cap)$ is isomorphic to a Boolean algebra $(\frak B,+,.)$ where the elements $a,b,c,..\in \frak B$ of Boolean algebra are in bijective correspondence with sets $A,B,C,...\in \Cal A$. While the operations satisfy $a+b = A \symmdiff B$; $a.b = A \cap B$; $A \setminus B = a + a.b$; $A \cup B = a + b + a.b$.
Conversely it is also true that every Boolean algebra $(\frak B,+,.)$ is isomorphic to some algebra of sets $(\Cal A,\symmdiff,\cap)$. This is often known as set-theoretical version of Stone’s representation theorem.

Underlying the structure of every Boolean algebra $(\frak B,+,.)$ is a {\bf partially ordered set} $(\frak B,\leq)$ where the partial order is defined by $a \leq b$ whenever $a \vee b = b$ (or equivalently $a \wedge b = a$). Further just as in the case of partial order, we have corresponding notions of upper bounds, lower bounds, join (supremum or least upper bound), meet (infimum or greatest lower bound), (Dedekind) completeness in the case of a Boolean algebra. The binary operations join ($\vee$) and meet ($\wedge$) on elements of algebra correspond to the set-theoretic union and intersection operations. Thus $a + b + a.b = a \vee b$ and $a.b = a \wedge b$ while the lower bound is the additive identity $0_{\frak B}$ or $\emptyset$ its upper bound is the multiplicative identity $1_{\frak B}$ or $X$. 

The notion of completeness fundamental to the theory of measure algebras is briefly reviewed next.

\begin{definition}  If $P$ is a partially ordered set then;
	
	(a) $P$ is {\bf conditionally complete} , or
	{\bf order-complete} or {\bf Dedekind complete} when every
	non-empty subset of $P$ with an upper bound has a least upper bound.
	
	(b) $P$ is {\bf $\sigma$-order-complete}, or {\bf Dedekind $\sigma$-complete} when (i) every countable non-empty subset
	of $P$ with an upper bound has a least upper bound (ii) every countable
	non-empty subset of $P$ with a lower bound has a greatest lower bound.
	
	In the special case of Boolean algebras, one half of this definition implies the other; more precisely we have,
	for any Boolean algebra $\frak B$,
	
	$$\eqalign{\frak B&\text{ is (Dedekind) }\sigma\text{-complete}\cr
		&\iff\text{ every non-empty countable subset of }\frak B
		\text{ has a least upper bound}\cr
		&\iff\text{ every non-empty countable subset of }\frak B
		\text{ has a greatest lower bound}.\cr}$$
\end{definition} 

Note that $\Sigma_X$ is Dedekind $\sigma$-complete, because if
$\sequencen{E_n}$ is any sequence in $\Sigma_X$ then $\bigcup_{n\in\Bbb
	N}E_n = \bigvee \{E_n:n\in\Bbb N\}$ is the least upper bound of $\{E_n:n\in\Bbb N\}$ in $\Sigma_X$.

\begin{definition} 
	\label{def:meas_alg}
	A {\bf measure algebra} is a pair $(\frak B,\bar\mu)$, where 
	
	\quad(i) $\frak B$ is a Dedekind $\sigma$-complete Boolean algebra;
	
	\quad(ii) $\bar\mu:\frak B\to[0,\infty]$ is a function such that 
	
	\qquad($a$) $\bar\mu 0=0$; 
	
	\qquad($b$) whenever $\sequencen{a_n}$ is a disjoint sequence in $\frak B$, 
	$\bar\mu(\sup_{n\in\Bbb N}a_n)=\sum_{n=0}^{\infty}\bar\mu a_n$; 
	
	\qquad($c$) $\bar\mu a>0$ whenever $a\in\frak B$ and $a\ne 0$.
\end{definition}

\begin{proposition} 
	\label{quotient_ba_com}
	Let $\frak B$ be a Dedekind $\sigma$-complete
	Boolean algebra and $I$ a $\sigma$-ideal of $\frak B$.   Then the
	quotient Boolean algebra $\frak B/I$ is Dedekind $\sigma$-complete.
\end{proposition}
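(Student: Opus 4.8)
The plan is to show that the quotient map $q:\frak B\to\frak B/I$, which sends $b\mapsto b^{\scriptscriptstyle\bullet}$, carries countable suprema existing in $\frak B$ to suprema in $\frak B/I$, and then to reduce the existence of arbitrary countable suprema in $\frak B/I$ to this. First I would recall that $q$ is a surjective Boolean homomorphism, so it is order-preserving: if $a\le b$ in $\frak B$ then $a^{\scriptscriptstyle\bullet}\le b^{\scriptscriptstyle\bullet}$. Consequently, given any countable family $\langle a_n\rangle_{n\in\Bbb N}$ in $\frak B$ with supremum $a=\sup_{n\in\Bbb N}a_n$ (which exists since $\frak B$ is Dedekind $\sigma$-complete), the image $a^{\scriptscriptstyle\bullet}$ is an upper bound for $\{a_n^{\scriptscriptstyle\bullet}:n\in\Bbb N\}$ in $\frak B/I$.

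The key step is the reverse inequality: I would take any upper bound $c^{\scriptscriptstyle\bullet}$ of $\{a_n^{\scriptscriptstyle\bullet}:n\in\Bbb N\}$, with $c\in\frak B$, and show $a^{\scriptscriptstyle\bullet}\le c^{\scriptscriptstyle\bullet}$. From $a_n^{\scriptscriptstyle\bullet}\le c^{\scriptscriptstyle\bullet}$ we get $a_n\setminus c\in I$ for each $n$. Since $I$ is a $\sigma$-ideal it is closed under countable suprema, so $d:=\sup_{n\in\Bbb N}(a_n\setminus c)\in I$; here I use that $\frak B$ is Dedekind $\sigma$-complete so this supremum exists. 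Then a routine Boolean computation gives $a\setminus c=(\sup_n a_n)\setminus c=\sup_n(a_n\setminus c)=d\in I$, whence $a^{\scriptscriptstyle\bullet}\le c^{\scriptscriptstyle\bullet}$. This shows $a^{\scriptscriptstyle\bullet}=\sup_{n\in\Bbb N}a_n^{\scriptscriptstyle\bullet}$ in $\frak B/I$.

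To finish, I would observe that every countable subset of $\frak B/I$ is of the form $\{a_n^{\scriptscriptstyle\bullet}:n\in\Bbb N\}$ for some sequence $\langle a_n\rangle$ in $\frak B$ (by surjectivity of $q$), so by the above every countable subset of $\frak B/I$ has a least upper bound, namely $(\sup_n a_n)^{\scriptscriptstyle\bullet}$. By the equivalent characterisation of Dedekind $\sigma$-completeness for Boolean algebras recalled just before Definition~\ref{def:meas_alg} (possession of countable joins is equivalent to possession of countable meets), $\frak B/I$ is Dedekind $\sigma$-complete.

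The main obstacle is the distributivity identity $(\sup_n a_n)\setminus c=\sup_n(a_n\setminus c)$ in a Dedekind $\sigma$-complete Boolean algebra: it is the one place where one must argue rather than quote homomorphism properties. I would handle it by noting that $x\mapsto x\setminus c=x\wedge(1\setminus c)$ is (dual to) multiplication by the fixed element $1\setminus c$, and sequentially order-continuous maps of this form preserve countable suprema — or, more elementarily, by verifying directly that $\sup_n(a_n\setminus c)$ satisfies the universal property of $(\sup_n a_n)\setminus c$ using only that $\frak B$ is a Boolean algebra with the relevant countable joins. Everything else is bookkeeping with the quotient order described in Proposition~\ref{eq_rel_ideal}.
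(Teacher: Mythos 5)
Your proposal is correct and follows essentially the same route as the paper: lift the countable family to $\frak B$, form its supremum there using Dedekind $\sigma$-completeness, and show that the quotient map sends this supremum to the supremum of the images. The only difference is that where the paper cites the sequential order-continuity of the canonical map $b\mapsto b^{\ssbullet}$ (Corollary~\ref{cor:seq_order_con}), you prove the needed preservation of countable suprema directly from the quotient order of Proposition~\ref{eq_rel_ideal} and the countable distributive law $(\sup_n a_n)\setminus c=\sup_n(a_n\setminus c)$, which you correctly identify and justify.
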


We give a sketch of proof. Let $A\subseteq\frak B/I$ be a
non-empty countable set.   For each $u \in A$, choose a $b_u\in \frak B$
such that $u=b_u^{\ssbullet}$.   Then $c=\sup_{u\in A}b_u$ is
surely defined in $\frak B$;  consider $v=c^{\ssbullet}$ in $\frak B/I$.
Now it can be shown that the map $b\mapsto \phi(b)=b^{\ssbullet}$ is sequentially
order-continuous. This means $\phi(c) = \sup_{u \in A} \phi(b_u)$ implying  
$v=c^{\ssbullet} = \sup_{u \in A} b_u^{\ssbullet} =\sup A$. But $A$ is arbitrary, 
so $\frak B/I$ is Dedekind $\sigma$-complete.

\begin{corollary}
	\label{cor:quotient_ba_com}
	Let $X$ be a set, $\Sigma_X$ a $\sigma$-algebra
	of subsets of $X$, and $\Cal I$ a $\sigma$-ideal of subsets of $X$.
	Then $\Sigma_X\cap\Cal I$ is a $\sigma$-ideal of the Boolean algebra
	$(\Sigma_X,\symmdiff,\cap)$, and $\Sigma_X/\Sigma_X\cap \Cal I$ is Dedekind $\sigma$-complete.
\end{corollary}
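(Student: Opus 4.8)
The plan is to reduce the statement to Proposition~\ref{quotient_ba_com} by verifying its two hypotheses for the particular data at hand: that $(\Sigma_X,\symmdiff,\cap)$ is a Dedekind $\sigma$-complete Boolean algebra, and that $\Sigma_X\cap\Cal I$ is a $\sigma$-ideal of that Boolean algebra in the order-theoretic sense. Nothing new has to be proved; the work is entirely in matching up the set-theoretic and Boolean-algebraic formulations.

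First I would recall that $\Sigma_X$, equipped with $\symmdiff$ and $\cap$, is a Boolean algebra whose induced partial order is ordinary inclusion, with join $\vee$ given by $\cup$ and meet $\wedge$ given by $\cap$, and additive (Boolean) zero $\emptyset$. It is Dedekind $\sigma$-complete: for any sequence $\sequencen{E_n}$ in $\Sigma_X$ the union $\bigcup_{n\in\Bbb N}E_n$ again lies in $\Sigma_X$ and is the least upper bound of $\{E_n:n\in\Bbb N\}$, and by the equivalence stated just after the definition of Dedekind $\sigma$-completeness, closure of a Boolean algebra under countable joins already yields full $\sigma$-completeness.

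Next I would check that $\Sigma_X\cap\Cal I$ is a $\sigma$-ideal of $(\Sigma_X,\symmdiff,\cap)$, i.e. a subset containing the Boolean zero, downward closed in the Boolean order, and closed under countable suprema. Using that $\Cal I$ is a $\sigma$-ideal of subsets of $X$ (Proposition~\ref{propneg}) and that $\Sigma_X$ is a $\sigma$-algebra: (i) $\emptyset\in\Sigma_X\cap\Cal I$; (ii) if $E\in\Sigma_X$, $F\in\Sigma_X\cap\Cal I$ and $E\subseteq F$ (that is, $E\le F$ in the Boolean order), then $E\in\Cal I$ by clause (ii) of Proposition~\ref{propneg}, hence $E\in\Sigma_X\cap\Cal I$; (iii) for a sequence $\sequencen{E_n}$ in $\Sigma_X\cap\Cal I$ the Boolean supremum $\sup_{n\in\Bbb N}E_n=\bigcup_{n\in\Bbb N}E_n$ lies in $\Sigma_X$ (since $\Sigma_X$ is a $\sigma$-algebra) and in $\Cal I$ (by clause (iii) of Proposition~\ref{propneg}), hence in $\Sigma_X\cap\Cal I$. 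Thus $\Sigma_X\cap\Cal I$ is a $\sigma$-ideal of the Boolean algebra $\Sigma_X$.

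Finally I would invoke Proposition~\ref{quotient_ba_com} with $\frak B=\Sigma_X$ and $I=\Sigma_X\cap\Cal I$ to conclude that the quotient Boolean algebra $\Sigma_X/\Sigma_X\cap\Cal I$ is Dedekind $\sigma$-complete. The only delicate point — and it is bookkeeping rather than a genuine obstacle — is the translation between the set-theoretic notion of a $\sigma$-ideal of subsets of $X$ and the order-theoretic notion of a $\sigma$-ideal of a Boolean algebra required by Proposition~\ref{quotient_ba_com}; once one notes that the Boolean order on $\Sigma_X$ is inclusion and its countable joins are unions, the two notions coincide and the argument is immediate.
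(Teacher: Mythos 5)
Your proof is correct and follows essentially the same route as the paper: show that $\Sigma_X\cap\Cal I$ is a $\sigma$-ideal of the Boolean algebra $(\Sigma_X,\symmdiff,\cap)$ and then invoke Proposition~\ref{quotient_ba_com}. The only divergence is that you read $\Sigma_X\cap\Cal I$ as the literal intersection of the two families (the measurable members of $\Cal I$), which is the standard reading and the one used elsewhere in the paper, whereas the paper's own proof instead treats it as the family of sets of the form $F\cap E$ with $F\in\Sigma_X$ and $E\in\Cal I$; your verification of the three clauses (containing $\emptyset$, downward closure under inclusion, closure under countable unions) is the cleaner of the two.
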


Ofcourse if we just prove that $\Sigma\cap\Cal I$ is a $\sigma$-ideal then result
automatically follows from the Proposition~\ref{quotient_ba_com}. Now $\Sigma_X\cap\Cal I$ is a family
of subsets of $X$ each having a form $F \cap E$ where $F\in\Sigma_X$ and $E\in\Cal I$.

\quad(i) We have $\emptyset \cap \emptyset \in\Sigma_X\cap\Cal I$ satisfying first clause of $\sigma$-ideal definition.

\quad(ii) Let $A\subseteq (F \cap E)\in\Sigma_X\cap\Cal I$,
then $A$ could be expressed in a form $(F \cap G)$ where $G\subseteq E\in\Cal I$ but since $G\in\Cal I$ hence
$A\in\Sigma_X\cap\Cal I$. satisfying second clause of $\sigma$-ideal definition

\quad(iii) Let $\langle A_n\rangle_{n\in\Bbb N}$ be any sequence in $\Sigma_X\cap\Cal I$, where $A_n = F_n \cap E_n$.
Let $F = (\bigcup_{n\in\Bbb N}F_n) \in \Sigma_X$, then $\bigcup_{n\in\Bbb N}A_n \subseteq  (\bigcup_{n\in\Bbb N}F_n)\cap(\bigcup_{n\in\Bbb N}E_n) \in\Sigma_X\cap\Cal I$. Hence by second clause above $\bigcup_{n\in\Bbb N}A_n \in\Sigma_X\cap\Cal I$ satisfying third clause.

Indeed $\Sigma_X\cap\Cal I$ is a $\sigma$-ideal of the Boolean algebra  $\Sigma_X$, and $\Sigma_X/\Sigma_X\cap \Cal I$ is Dedekind $\sigma$-complete.

\subsection{Atomic measure spaces and dual atomic Boolean algebras.}
\label{sec:atomic}
\begin{definition}
	\label{def:pointsupp}
	Suppose $X$ is any set, and $h:X\to[0,\infty]$ is any function. For every $E\subseteq X$ if we
	declare $\mu E=\sum_{x\in E}h(x)$ taking $\sum_{x\in\emptyset}h(x)=0$ then $(X,\Cal PX,\mu)$ becomes a measure space.
	Such measures $\mu$ are termed as {\bf point-supported}.
\end{definition}

Note that for infinite sets $E$ one can take
$\sum_{x\in E}h(x)=\sup\{\sum_{x\in I}h(x):I\subseteq E$ is finite$\}$,
because every $h(x)$ is non-negative. As an example, if $X=\Bbb N$, then
for countably infinite $E$ it reduces to
$\sum_{n\in E}h(n)=\lim_{n\to\infty}\sum_{m\in E,m\le n}h(m)$.

This special case of $h(x)=1$ for every $x$, is called {\bf counting measure} on $X$. Here $\mu E$
is just the number of points of $E$ given $E$ is finite, else is $\infty$ if $E$ is infinite.  

\begin{definition}
	Let $(X,\Sigma,\mu)$ be a measure space;
	
	\quad(i) A set $A\in\Sigma$ is an {\bf atom} for $\mu$ if $\mu A>0$ and
	whenever $E\in\Sigma$, $E\subseteq A$ either $E$ or $A\setminus E$ is
	negligible.
	
	\quad(ii) Measure $\mu$, or space $(X,\Sigma,\mu)$, is {\bf purely atomic} or {\bf discrete} if whenever
	$E\in\Sigma$ and $E$ is not negligible there is an atom for $\mu$ included in $E$.
	
	\quad(iii)Measure $\mu$, or space $(X,\Sigma,\mu)$, is {\bf atomless} or {\bf diffused} or {\bf continuous} if 
	there are no atoms for $\mu$.
	
\end{definition}

Every counting measure and in general point-supported measure is purely atomic because 
$\{x\}$ must be an atom whenever $\mu\{x\}>0$.

Dually in Boolean algebras the equivalent definitions are reviewed now.

\begin{definition}
	Let $\frak B$ be a Boolean algebra;
	
	\quad(i) An {\bf atom} in $\frak B$ is a non-zero $a\in\frak B$ such
	that the only elements included in $a$ are $0$ and $a$.
	
	\quad(ii) $\frak B$ is {\bf atomless} if it has no atoms.
	
	\quad(iii) $\frak B$ is {\bf purely atomic} if every non-zero element includes an atom.
\end{definition}

\medskip

A counting measure on a set $X$ is always complete, strictly localizable and purely atomic. It is $\sigma$-finite iff $X$ is countable, totally finite iff $X$ is finite, a probability measure iff $X$ is a singleton, and atomless iff $X$ is empty.

%As an example take $X$ to be any uncountable set { (e.g., 
%	$X=\Bbb R$)}, and $\mu$ to be counting measure on $X$ then $\mu$ is purely atomic; since $\{x\}$ 
%is an atom for every $x\in X$, and if $\mu E>0$ then surely $E$ includes 
%$\{x\}$ for some $x$.

\subsection{Appendix for Base Arrows}

A basic result regarding a function between sets states that if $X$ and $Y$ are some sets and $\Tau$ is a
$\sigma$-algebra of subsets of $Y$; then for a function $\phi:X\to Y$,
$\{\phi^{-1}[F]:F\in\Tau\}$ is a $\sigma$-algebra of subsets of $X$.

Again we sketch the basic steps for proof.
1. $A = \{\phi^{-1}[F]:F\in\Tau\}$ 2. $Y\in\Tau, \phi^{-1}[Y] = X \in A$ 
3. $E\in\Tau, \phi^{-1}[E] \in A, (Y\setminus E) \in\Tau, \phi^{-1}[Y\setminus E] = 
X \setminus \phi^{-1}[E] \in A $ 4. $X \setminus \phi^{-1}[Y] = \emptyset \in A $
5. for every sequence $\langle E_n\rangle_{n\in\Bbb N} \in \Tau $ its union $\bigcup_{n\in\Bbb N}E_n \in
\Tau $,$\phi^{-1}[\bigcup_{n\in\Bbb N}E_n] = \bigcup_{n\in\Bbb N}\phi^{-1}[E_n] \in A$.

If we associate $\Sigma_X$, some $\sigma$-algebra of subsets of $X$ along with $X$ then a function $\phi:X\to Y$
is termed as $\Sigma_X$-measurable if $\{\phi^{-1}[F]:F\in\Tau\} \subseteq \Sigma_X$. Hence for a general real-valued
function with Borel $\sigma$-algebra $\Sigma_{\Cal B}$ on the codomain $\Bbb R$ the following constitutes as one of the basic definitions of measure theory. 

\begin{definition}
	\label{def:subspace_sigma_meas_func}
	Let $X$ be any set, $\Sigma_X$ a $\sigma$-algebra
	of subsets of $X$, and $E$ a subset of $X$.   A function $f:E\to\Bbb R$
	is called {\bf $\Sigma_X$-measurable} (or {\bf measurable}) if it
	satisfies any, or equivalently all, of following conditions 
	
	\quad(i) $\{x:f(x)<a\}\in\Sigma_E$ for every $a\in\Bbb R$;
	
	\quad(ii) $\{x:f(x)\le a\}\in\Sigma_E$ for every $a\in\Bbb R$;
	
	\quad(iii) $\{x:f(x)>a\}\in\Sigma_E$ for every $a\in\Bbb R$;
	
	\quad(iv) $\{x:f(x)\ge a\}\in\Sigma_E$ for every $a\in\Bbb R$.
	
	where $\Sigma_E$ is the subspace $\sigma$-algebra of subsets of $E \subseteq X$.
\end{definition}

Ofcourse if $X$ is $\Bbb R^r$, and $\Cal B$ is its Borel
$\sigma$-algebra, a $\Sigma$-measurable function is
called {\bf Borel measurable}. If $X$ is $\Bbb R^r$, and 
$\Sigma_{\Cal L}$ is the $\sigma$-algebra of Lebesgue measurable sets,
then a $\Sigma_X$-measurable function is called {\bf Lebesgue measurable}.

\begin{remark}
	\label{rem:partial_total_measurable_1}
	It is important to note that Definition~\ref{def:subspace_sigma_meas_func} given in~\cite{fremlinmt2} is generally defined on any subset $E \subseteq X$ or in other words partial measurable functions. From the category theory perspective unless we are working with partial categories, the measurable functions will be taken as defined on total domain $X$. Following ~\cite{fremlinmt1},~\cite{fremlinmt2} and~\cite{fremlinmt3} we have retained this generality for proving theorems and propositions here since we can directly use these when dealing with partial categories in measure theory. Moreover special the case of (total) measurable functions on total $X$ is much simpler once we are used to the general case. We caution the reader whenever this distinction is crucial in given context. A particular one is the difference between $\mathfrak{L^0}_{X}$, the space of all measurable functions from $X$ to $\Bbb R$ and general $\mathfrak{L^0}$, the space of all partial functions $f$ from $E$ to $\Bbb R$ where $E \subseteq X$ is conegligible and $f\restr F$ is measurable for some conegligible set $F \subseteq X$. See Definitions~\ref{def:virtually_measurable_fun_space} and~\ref{def:measurable_fun_space} and Remark~\ref{rem:partial_total_measurable_2}.
\end{remark}   

As mentioned earlier by the phrase {\bf localizable measurable space} we mean the triple $(X,\Sigma_{X},\Cal N)$ where $\Cal N$ is the null ideal of the measure $\mu$. Hence for this structured object the appropriate morphism intuitively is a {$\Sigma_X$-measurable function} roughly also preserving the null structure.

\begin{definition}
	Let $(X,\Sigma_X,\mu)$,$(Y,\Sigma_Y,\nu)$ be measure spaces and $(X,\Sigma_X,\Cal M)$,$(Y,\Sigma_X,\Cal N)$
	be the corresponding measurable spaces where $\Cal M$,$\Cal N$ are the {\bf null ideals} of measures $\mu$,$\nu$ respectively. A function $\phi:X\to Y$ such that $\phi^{-1}[F]\in\Sigma_X$ for every $F\in\Sigma_Y$ and $\mu\phi^{-1}[F]=0$ whenever $\nu F=0$ is called a {\bf non-singular measurable} or {\bf measure-zero-reflecting} function.
\end{definition}

Note that every element in the null ideal need not be measurable yet because we ensure that pre-image of every $F\in\Sigma_Y$ whenever $\nu F=0$ is also measurable with measure zero, consequently $\{\phi^{-1}[N]:N\in\Cal N\} \subseteq \Cal M$. In other words the pre-image of every element of $\Cal N$ ($\sigma$-ideal of $\nu$-negligible sets family) is an element of $\Cal M$ ($\sigma$-ideal of all $\mu$-negligible sets family). For measure spaces the appropriate morphisms roughly also preserve actual measures as defined next.  

\begin{definition}
	If $(X,\Sigma_X,\mu)$ and $(Y,\Sigma_Y,\nu)$ are measure
	spaces, a function $\phi:X\to Y$ is {\bf inverse-measure-preserving} if
	$\phi^{-1}[F]\in\Sigma$ and $\mu(\phi^{-1}[F])=\nu F$ for every
	$F\in\Tau$.
\end{definition}

\begin{definition} Let $(\frak B,\bar\mu)$ and
	$(\frak A,\bar\nu)$ be
	measure algebras.   A Boolean homomorphism $\phi:\frak B\to\frak A$ is
	{\bf measure-preserving} if $\bar\nu(\phi b)=\bar\mu b$ for every
	$a\in\frak B$.
\end{definition}

\begin{theorem} 
	\label{thm:seq_order_con}
	Let $\frak B$ and $\frak A$ be Boolean algebras
	and $\phi:\frak B\to\frak A$ a Boolean homomorphism (a ring homomorphism)
	with kernel $I$.
	
	\quad(i) If $\phi$ is sequentially order-continuous then $I$ is
	a $\sigma$-ideal.
	
	\quad(ii) If $\phi[\frak B]$ is regularly embedded in $\frak A$ and $I$
	is a $\sigma$-ideal then $\phi$ is sequentially order-continuous.
\end{theorem}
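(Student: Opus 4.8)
The plan is to handle the two parts separately: part~(i) is essentially immediate from the hypothesis, while part~(ii) proceeds by factoring $\phi$ through its kernel and exhibiting it as a composite of sequentially order-continuous maps.

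For part~(i), I would first note that the kernel $I=\{b\in\frak B:\phi b=0\}$ of any Boolean (ring) homomorphism is automatically an ideal of $\frak B$: it contains $0$; it is downward closed, since $c\le b\in I$ forces $\phi c\le\phi b=0$; and it is closed under finite joins, since $\phi(b\vee c)=\phi b\vee\phi c=0$. To upgrade ``ideal'' to ``$\sigma$-ideal'', I must check that $I$ absorbs any countable supremum that happens to exist in $\frak B$. So let $\langle b_n\rangle_{n\in\Bbb N}$ be a sequence in $I$ with $b=\sup_n b_n$ existing in $\frak B$; replacing $b_n$ by $\sup_{i\le n}b_i$ (still in $I$) I may assume the sequence non-decreasing. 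Sequential order-continuity of $\phi$ then gives $\phi b=\sup_n\phi b_n=\sup_n 0=0$, so $b\in I$, finishing (i).

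For part~(ii), the strategy is to write $\phi=j\circ\bar\phi\circ\pi$, where $\pi:\frak B\to\frak B/I$ is the quotient homomorphism, $\bar\phi:\frak B/I\to\phi[\frak B]$ is the induced isomorphism onto the image (first isomorphism theorem for Boolean rings), and $j:\phi[\frak B]\hookrightarrow\frak A$ is the inclusion. By hypothesis $j$ is order-continuous, this being precisely the meaning of ``$\phi[\frak B]$ is regularly embedded in $\frak A$'', hence in particular sequentially order-continuous; and $\bar\phi$, being a Boolean isomorphism, preserves all existing suprema and infima. The one substantive point is that $\pi$ is sequentially order-continuous when $I$ is a $\sigma$-ideal: given a non-decreasing $\langle b_n\rangle$ with $b=\sup_n b_n$ in $\frak B$, the element $\pi b$ is an upper bound for $\{\pi b_n\}$, and if $u=\pi c$ is any upper bound then $\pi(b_n\setminus c)=\pi b_n\setminus u=0$, so $b_n\setminus c\in I$ for all $n$; using that in a Boolean algebra meet with a fixed element preserves suprema one gets $\sup_n(b_n\setminus c)=b\setminus c$, and since $I$ is a $\sigma$-ideal this lies in $I$, giving $\pi b\le\pi c=u$. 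Hence $\pi b=\sup_n\pi b_n$, and composing the three maps shows $\phi$ is sequentially order-continuous. (Throughout, it suffices to argue with non-decreasing sequences, since for a homomorphism between Boolean algebras the ``upward'' and ``downward'' forms of sequential order-continuity are interchanged by complementation.)

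The main obstacle, such as it is, is the sequential order-continuity of the quotient map $\pi$, concretely the identity $\sup_n(b_n\setminus c)=b\setminus c$ for $b_n\uparrow b$. This can be proved directly (if $d\ge b_n\setminus c$ for all $n$, then $b_n\le(b\cap c)\cup d$ for all $n$, so $b\le(b\cap c)\cup d$, whence $b\setminus c\le d$), or simply quoted, since the sequential order-continuity of Boolean quotient maps by $\sigma$-ideals is already available in the development (cf. Proposition~\ref{quotient_ba_com} and Corollary~\ref{cor:seq_order_con}). Everything else reduces to routine bookkeeping about homomorphisms and the universal property of the quotient.
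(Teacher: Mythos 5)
Your proof is correct. Part (i) coincides with the paper's argument. For part (ii) the underlying mechanism is also the same as the paper's: the regular-embedding hypothesis is used only to reduce the problem to the corestriction of $\phi$ onto $\phi[\frak B]$, and the decisive computation is that the differences $b_n\setminus c$ (resp.\ $c\setminus a_n$) land in the kernel, whose $\sigma$-ideal property together with the Boolean identity $\sup_n(b_n\setminus c)=(\sup_n b_n)\setminus c$ finishes the job. You package it differently, however: you factor $\phi=j\circ\bar\phi\circ\pi$ through the quotient $\frak B/I$ via the first isomorphism theorem and verify sequential order-continuity of each factor directly on non-decreasing sequences, whereas the paper never names the quotient and instead argues by contradiction with a non-increasing sequence whose infimum is $0$. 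Your version is arguably cleaner in that it isolates the only nontrivial factor, namely $\pi$, and states exactly what must be proved about it; the paper's version avoids invoking the isomorphism theorem. One caution: you suggest that the sequential order-continuity of $\pi$ could ``simply be quoted'' from Corollary~\ref{cor:seq_order_con}, but in this paper that corollary is \emph{deduced from} Theorem~\ref{thm:seq_order_con} (by taking $\frak A=\frak B/I$), so quoting it here would be circular. Your direct proof of the identity $\sup_n(b_n\setminus c)=b\setminus c$ makes this a non-issue, but the citation should be dropped.
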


\begin{proof}
	First note that the phrase {\bf Boolean
		homomorphism} is simply a function $\phi:\frak B\to\frak A$ which is a
	ring homomorphism. Hence by definition,
	$\phi(a\symmdiff b)=\phi a\symmdiff\phi b$,
	$\phi(a\cap b)=\phi a\cap\phi b$ for all $a$, $b\in\frak B$ and $\phi(1_{\frak B})=1_{\frak A}$.
	Of-course the kernel of a (ring) homomorphism is the set of elements mapped to $0$ and it is
	always an ideal in $\frak B$ as easily verified.
	
	{{\bf(i)} 
		To prove $I$ is a $\sigma$-ideal we just have to show it is sequentially order-closed.
		If $\sequencen{a_n}\subseteq I$ is a non-decreasing sequence and
		has a supremum $c\in \frak B$, then $\phi$ being sequentially order-continuous, $\phi c=\phi(\sup_{n\in\Bbb N}a_n) =\sup_{n\in\Bbb N}\phi(a_n)=0$, so $c\in I$ proving that $I$ is a $\sigma$-ideal.
		
		\medskip
		
		\quad{\bf (ii)} The term regular embedding refers to an {\bf injective order-continuous}
		Boolean homomorphism. The sub-algebra $\phi[\frak B]$ is regularly embedded in $\frak A$ means the
		identity map from $\phi[\frak B]$ to $\frak A$ is order-continuous, implying that whenever
		$c\in\phi[\frak B]$ is the supremum (in $\phi[\frak B]$) of $A\subseteq\phi[\frak B]$, then $c$ is also the supremum in
		$\frak A$ of $A$ and similarly for infima.
		Hence it will be enough to show that
		$\phi$ is sequentially order-continuous when considered as a map from $\frak B$ to
		$\phi[\frak B]$.   Suppose that $\sequencen{a_n}\subseteq\frak B$ is non-increasing sequence and that $\inf_{n\in\Bbb N}a_n=0$. But Suppose, if possible, that $0$ is not the infimum of $\phi[\sequencen{a_n}]$ in $\phi[\frak B]$.
		This means there is $c\in\frak B$ such that $0\ne\phi c\subseteq\phi a_n$ for every $a_n \in \sequencen{a_n}$.
		Now {$\phi(c\setminus a_n)=\phi c\setminus\phi a_n=0$} for every $a_n \in \sequencen{a_n}$, so $c\setminus a_n\in I$ for every
		$a_n \in \sequencen{a_n}$.   The set $C=\{c\setminus a_n:a_n \in \sequencen{a_n}\}$ is non-decreasing and
		has supremum $c$;  since $I$ is a $\sigma$-ideal, $c=\sup C\in I$, and
		$\phi c=0$, contradicting $c \ne 0$.  Thus
		$\inf_{n\in\Bbb N}\phi(a_n)=0$ in either $\phi[\frak B]$ or $\frak A$. But $\sequencen{a_n}$ is
		arbitrary, so $\phi$ is order-continuous.
	}
\end{proof}

Further, following corollary automatically follows since { $\phi[\frak B]=\frak B/I$ is always regularly embedded in $\frak B/I$.
	
	\begin{corollary} 
		\label{cor:seq_order_con}
		Let $\frak B$ be a Boolean algebra and $I$ an
		ideal of $\frak B$;  denote $\phi$ for the canonical map from $\frak B$ to
		$\frak B/I$ then $\phi$ is sequentially order-continuous iff $I$ is a $\sigma$-ideal.
	\end{corollary}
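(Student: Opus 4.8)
The plan is to derive this corollary directly from Theorem~\ref{thm:seq_order_con}, specializing to the case $\frak A=\frak B/I$ with $\phi$ the canonical projection. First I would record the standard ring-theoretic fact that the kernel of the canonical Boolean homomorphism $\phi:\frak B\to\frak B/I$ is exactly $I$: an element $b\in\frak B$ maps to $0$ in $\frak B/I$ precisely when $b^{\ssbullet}=0^{\ssbullet}$, i.e. when $b\symmdiff 0=b\in I$. So the quotient map is a sequentially-order-continuous-candidate Boolean homomorphism whose kernel we already know.

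For the forward implication, suppose $\phi$ is sequentially order-continuous. Then Theorem~\ref{thm:seq_order_con}(i), applied with $\frak A=\frak B/I$, says the kernel of $\phi$ is a $\sigma$-ideal; since that kernel is $I$ by the previous paragraph, $I$ is a $\sigma$-ideal. This direction is essentially immediate.

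For the converse, suppose $I$ is a $\sigma$-ideal. To invoke Theorem~\ref{thm:seq_order_con}(ii) I must check its hypothesis that $\phi[\frak B]$ is regularly embedded in $\frak A=\frak B/I$. But the canonical map is surjective, so $\phi[\frak B]=\frak B/I=\frak A$; a subalgebra that coincides with the whole algebra is regularly embedded, because the identity map $\frak A\to\frak A$ is trivially order-continuous (every supremum and infimum computed in $\frak A$ is, tautologically, the same when computed in $\phi[\frak B]=\frak A$, and likewise for infima). With this hypothesis verified and $I$ a $\sigma$-ideal, Theorem~\ref{thm:seq_order_con}(ii) gives that $\phi$ is sequentially order-continuous, completing the equivalence.

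The only step requiring a moment's attention — and the nearest thing to an obstacle here — is the regular-embedding hypothesis of part (ii): one must observe that since $\phi$ is onto, $\phi[\frak B]$ is not a proper subalgebra of $\frak A$ but the whole codomain, which makes regular embedding automatic rather than something to be checked on a case-by-case basis. Everything else is a faithful transcription of Theorem~\ref{thm:seq_order_con} into this special case, so no genuinely new argument is needed, which is why the text can call the corollary automatic.
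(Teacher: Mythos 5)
Your proposal is correct and follows exactly the route the paper takes: the paper's entire justification is that the corollary ``automatically follows since $\phi[\frak B]=\frak B/I$ is always regularly embedded in $\frak B/I$,'' which is precisely your observation that surjectivity of the canonical map makes the regular-embedding hypothesis of Theorem~\ref{thm:seq_order_con}(ii) automatic, with the forward direction coming from part (i) and the identification of the kernel with $I$. Your write-up is in fact slightly more explicit than the paper's, but there is no difference in substance.
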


	\begin{theorem} Let $(X,\Sigma,\mu)$ and $(Y,\Tau,\nu)$
		be measure spaces, and $(\frak A,\bar\mu)$, $(\frak B,\bar\nu)$ their
		measure algebras.   Write $\hat\Sigma$ for the domain of the completion
		$\hat\mu$ of $\mu$.   Let $D\subseteq X$ be a set of full outer
		measure, and let $\hat\Sigma_D$ be the
		subspace $\sigma$-algebra on $D$ induced by $\hat\Sigma$.  Let
		$\phi:D\to Y$ be a function such that
		$\phi^{-1}[F]\in\hat\Sigma_D$ for every $F\in\Tau$ and
		$\phi^{-1}[F]$ is $\mu$-negligible whenever $\nu F=0$.   Then
		there is a sequentially order-continuous Boolean homomorphism 
		$\pi:\frak B\to\frak A$ defined by the formula
		
		\centerline{$\pi F^{\ssbullet}=E^{\ssbullet}$ whenever $F\in\Tau$,
			$E\in\Sigma$ and $(E\cap D)\symmdiff\phi^{-1}[F]$ is negligible.}
	\end{theorem}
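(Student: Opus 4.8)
The plan is to adapt verbatim the strategy of Proposition~\ref{arrows_measurable}, the only genuinely new ingredients being the passage through the completion $\hat\Sigma$ and the use of the full-outer-measure hypothesis on $D$; everything else is the same symmetric-difference bookkeeping already carried out there. Throughout I would use the standard description of the completion, $\hat\Sigma=\{E\symmdiff N:E\in\Sigma,\ N\ \mu\text{-negligible}\}$, so that $\hat\Sigma_D=\{(E\cap D)\symmdiff(N\cap D):E\in\Sigma,\ N\ \mu\text{-negligible}\}$.

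First I would show the prescription actually defines a function $\pi:\frak B\to\frak A$. For existence: given $F\in\Tau$, the hypothesis $\phi^{-1}[F]\in\hat\Sigma_D$ gives $\phi^{-1}[F]=(E\cap D)\symmdiff(N\cap D)$ for some $E\in\Sigma$ and negligible $N$, whence $(E\cap D)\symmdiff\phi^{-1}[F]=N\cap D\subseteq N$ is negligible and such an $E$ exists. Independence of the choice of $F$ within its class $F^{\ssbullet}$ is immediate, since replacing $F$ by $F'$ with $F\symmdiff F'$ $\nu$-negligible alters $\phi^{-1}[F]$ by $\phi^{-1}[F\symmdiff F']$, which is $\mu$-negligible by hypothesis. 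Uniqueness of the value $E^{\ssbullet}$ is the one step that really uses the assumption on $D$: if $E_1,E_2\in\Sigma$ both represent $F$, then $(E_1\symmdiff E_2)\cap D$ is negligible, so there is $H\in\Sigma$ with $\mu H=0$ and $(E_1\symmdiff E_2)\setminus H$ disjoint from $D$; since $D$ has full outer measure, every measurable subset of $X\setminus D$ is negligible, so $E_1\symmdiff E_2$ is negligible and $E_1^{\ssbullet}=E_2^{\ssbullet}$. Hence $\pi F^{\ssbullet}=E^{\ssbullet}$ is well defined.

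Next I would verify that $\pi$ is a Boolean homomorphism. If $E_1,E_2$ represent $F_1,F_2$, then from $\phi^{-1}[F_1\symmdiff F_2]=\phi^{-1}[F_1]\symmdiff\phi^{-1}[F_2]$ and $(E_1\symmdiff E_2)\cap D=(E_1\cap D)\symmdiff(E_2\cap D)$ one gets that $((E_1\symmdiff E_2)\cap D)\symmdiff\phi^{-1}[F_1\symmdiff F_2]$ is contained in the union of $((E_i\cap D)\symmdiff\phi^{-1}[F_i])$, hence negligible, so $E_1\symmdiff E_2$ represents $F_1\symmdiff F_2$ and $\pi(F_1\symmdiff F_2)^{\ssbullet}=\pi F_1^{\ssbullet}\symmdiff\pi F_2^{\ssbullet}$. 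The same computation with the elementary inclusion $(A\cap B)\symmdiff(A'\cap B')\subseteq(A\symmdiff A')\cup(B\symmdiff B')$ handles meets, giving $\pi(F_1\cap F_2)^{\ssbullet}=\pi F_1^{\ssbullet}\cap\pi F_2^{\ssbullet}$; closure of the negligible ideal under finite unions (Proposition~\ref{propneg}) is all that is needed. For the top element, $\phi^{-1}[Y]=D$ and $X\in\Sigma$ satisfies $(X\cap D)\symmdiff D=\emptyset$, so $\pi 1_{\frak B}=X^{\ssbullet}=1_{\frak A}$.

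Finally, for sequential order-continuity I would mimic the closing argument of Proposition~\ref{arrows_measurable}. Given a sequence $\langle a_n\rangle$ in $\frak B$, choose $F_n\in\Tau$ with $F_n^{\ssbullet}=a_n$ and set $F=\bigcup_{n}F_n$; since the quotient map $\Tau\to\frak B$ is sequentially order-continuous (Theorem~\ref{meas_alg}, or Corollary~\ref{cor:seq_order_con}) we have $F^{\ssbullet}=\sup_{n}a_n$. Picking $E_n\in\Sigma$ representing $F_n$ and putting $E=\bigcup_{n}E_n\in\Sigma$, the identity $\phi^{-1}[F]=\bigcup_{n}\phi^{-1}[F_n]$ together with $(E\cap D)\symmdiff\phi^{-1}[F]\subseteq\bigcup_{n}\bigl((E_n\cap D)\symmdiff\phi^{-1}[F_n]\bigr)$ — a countable union of negligible sets, hence negligible by Proposition~\ref{propneg}(iii) — shows $E$ represents $F$. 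Therefore $\pi(\sup_{n}a_n)=\pi F^{\ssbullet}=E^{\ssbullet}=\sup_{n}E_n^{\ssbullet}=\sup_{n}\pi a_n$, using sequential order-continuity of the quotient map $\Sigma\to\frak A$. The hard part, and essentially the only non-mechanical point, is the uniqueness clause above: deducing ``$E_1\symmdiff E_2$ negligible'' from ``$(E_1\symmdiff E_2)\cap D$ negligible'', which is precisely where the full-outer-measure hypothesis on $D$ is indispensable.
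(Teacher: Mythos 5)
Your proposal is correct and follows essentially the same route as the paper's proof: existence of a representative $E\in\Sigma$ via the completion, uniqueness via the full-outer-measure hypothesis on $D$ (a measurable set disjoint from $D$ must be negligible), the symmetric-difference bookkeeping for the homomorphism axioms, and countable unions of negligible sets for sequential order-continuity. The only (harmless) cosmetic differences are that you verify the homomorphism property on $\symmdiff$, $\cap$ and $1$ where the paper uses complements and unions, and that you make explicit the independence of the choice of $F$ within $F^{\ssbullet}$, which the paper leaves implicit.
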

	
	\begin{proof}
		{ Let $F\in\Tau$.  Then there is an $H\in\hat\Sigma$ such that
			$H\cap D=\phi^{-1}[F]$;  now there is an $E\in\Sigma$ such that
			$E\symmdiff H$ is negligible, so that $(E\cap D)\symmdiff\phi^{-1}[F]$
			is negligible.   If $E_1$ is another member of $\Sigma$ such that
			$(E_1\cap D)\symmdiff\phi^{-1}[F]$ is negligible, then $(E\symmdiff
			E_1)\cap D$ is negligible, so is included in a negligible member $G$ of
			$\Sigma$.   Since $(E\symmdiff E_1)\setminus G$ belongs to $\Sigma$ and
			is disjoint from $D$, it is negligible;  accordingly $E\symmdiff E_1$ is
			negligible and $E^{\ssbullet}=E_1^{\ssbullet}$ in $\frak A$.
			
			What this means is that the formula offered defines a map $\pi:\frak
			B\to\frak A$.   It is now easy to check that $\pi$ is a Boolean
			homomorphism, because if
			
			\centerline{$(E\cap D)\symmdiff\phi^{-1}[F]$, \quad$(E'\cap
				D)\symmdiff\phi^{-1}[F']$}
			
			\noindent are negligible, so are
			
			\centerline{$((X\setminus E)\cap
				D)\symmdiff\phi^{-1}[Y\setminus F]$,
				\quad$((E\cup E')\cap D)\symmdiff\phi^{-1}[F\cup F']$.}
			
			To see that $\pi$ is sequentially order-continuous, let
			$\sequencen{b_n}$ be a sequence in $\frak B$.  For each $n$ we may
			choose an $F_n\in\Tau$ such that $F_n^{\ssbullet}=b_n$, and
			$E_n\in\Sigma$ such that $(E_n\cap D)\symmdiff\phi^{-1}[F_n]$ is
			negligible;  now, setting $F=\bigcup_{n\in\Bbb N}F_n$,
			$E=\bigcup_{n\in\Bbb N}E_n$,
			
			\centerline{$(E\cap D)\symmdiff\phi^{-1}[F]
				\subseteq\bigcup_{n\in\Bbb N}(E_n\cap D)\symmdiff\phi^{-1}[F_n]$}
			
			\noindent is negligible, so
			
			\centerline{$\pi(\sup_{n\in\Bbb N}b_n)=\pi(F^{\ssbullet})
				=E^{\ssbullet}=\sup_{n\in\Bbb N}E_n^{\ssbullet}
				=\sup_{n\in\Bbb N}\pi b_n$.}
			
			\noindent (Recall that the maps $E\mapsto E^{\ssbullet}$, 
			$F\mapsto F^{\ssbullet}$ are sequentially order-continuous, by 321H.)   
			So $\pi$ is sequentially order-continuous (313L(c-iii)).
		}%end of proof
	\end{proof}
	
	Now the objects considered earlier could be regarded as measurable spaces with some added structure such as null ideal or actual measure or dually as appropriate Boolean algebras. Thus the proper structure preserving maps on these objects are measurable functions which preserve this additional structure such as measurable non-singular (measure zero-reflecting) morphisms and Inverse-measure-preserving maps and dually the corresponding Boolean homomorphisms which were reviewed in this section.
	
	\begin{definition}~\cite{fremlinmt2}
		\label{def:directsum_measure_space}
		If $\langle(X_i,\Sigma_i,\mu_i)\rangle_{i\in I}$ is any indexed family of
		measure spaces, then by setting $X=\bigcup_{i\in I}(X_i\times\{i\})$;  for $E\subseteq X$, $i\in I$
		$E_i=\{x:(x,i)\in E\}$; we have $\Sigma_X
		=\{E:E\subseteq X,\,E_i\in\Sigma_i$ for every $i \in I\}$,
		
		$\mu E=\sum_{i\in I}\mu_iE_i$ for every $E\in\Sigma$.
		
		\noindent Then $(X,\Sigma_X,\mu)$ is a measure
		space and {\bf direct sum} of the family
		$\familyiI{(X_i,\Sigma_i,\mu_i)}$, denoted as $\bigoplus_{i\in I}(X_i,\Sigma_{X_i},\mu_{X_i})$
	\end{definition}
	
	The following property of direct sum suggests intuitively separating a global  measurable function (or its equivalence class under $\eae$) on $(X,\Sigma_X,\mu)$ into local measurable functions (or classes) on subspaces $(X_i,\Sigma_i,\mu_i)$  in the partition of global domain.
	
	\begin{proposition}~\cite{fremlinmt2}
		Let $\familyiI{(X_i,\Sigma_i,\mu_i)}$ be a
		family of measure spaces, with direct sum $(X,\Sigma_X,\mu)$. If $f$ is
		a real-valued function defined on a subset of $X$ and for each $i\in I$,
		if we set $f_i(x)=f(x,i)$ whenever $(x,i)\in\dom f$; the $f$ is measurable 
		iff $f_i$ is measurable for every $i\in I$.
	\end{proposition}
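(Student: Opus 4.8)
The plan is to reduce everything to the defining criterion for $\Sigma_X$-measurability recorded in Definition~\ref{def:subspace_sigma_meas_func}: a real-valued function $g$ defined on a subset $D$ of a set carrying a $\sigma$-algebra is measurable iff $\{x:g(x)<a\}$ lies in the subspace $\sigma$-algebra $\Sigma_D$ for every $a\in\Bbb R$. So the first step is to isolate the only nontrivial ingredient, a bookkeeping lemma about subspace $\sigma$-algebras of a direct sum. Writing $D=\dom f\subseteq X$ and $D_i=\{x:(x,i)\in D\}=\dom f_i$, the lemma asserts: a set $A\subseteq D$ belongs to the subspace $\sigma$-algebra $\Sigma_D$ induced by $\Sigma_X$ if and only if $A_i:=\{x:(x,i)\in A\}$ belongs to the subspace $\sigma$-algebra on $D_i$ induced by $\Sigma_i$, for every $i\in I$.

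To prove the lemma in the forward direction I would take $A=E\cap D$ with $E\in\Sigma_X$; by the definition of $\Sigma_X$ in Definition~\ref{def:directsum_measure_space} we have $E_i\in\Sigma_i$, and since $A_i=E_i\cap D_i$ this puts $A_i$ in the subspace algebra on $D_i$. Conversely, given $A\subseteq D$ with each $A_i$ of the form $F_i\cap D_i$, $F_i\in\Sigma_i$, I would assemble the set $E\subseteq X$ determined by the family $\langle F_i\rangle_{i\in I}$ (so that $E_i=F_i$); since $X=\bigcup_{i\in I}(X_i\times\{i\})$ is a disjoint union such an $E$ exists, it lies in $\Sigma_X$ by Definition~\ref{def:directsum_measure_space}, and checking slice by slice gives $(E\cap D)_i=E_i\cap D_i=F_i\cap D_i=A_i$, hence $E\cap D=A$.

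With the lemma in hand the proposition is immediate. Fix $a\in\Bbb R$ and set $A^{(a)}=\{x\in D:f(x)<a\}$; slicing gives $A^{(a)}_i=\{x\in D_i:f_i(x)<a\}$. By the lemma, $A^{(a)}\in\Sigma_D$ for every $a$ iff $A^{(a)}_i$ lies in the subspace algebra on $D_i$ for every $a$ and every $i$, which, after interchanging the two universal quantifiers, says that for every $i$ and every $a$ the set $\{x\in D_i:f_i(x)<a\}$ is measurable in $D_i$. Applying the criterion of Definition~\ref{def:subspace_sigma_meas_func} at both ends, this is exactly the statement that $f$ is measurable iff $f_i$ is measurable for every $i\in I$.

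There is no serious obstacle here; the one point requiring care is the handling of the subspace $\sigma$-algebras, since $f$ is permitted to be a partial function on $X$ (Remark~\ref{rem:partial_total_measurable_1}), so $\dom f$ need not be measurable and one must genuinely argue with $\Sigma_{\dom f}$ and $\Sigma_{i,\dom f_i}$ rather than with $\Sigma_X$ and $\Sigma_i$ directly. In the special case $\dom f=X$ the lemma collapses to the defining property $E\in\Sigma_X\iff E_i\in\Sigma_i$ for all $i$, and the argument is correspondingly shorter.
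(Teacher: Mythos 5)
Your proof is correct. The paper itself does not supply an argument for this proposition --- it is stated with a citation to Fremlin --- but your reduction is exactly the standard one: the slicewise characterization of the subspace $\sigma$-algebra $\Sigma_D$ follows directly from the definition of $\Sigma_X$ in Definition~\ref{def:directsum_measure_space} (a subset of the disjoint union is determined by, and measurable exactly according to, its slices), and applying this to the sets $\{x\in D:f(x)<a\}$ together with the criterion of Definition~\ref{def:subspace_sigma_meas_func} gives the equivalence after a harmless interchange of the two universal quantifiers. You are also right that the only point needing care is that $\dom f$ need not be measurable, which your lemma handles by working throughout with the induced subspace algebras rather than with $\Sigma_X$ and $\Sigma_i$ themselves.
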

	
	The next result from~\cite{fremlinmt2} based on direct sum makes it precise that a global measurable function $f$ or its equivalence class $f^{\ssbullet}$ can be identified with the local measurable functions as $(f\phi_1,f\phi_2,...)$ or classes $(f^{\ssbullet}\phi_1,f^{\ssbullet}\phi_2,...)$ . 
	
	If $\langle(X_i,\Sigma_i,\mu_i)\rangle_{i\in I}$ is a
	family of measure spaces, with direct sum $(X,\Sigma_X,\mu)$.
	(i) Writing $\phi_i:X_i\to X$ for the canonical maps, $\phi_i(x)=(x,i)$ for $x\in X_i$, it can be shown that
	$f\mapsto\langle f\phi_i\rangle_{i\in I}$ is a bijection between
	$\mathfrak{L^0}(\mu)$ and $\prod_{i\in I}\mathfrak{L^0}(\mu_i)$.   (ii) It
	corresponds to a bijection or a canonical isomorphism between $L^0(\mu)$ and
	$\prod_{i\in I}L^0(\mu_i)$. (iii) This also induces an
	isomorphism between $L^p(\mu)$ and the subspace {$\{u:u\in\prod_{i\in I}L^p(\mu_i),\,
		\|u\|=\bigl(\sum_{i\in I}\|u(i)\|_p^p)^{1/p}<\infty\}$}
	of $\prod_{i\in I}L^p(\mu_i)$, for a $p\in [1,\infty)$.
	
	\subsection{Appendix for Appendix for codomain categories}
	
	\begin{definition}~\cite{fremlinmt2} A {\bf partially ordered linear space} is a linear space $(U,+,\cdot)$ over
		$\Bbb R$ together with a partial order $\le$ on $U$ such that
		
		\centerline{$u\le v\Longrightarrow u+w\le v+w$,}
		
		\centerline{$u\ge 0$, $\alpha\ge 0\Longrightarrow \alpha u\ge 0$}
		
		\noindent for $u$, $v$, $w\in U$ and $\alpha\in\Bbb R$.
		
		{\centerline{$u\le v\iff 0\le v-u\iff -v\le-u$.}}
		{\centerline{$u\le v\Longrightarrow 0=u+(-u)\le v+(-u)=v-u
				\Longrightarrow u=0+u\le v-u+u=v$,}
			
			\centerline{$u\le v\Longrightarrow -v=u+(-v-u)\le v+(-v-u)=-u$.}}
	\end{definition}
	
	\begin{definition}~\cite{fremlinmt2}
		A {\bf Riesz space} or {\bf vector lattice}
		is a partially ordered linear space which is a lattice.
	\end{definition}
	
	\begin{definition}~\cite{fremlinmt2}
		A {\bf Riesz
			homomorphism} from $M$ to $N$ is a linear operator $T:M\to N$ such that
		whenever $B\subseteq M$ is a finite non-empty set and $\inf A=0$ in $M$,
		then $\inf T[B]=0$ in $N$.
	\end{definition}

	\subsection{Appendix for Partial Categories}
	\label{app:parcat}
	
	\begin{definition}
		Restriction Category \cite{cockettlack}. A restriction structure on a category $\mathbf{R}$  consists of an operator $\overline{(\cdot)}$ on morphisms which maps each $f:X \rightarrow Y$ to $\bar{f}: X \rightarrow X$ (termed restriction idempotent of $f$) such that
		\begin{enumerate}
			\item $f \circ \bar{f} = f$ for all $f:X \rightarrow Y$,
			\item $\bar{f} \circ \bar{g} = \bar{g} \circ \bar{f}$ whenever $dom(f) = dom(g)$,
			\item $\overline{f \circ \bar{g}} = \bar{f} \circ \bar{g}$ whenever $dom(f) = dom(g)$,
			\item $\bar{h} \circ \bar{f} = \overline{h \circ f} \circ f$ whenever $dom(f) = dom(g)$,
		\end{enumerate}
		Such a category is termed as restriction category. 
	\end{definition}
	
	In Par$(\mathbf{LocMeas},\mathcal{M})$ the restriction idempotent $\bar{f}:\mathbb{X} \rightarrow \mathbb{X}$ for a partial measurable function $f:\mathbb{X} \rightarrow \mathbb{Y}$ is given by the partial identity function $\bar{f}(x)=x$ wherever $f$ is defined on the measurable space $X$ and undefined otherwise.
	\begin{definition}
		\textbf{Restriction functor}\cite{cockettlack}: A functor $F: \mathbf{R} \rightarrow \mathbf{R'}$ between restriction categories is a restriction (preserving) functor if $\overline{F(f)}=F(\bar{f})$ for every $f \in \mathbf{R}$.
	\end{definition}
	The notion of restriction functor as the term suggests, captures the notion of structure (restriction) preserving morphisms inside the category $\mathbf{rCat}$ of restriction categories and restriction functors.
	Observe that if $\bar{f}=\mathbf{1}_\mathbb{X}$ then it is precisely the total map. Total[Par$(\mathbf{LocMeas},\mathcal{M})$] which is the total subcategory of the argument restriction category embeds via a faithful restriction functor inside Par$(\mathbf{LocMeas},\mathcal{M})$; and is same as the category $\mathbf{LocMeas}$.
	
	The notion of restriction category is a useful notion since the related concepts of inverse and dagger categories are easily understood through this notion.
	
	\begin{definition}\cite{cockettlack}
		Partial Isomorphism: A morphism $f:X \rightarrow Y$ is called partial isomorphism in a restriction category if there is is a unique morphism $f^{\circ}: Y \rightarrow X$ (the partial inverse of $f$) such that $f \circ f^{\circ} = \overline{f^{\circ}}$ and $f^{\circ} \circ f = \overline{f}$.
	\end{definition}
	\begin{definition}\cite{cockettlack}
		Inverse category: A restriction category with partial isomorphisms as the only morphisms is termed inverse category. 
	\end{definition}
	In Par$((\mathbf{LocMeas},\mathcal{M}),\mathcal{M})$, every map is a partial isomorphism making it an example of inverse category.
	Intuitively inverse categories are `groupoids with partiality' when compared with restriction categories which are `categories with partiality'.
	
	\begin{definition}\cite{Heunen2013}
		Dagger category: A category $\mathbf{C}$ is a dagger category if it is equipped with a contravariant endofunctor $\dagger : \mathbf{C} \rightarrow \mathbf{C}^{op}$ which is identity on objects and an involution $\dagger \circ \dagger = \mathbf{id_C}$. In other words, $\dagger(1_X) = {1_X}^{\dagger} = {1_X}$ for all objects $X$ and $\dagger \circ \dagger(f) = {f}^{\dagger \dagger} = {f}$ for all morphisms $f$ in $\mathbf{C}$. 
	\end{definition}

	\subsection{Appendix for Functors and Signal Spaces}
	
	\begin{lemma}
		\label{thm:composite_meas_func}
		Let $\Sigma_X$, $\Sigma_Y$ be 
		$\sigma$-algebras of subsets of $X$ and $Y$ respectively. Let 
		$D\subseteq X$ and $\phi:D\to Y$ be a function such that 
		$\phi^{-1}[F]\in\Sigma_D$, for every 
		$F\in\Tau$, where $\Sigma_D$ is the subspace $\sigma$-algebra of $\Sigma_X$. 
		For every $[-\infty,\infty]$-real valued 
		$\Sigma_Y$-measurable function $g$ defined on $C \subseteq Y$,
		the composite function $g\phi$ is $\Sigma_X$-measurable.  	
	\end{lemma}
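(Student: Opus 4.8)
The plan is to reduce the statement to the characterization of measurability by preimages of half-lines, Definition~\ref{def:subspace_sigma_meas_func}, and then to transport a single set through $\phi^{-1}$. Write $\Tau=\Sigma_Y$, let $C\subseteq Y$ be the domain of $g$, and observe that the domain of the composite $g\phi$ is $\phi^{-1}[C]=\{x\in D:\phi(x)\in C\}$, which is a subset of $D$ and hence of $X$. What must be shown is that $\{x:(g\phi)(x)>a\}$ lies in the subspace $\sigma$-algebra $\Sigma_{\phi^{-1}[C]}$ induced by $\Sigma_X$, for every $a\in\Bbb R$.

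First I would record the elementary identity
\[
\{x:(g\phi)(x)>a\}=\phi^{-1}\bigl[\{y:g(y)>a\}\bigr],
\]
which holds because $(g\phi)(x)=g(\phi(x))$ for $x\in\phi^{-1}[C]$. Since $g$ is $\Sigma_Y$-measurable on $C$, clause (iii) of Definition~\ref{def:subspace_sigma_meas_func} gives $\{y:g(y)>a\}\in\Sigma_C$, i.e.\ there is an $F\in\Sigma_Y$ with $\{y:g(y)>a\}=F\cap C$. Therefore
\[
\{x:(g\phi)(x)>a\}=\phi^{-1}[F\cap C]=\phi^{-1}[F]\cap\phi^{-1}[C].
\]
Next I would invoke the hypothesis $\phi^{-1}[F]\in\Sigma_D$, so $\phi^{-1}[F]=E\cap D$ for some $E\in\Sigma_X$; using $\phi^{-1}[C]\subseteq D$ we get
\[
\phi^{-1}[F]\cap\phi^{-1}[C]=E\cap D\cap\phi^{-1}[C]=E\cap\phi^{-1}[C],
\]
which is precisely a member of the subspace $\sigma$-algebra $\Sigma_{\phi^{-1}[C]}$. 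As $a\in\Bbb R$ was arbitrary, clause (iii) of Definition~\ref{def:subspace_sigma_meas_func} is verified and $g\phi$ is $\Sigma_X$-measurable. The fact recalled earlier that $\{\phi^{-1}[G]:G\in\Sigma_Y\}$ is itself a $\sigma$-algebra underwrites the consistency of these manipulations, though it is not strictly needed for this direction.

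As for difficulty, there is no deep obstacle: the only care required is bookkeeping across the three layers of subspace $\sigma$-algebras — $\Sigma_D$ on $D$, $\Sigma_C$ on $C$, and $\Sigma_{\phi^{-1}[C]}$ on $\phi^{-1}[C]$ — together with the point that $C$ itself need not be $\Sigma_Y$-measurable, so $\{y:g(y)>a\}$ must be written as $F\cap C$ with $F\in\Sigma_Y$ rather than treated as an element of $\Sigma_Y$ directly. The $[-\infty,\infty]$-valued nature of $g$ causes no trouble, since the defining condition refers only to the sets $\{g>a\}$ for real $a$, i.e.\ to preimages of $(a,\infty]$, and these transform under $\phi^{-1}$ exactly as above.
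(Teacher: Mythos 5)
Your proof is correct and follows essentially the same route as the paper's: both reduce to Definition~\ref{def:subspace_sigma_meas_func}, write $\{y:g(y)\lessgtr a\}$ as $F\cap C$ with $F\in\Sigma_Y$, pull back through $\phi^{-1}$, and use $\phi^{-1}[F]=E\cap D$ with $E\in\Sigma_X$ to land in the subspace $\sigma$-algebra on $\dom(g\phi)=\phi^{-1}[C]$. The only cosmetic difference is that you use the clause with $\{g>a\}$ while the paper uses $\{g\le a\}$, which is immaterial since the definition declares these conditions equivalent.
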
 
	\begin{proof}
		Let $A=\dom g\phi=\phi^{-1}[C]$ and $a\in\Bbb R$. Since $g$ is $\Sigma_Y$-measurable,
		by Definition~\ref{def:subspace_sigma_meas_func} there exists an $F\in\Sigma_Y$ such that $\{y:g(y)\le a\}=F\cap C$. On the other hand, 
		there exists an $E\in\Sigma_X$ such that $\phi^{-1}[F]=E\cap D$. Hence $\{x:g\phi(x)\le a\}=A\cap E\in\Sigma_A$. Since $a$ is arbitrary, $g\phi$ is $\Sigma_X$-measurable.
	\end{proof} 
	
	\begin{theorem}~\cite{fremlinmt1} 
		\label{thm:prop_meas_func}
		Let $X$ be any set and $\Sigma_X$ a $\sigma$-algebra
		of subsets of $X$.   Let $f$ and $g$ be real-valued functions defined on
		domains $\dom f$, $\dom g\subseteq X$.
		
		(a) If $f$ is constant then it is measurable.
		
		(b) If $f$ and $g$ are measurable, then $f+g$ is measurable, where
		$(f+g)(x)=f(x)+g(x)$ for $x\in \dom f\cap\dom g$.
		
		(c) If $f$ is measurable and $c\in\Bbb R$ a scalar, then $cf$ is measurable,
		where $(cf)(x)=c\cdot f(x)$ for $x\in \dom f$.
		
		(d) If $f$ and $g$ are measurable, then $f\times g$ is measurable, where $(f\times
		g)(x)=f(x)\times g(x)$ for $x\in \dom f\cap\dom g$.
		
		(e) Let $\langle f_n\rangle_{n\in\Bbb N}$ is a sequence of
		$\Sigma_X$-measurable real-valued functions with domains included in $X$.
		Let {$(\sup_{n\in\Bbb N}f_n)(x)=\sup_{n\in\Bbb N}f_n(x)$} for all those 
		$x\in\bigcap_{n\in\Bbb N}\dom f_n$ for which the
		supremum exists in $\Bbb R$.   Then $\sup_{n\in\Bbb N}f_n$ is
		measurable.
		
		(f) If $f$ is measurable and $h$ is  a Borel measurable function from a
		subset $\dom h$
		of $\Bbb R$ to $\Bbb R$, then $hf$ is measurable, where
		$(hf)(x)=h(f(x))$ for $x\in\dom(hf)=\{y:y\in\dom f,\,f(y)\in\dom h\}$.
		
		(g) If $f$ is measurable and $E\subseteq\Bbb R$ is a Borel set, then
		there is an
		$F\in\Sigma$ such that $f^{-1}[E]=\{x:f(x)\in E\}$ is equal to $F\cap
		\dom f$.
		
		(h) If $f$ is measurable and $A$ is any set, then $f\restr A$ is
		measurable, where $\dom(f\restr A)=A\cap\dom f$ and
		$(f\restr A)(x)=f(x)$ for $x\in A\cap\dom f$.
	\end{theorem}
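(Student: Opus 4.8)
The plan is to verify each of the eight clauses directly from Definition~\ref{def:subspace_sigma_meas_func}, using freely the equivalence of its four formulations and the standard fact that for a subset $C\subseteq X$ the subspace $\sigma$-algebra is $\Sigma_C=\{S\cap C:S\in\Sigma_X\}$. I would begin with (a): if $f$ is constant with value $c$ on $\dom f$, then $\{x:f(x)<a\}$ equals $\emptyset$ when $a\le c$ and equals $\dom f$ otherwise, and both lie in $\Sigma_{\dom f}$. Next I would prove (g) out of turn, since it is the structural core on which (f) and (h) rest: let $\Cal E$ be the family of sets $E\subseteq\Bbb R$ for which there is an $F\in\Sigma_X$ with $f^{-1}[E]=F\cap\dom f$; since the operation $f^{-1}[\,\cdot\,]$ commutes with complements and with countable unions, $\Cal E$ is a $\sigma$-algebra, and it contains every ray $(-\infty,a)$ because $\{x:f(x)<a\}\in\Sigma_{\dom f}=\{F\cap\dom f:F\in\Sigma_X\}$; hence $\Cal E$ contains the Borel $\sigma$-algebra, which is exactly (g). Clause (h) then follows at once: for $a\in\Bbb R$, using (g), $\{x\in A\cap\dom f:f(x)<a\}=F\cap(A\cap\dom f)$ for some $F\in\Sigma_X$, a member of the subspace $\sigma$-algebra on $A\cap\dom f$.

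For the algebraic clauses I would work on the common domain $D=\dom f\cap\dom g$, remarking first that $f\restr D$ and $g\restr D$ remain measurable by (h). For (b), write $\{x\in D:f(x)+g(x)<a\}=\bigcup_{q\in\Bbb Q}\bigl(\{x:f(x)<q\}\cap\{x:g(x)<a-q\}\bigr)$, a countable union of members of $\Sigma_D$. For (c), if $c=0$ apply (a); if $c>0$ then $\{x:cf(x)<a\}=\{x:f(x)<a/c\}$; if $c<0$ then $\{x:cf(x)<a\}=\{x:f(x)>a/c\}$, invoking condition~(iii) of Definition~\ref{def:subspace_sigma_meas_func}. For the product (d) I would first record a squaring lemma: $\{x:f(x)^2<a\}$ is $\emptyset$ for $a\le 0$ and equals $\{x:f(x)<\sqrt a\}\cap\{x:f(x)>-\sqrt a\}$ otherwise, so $f^2$ is measurable whenever $f$ is; then $f\times g=\frac14\bigl((f+g)^2-(f-g)^2\bigr)$ on $D$ is measurable by (b), (c) and the lemma.

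For (e), set $h=\sup_{n}f_n$ with $\dom h=\{x\in\bigcap_n\dom f_n:\sup_n f_n(x)\in\Bbb R\}$; using condition~(ii), $\{x\in\dom h:h(x)\le a\}=\dom h\cap\bigcap_{n}\{x:f_n(x)\le a\}$, and since each $\{x:f_n(x)\le a\}$ lies in $\Sigma_{\dom f_n}$, its trace on $\dom h$ lies in $\Sigma_{\dom h}$, whence so does the countable intersection. Finally for (f): $h$ being Borel measurable on $\dom h\subseteq\Bbb R$ means $\{t\in\dom h:h(t)<a\}=B_a\cap\dom h$ for some Borel $B_a\subseteq\Bbb R$, so with $\dom(hf)=f^{-1}[\dom h]$ we get $\{x\in\dom(hf):h(f(x))<a\}=f^{-1}[B_a\cap\dom h]=f^{-1}[B_a]\cap\dom(hf)$, and by (g) $f^{-1}[B_a]=F\cap\dom f$ with $F\in\Sigma_X$, so this set is $F\cap\dom(hf)\in\Sigma_{\dom(hf)}$; since $a$ is arbitrary, $hf$ is measurable.

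The only steps that demand genuine care rather than routine bookkeeping are the generated-$\sigma$-algebra argument in (g), which I would present first as it powers both (f) and (h), and the tracking of exactly which subspace $\sigma$-algebra each preimage set belongs to when the functions are honestly partial — most delicately in (e), where the domain of the supremum must be shrunk to the points where it is finite, and in (f), where one must resist the temptation to ask that $\dom h$ itself be Borel. Everything else reduces to elementary manipulations of preimages.
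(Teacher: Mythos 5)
Your proposal is correct, and where it overlaps with the paper it follows the same line: the paper's proof of Theorem~\ref{thm:prop_meas_func} only writes out (a) and (b) --- the constant-function case and the rational-interpolation decomposition of $\{x:f(x)+g(x)<a\}$ as a countable union --- and then defers parts (c)--(h) to the cited reference. You supply full arguments for all eight clauses, so the comparison is mostly one of completeness rather than of route. Two things you do are worth flagging as genuinely better organized than the sketch in the paper. First, you prove (g) early by the generated-$\sigma$-algebra argument (the family of $E\subseteq\Bbb R$ with $f^{-1}[E]=F\cap\dom f$ is a $\sigma$-algebra containing the rays, hence contains the Borel sets) and then derive (f) and (h) from it; this is the structurally efficient order, and your observation that in (f) one must not demand that $\dom h$ itself be Borel --- only that each $f^{-1}[B_a]\cap\dom(hf)$ land in the subspace $\sigma$-algebra on $\dom(hf)$ --- is exactly the point where a careless treatment of partial functions would go wrong. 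Second, your version of (b) uses the strict inequality $f(x)+g(x)<a$ with $f(x)<q<a-g(x)$, which is the form in which the rational-interpolation identity is actually valid; the paper's sketch states it with $\le$, where the interpolation step does not literally hold, so your formulation is the more careful one. The polarization trick $f\times g=\tfrac14\bigl((f+g)^2-(f-g)^2\bigr)$ for (d), the sign split for (c), and the trace of $\bigcap_n\{f_n\le a\}$ on the (correctly shrunken) domain of the supremum for (e) are all standard and sound. No gaps.
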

	\begin{proof}
		Since this is a very basic and central theorem of measure theory we sketch the technique of proof 
		for only {(a)} and {(b)}. These are found in almost every text on measure theory.
		Let $\Sigma_E$ denote the subspace
		$\sigma$-algebra of subsets of $E\subseteq X$.
		
		\medskip
		
		\quad(a) Let $f(x)=c$, a constant for every $x\in \dom f$, then
		$\{x:f(x)<a\}=\dom f$ if $c<a$, else $\emptyset$. Since both belong to $\Sigma_{\dom f}$, $f$
		is $\Sigma_X$-measurable.

		\medskip
		
		\quad(b) Consider $\{x:f(x)+g(x)\le a\}$, but $f(x)+g(x)\le a$ iff $f(x)\le a - g(x)$ iff there exists a rational number $q$ such that $f(x) \le q \le a-g(x)$ hence
		$\{x : f(x) + g(x) \le a\} = \bigcup_{q\in\Bbb Q} [f^{-1}((-\infty, q)) \cap g^{-1}((-\infty, a-q))]$. Since countable union of measurable sets is measurable, hence $f + g$ is $\Sigma_X$-measurable where $(f + g)(x) = f(x) + g(x) $ for $x \in \dom f\cap\dom g$

		Now the rest of cases may be proved using similar arguments. Refer~\cite{fremlinmt1}.
	\end{proof}

	A set $A\subseteq X$ is {\bf conegligible} if
	$X\setminus A$
	is negligible (;  that is, there is a measurable set
	$E\subseteq A$
	such that $\mu(X\setminus E)=0$).   Note that (i) $X$ is conegligible
	(ii) if $A\subseteq B\subseteq X$ and $A$ is conegligible then $B$ is
	conegligible (iii) if $\langle A_n\rangle_{n\in\Bbb N}$ is a sequence of
	conegligible sets, then $\bigcap_{n\in\Bbb N}A_n$ is conegligible.
	
	\begin{proposition}~\cite{fremlinmt4}
		\label{prop:equivalence_coneg_func}
		Let $(X,\Sigma,\mu)$ be a measure space, and $\Cal F$
		the set of
		real-valued functions whose domains are conegligible subsets of $X$.
		(i) Show that $\{(f,g):f,\,g\in\Cal F,\,f\leae g\}$ and
		$\{(f,g):f,\,g\in\Cal F,\,f\geae g\}$ are reflexive transitive relations
		on $\Cal F$, each the inverse of the other.   (ii) Show that
		$\{(f,g):f,\,g\in\Cal F,\,f\eae g\}$ is their intersection, and is an
		equivalence relation on $\Cal F$.
	\end{proposition}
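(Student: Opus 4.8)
The plan is to unwind the definitions of $\leae$, $\geae$ and $\eae$ as "holding on a conegligible subset of the common domain" and then verify each clause by elementary pointwise reasoning, using only the three stability properties of conegligible sets recalled just above the statement: that $X$ is conegligible, that any superset of a conegligible set is conegligible, and that a countable — in particular, a finite — intersection of conegligible sets is conegligible. No measure-theoretic machinery beyond this is needed.

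For clause (i), I would first fix the reading that, for $f,g\in\Cal F$, the assertion $f\leae g$ means there is a conegligible set $C\subseteq\dom f\cap\dom g$ with $f(x)\le g(x)$ for every $x\in C$; note that $\dom f\cap\dom g$ is itself conegligible, being the intersection of two conegligible sets, so this makes sense. Reflexivity of $\leae$ is immediate with $C=\dom f$. For transitivity, given $f\leae g$ witnessed by $C_1$ and $g\leae h$ witnessed by $C_2$, the set $C_1\cap C_2$ is conegligible and on it $f\le g\le h$, so $f\leae h$. The identical two arguments with the inequalities reversed show $\geae$ is reflexive and transitive. Finally, straight from the definitions, $(f,g)$ satisfies $f\geae g$ exactly when $(g,f)$ satisfies $g\leae f$, so $\geae$ is precisely the converse relation of $\leae$. (I would remark in passing that these are only preorders, not partial orders, since antisymmetry fails — distinct functions can agree almost everywhere.)

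For clause (ii), I would first establish the set-theoretic identity $f\eae g\iff (f\leae g\text{ and }f\geae g)$. The forward direction: if $f\eae g$ via a conegligible $C$ on which $f=g$, then $f\le g$ and $f\ge g$ on $C$. The converse: if $f\leae g$ via $C_1$ and $f\geae g$ (equivalently $g\leae f$) via $C_2$, then on the conegligible set $C_1\cap C_2$ we have both $f\le g$ and $g\le f$, hence $f=g$, so $f\eae g$. Thus the relation $\eae$ is exactly the intersection of $\leae$ with its converse $\geae$. It then remains only to invoke the purely formal fact that for any reflexive transitive relation $R$ the relation $R\cap R^{-1}$ is reflexive (the diagonal lies in both $R$ and $R^{-1}$), transitive (immediate by chasing both coordinates through transitivity of $R$), and symmetric (by construction $(a,b)\in R\cap R^{-1}$ iff $(b,a)\in R\cap R^{-1}$); applying this with $R=\leae$ shows $\eae$ is an equivalence relation on $\Cal F$.

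The only point requiring genuine care — and the main obstacle, such as it is — is the bookkeeping of domains: one must consistently restrict attention to $\dom f\cap\dom g$ and, in the transitivity steps and in the converse direction of (ii), to the further intersection $C_1\cap C_2$, invoking stability of conegligibility under finite intersections each time, so that every pointwise comparison is literally well-defined. Once that discipline is maintained, all remaining steps are routine.
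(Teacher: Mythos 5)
Your proposal is correct and follows essentially the same route as the paper: unwind the definitions of $\leae$, $\geae$, $\eae$ as conditions on conegligible subsets of $\dom f\cap\dom g$, then verify reflexivity, transitivity, the inverse relationship, and the intersection claim using only the closure of conegligible sets under supersets and finite intersections. Your version is in fact slightly more complete than the paper's, which leaves the step ``$f\leae g$ and $f\geae g$ implies $f\eae g$'' as a one-line assertion where you explicitly pass to $C_1\cap C_2$, and your closing appeal to the general fact that $R\cap R^{-1}$ is an equivalence relation for any preorder $R$ is a clean way to package the final clause.
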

	\begin{proof}
		First using the definition of conegligible set recall that when $f$ and $g$ are real-valued functions defined on
		conegligible subsets of a measure space, then we write $f\eae g$,
		$f\leae g$ or $f\geae g$ to denote, respectively,
		
		\centerline{$f=g$ a.e., which means,
			$\{x:x\in\dom(f)\cap\dom(g)$, $f(x)=g(x)\}$ is conegligible,}
		
		\centerline{$f\le g$ a.e., which means,
			$\{x:x\in\dom(f)\cap\dom(g)$, $f(x)\le g(x)\}$ is conegligible,}
		
		\centerline{$f\ge g$ a.e., which means,
			$\{x:x\in\dom(f)\cap\dom(g)$, $f(x)\ge g(x)\}$ is conegligible.}
		
		Now consider $\{(f,g):f,\,g\in\Cal F,\,f\leae g\}$, then it includes $(f,f)$
		as {$f\le f$ a.e., since, $\{x:x\in\dom(f)\cap\dom(f)$, $f(x)\le f(x)\}$ is surely conegligible,} proving
		reflexivity. Also if $(g,h)$ is included then it means {$g\le h$ a.e., which means,
			$\{x:x\in\dom(g)\cap\dom(h)$, $g(x)\le h(x)\}$ is conegligible,}. But this means 
		{$\{x:x\in\dom(f)\cap\dom(g)\cap\dom(h)$, $f(x)\le g(x)\le h(x)\}$ is conegligible,}. Now using the clause
		if $A\subseteq B\subseteq X$ and $A$ is conegligible then $B$ is conegligible, we have 
		{$\{x:x\in\dom(f)\cap\dom(h)$, $f(x)\le h(x)\}$ is conegligible,} implying $f\le h$ a.e. proving transitivity.
		Similarly once can prove the same for other relation. It is easy to observe that these are inverse of each other
		since {$\{(f,g):f,\,g\in\Cal F,\,f\leae g\} \,\iff\,\{(g,f):f,\,g\in\Cal F,\,g\geae f\}$}. Now their intersection
		consists of all real-valued functions defined on conegligible subsets such that both $f\leae g$ and $f\geae g$ is
		true which certainly means {$\{(f,g):f,\,g\in\Cal F,\,f\eae g\}$}. Reflexivity and transitivity are verified as in 
		earlier cases while symmetry newly holds now since {$\{(f,g):f,\,g\in\Cal F,\,f\eae g\} \,\iff\,\{(g,f):f,\,g\in\Cal F,\,g\eae
			f\}$} proving that it is an equivalence relation. 
	\end{proof}
	
	A real-valued function $f$ for which there is a conegligible set $E$
	such that the restriction of $f$ to $E$, $f\restr E$  is measurable, is called {\bf
		virtually measurable}. Recall $D\sim X$ if $D\symmdiff X \in \Cal N$ where $\Cal N$ is null-ideal of subsets of $X$ and
	$\sim$ is an equivalence relation on $\Sigma_X$. But since $\mu(X\symmdiff D)= \mu(X\setminus D) = 0$ for a measurable set $D \subseteq E$ where $E$ is conegligible, hence a virtually measurable function is a real-valued
	function whose domain is some member of the equivalence class $X^{\ssbullet}$.

	\begin{definition}
		\label{def:virtually_measurable_fun_space}
		Let $(X,\Sigma,\mu)$ be a measure space,
		$\mathfrak{L}^0$ or $\mathfrak{L}^0(\mu)$ is the space of
		real-valued functions $f$ defined on conegligible subsets of $X$ which are virtually
		measurable or $f\restr E$ is measurable for some conegligible set $E\subseteq X$.
	\end{definition}

	\begin{definition}
		\label{def:measurable_fun_space}
		Let $(X,\Sigma,\mu)$ be a measure space,
		$\mathfrak{L}^0_{X}$ is the space of all
		real-valued functions $f: X \rightarrow \Bbb R$ defined on $X$ which
		are measurable.
	\end{definition}
	
	\begin{remark}
		\label{rem:partial_total_measurable_2}
		Most standard references use the same symbol for a function in $\mathfrak{L}^0$ or $\mathfrak{L}^0_{X}$ and for its equivalence class in $L^0$, however it should be noted that the space $\mathfrak{L}^0$ is larger than the space $\mathfrak{L}^0_{X}$ and for every $f\in\mathfrak{L}^0$ there is a $g\in\mathfrak{L}^0_{X}$ such that $g\eae f$. Also whereas $\mathfrak{L}^0_{X}$ is a
		Dedekind $\sigma$-complete Riesz space, $\mathfrak{L}^0$ is not even a linear space since its members are not defined at every point of the underlying space therefore not quite measurable. More precisely their restrictions to some conegligible subsets are measurable which emphasized by using the term {\bf virtually measurable}.
		
		If we write $\mathfrak{N}$ for the subspace of $\mathfrak{L}^0_{X}$ consisting of measurable functions that are
		zero almost everywhere ({$f=\mathbf{0}$ a.e., means, $\{x:x\in X$, $f(x)=0\}$ is conegligible, where $\mathbf{0}$ is constant zero function on $X$}) then the quotient space $\mathfrak{L}^0_{X}/\mathfrak{N}$ is identical to the Dedekind $\sigma$-complete Riesz space $L^0(\mu)$, as ordered linear space. We shall evade the distinction between $\mathfrak{L}^0_{X}$ and $L^0(\mu)$ in most arguments of this thesis since the Dedekind $\sigma$-complete Riesz space $\mathfrak{L}^0_{X}$ parallels the Dedekind $\sigma$-complete Riesz space $L^0(\mu)$ very closely and most propositions such as Proposition~\ref{prop:L0_func_space} involving only countably many members of these spaces hold for both of them. Thus we shall deal with $L^0$ primarily and almost all of propositions and other properties involving $L^0$ are also valid for $\mathfrak{L}^0_{X}$.
	\end{remark}
	
	\begin{proposition}~\cite{fremlinmt2}
		\label{prop:properties_coneg_func}
		If $(X,\Sigma,\mu)$ is a measure space, then we have the following basic properties, corresponding to 
		Theorem~\ref{thm:prop_meas_func}:
		
		\medskip
		
		{\bf (a)} A constant (real-valued) function defined almost everywhere in $X$ belongs to $\mathfrak{L^0}$.
		
		{\bf (b)} $f+g\in\mathfrak{L^0}$ for all $f$, $g\in\mathfrak{L^0}${ (since if
			$f\restr F$ and $g\restr G$ are measurable, then
			$(f+g)\restr(F\cap G)=(f\restr F)+(g\restr G)$ is also measurable)}.
		
		{\bf (c)} $cf\in\mathfrak{L^0}$ for all $f\in\mathfrak{L^0}$ and scalar
		$c\in\Bbb R$.
		
		{\bf (d)} $f\times g\in\mathfrak{L^0}$ for all $f$, $g\in\mathfrak{L^0}$.
		
		{\bf (e)} If $\sequencen{f_n}$ is a sequence in $\mathfrak{L^0}$ and
		$f=\sup_{n\in\Bbb N}f_n$ is defined almost everywhere in $X$, then $f\in\mathfrak{L^0}$.
		
		{\bf (f)} If $f\in\mathfrak{L^0}$ and $h:\Bbb R\to\Bbb R$ is Borel measurable, then
		$hf\in\mathfrak{L^0}$

		{\bf (g)} $\mathfrak{L^0}$ is simply the set of real-valued functions, defined on
		subsets of $X$, which are equal almost everywhere
		to some $\Sigma_X$-measurable function from $X$ to $\Bbb R$.
		Hint: (i) If $g:X\to\Bbb R$ is $\Sigma_X$-measurable and
		$f\eae g$, then $F=\{x:x\in\dom f,\,f(x)=g(x)\}$ is conegligible therefore
		$f\restr F=g\restr F$ is measurable (ii)
		If $f\in\mathfrak{L^0}$, let $E\subseteq X$ be a conegligible set such that
		$f\restr E$ is measurable.   Then $A=E\cap\dom f$ is conegligible and
		$f\restr A$ is measurable, so there is a measurable $g:X\to\Bbb R$
		agreeing with $f$ on $A$ and $g\eae f$.
	\end{proposition}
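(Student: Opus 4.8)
The plan is to deduce every clause from the corresponding clause of Theorem~\ref{thm:prop_meas_func} about genuinely $\Sigma_X$-measurable functions, using only two elementary facts about conegligible sets (recorded just above Proposition~\ref{prop:equivalence_coneg_func}): a countable intersection of conegligible sets is again conegligible, and any subset of $X$ containing a conegligible set is conegligible. For each $f\in\mathfrak{L}^0$ I fix once and for all a conegligible $E_f\subseteq X$ for which $f\restr E_f$ is measurable; then clause (h) of Theorem~\ref{thm:prop_meas_func} guarantees that $f\restr E$ stays measurable for any smaller $E\subseteq E_f$, which is what lets one pass to common domains.

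For (a)--(d): if the functions involved are $f$, $g\in\mathfrak{L}^0$, put $E=E_f\cap E_g$, which is conegligible, and note $\dom f\cap\dom g$ is conegligible too. By clause (h) the restrictions $f\restr E$ and $g\restr E$ are measurable, so clauses (a), (b), (c), (d) of the theorem respectively give that a constant, $f+g$, $cf$ and $f\times g$ restricted to (a conegligible subset of) the common domain are measurable; hence each belongs to $\mathfrak{L}^0$. Clause (e) is the same argument with a countable family: take $E=\bigcap_{n\in\Bbb N}E_{f_n}$, still conegligible, so each $f_n\restr E$ is measurable and therefore $(\sup_{n\in\Bbb N}f_n)\restr E$ is measurable by clause (e) of the theorem; since $\sup_{n\in\Bbb N}f_n$ is assumed defined almost everywhere its domain is conegligible, so it lies in $\mathfrak{L}^0$. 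For (f), $(hf)\restr E_f=h\circ(f\restr E_f)$ is measurable by clause (f) of the theorem, and because $\dom h=\Bbb R$ the domain of $hf$ equals $\dom f$, hence is conegligible; thus $hf\in\mathfrak{L}^0$.

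For the characterization (g) there are two directions. If $g:X\to\Bbb R$ is $\Sigma_X$-measurable and $f\eae g$ (so, by the standing convention, $\dom f$ is conegligible), then $F=\{x:x\in\dom f,\ f(x)=g(x)\}$ is conegligible and $f\restr F=g\restr F$ is measurable by clause (h), so $f\in\mathfrak{L}^0$; that $\eae$ is a genuine equivalence relation on these functions is Proposition~\ref{prop:equivalence_coneg_func}. Conversely, if $f\in\mathfrak{L}^0$ then $A=E_f\cap\dom f$ is conegligible, so by the very definition of conegligibility there is a \emph{measurable} set $E'\subseteq A$ with $X\setminus E'$ negligible; defining $g$ to agree with $f$ on $E'$ and to equal $0$ on the measurable set $X\setminus E'$ yields a $\Sigma_X$-measurable $g:X\to\Bbb R$ with $g\eae f$. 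The only real subtlety, and the step I would write out most carefully, is exactly this last extension: one cannot glue the zero function onto the possibly non-measurable set $X\setminus A$, so one must first shrink $A$ to a measurable conegligible $E'$ before extending; everything else is a routine transfer through clause (h) of Theorem~\ref{thm:prop_meas_func}.
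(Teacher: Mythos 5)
Your proof is correct and follows essentially the route the paper intends: the paper gives no argument of its own beyond the parenthetical in (b) and the embedded hints for (g), deferring to Fremlin 241B, and your write-up is exactly that standard transfer of Theorem~\ref{thm:prop_meas_func} through restriction to common conegligible domains (using that finite and countable intersections of conegligible sets are conegligible). Your added care in (g) --- shrinking $A$ to a \emph{measurable} conegligible $E'$ before extending by zero --- correctly handles a point the paper's hint glosses over, since a $g$ agreeing with $f$ on all of the possibly non-measurable set $A$ need not be measurable unless the measure is complete, while agreement on $E'$ already gives $g\eae f$.
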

	
	We leave the proof for these cases which can be found in~\cite{fremlinmt2} 241B.
	
	\begin{definition}
		\label{def:L0_space}
		For a measure space $(X,\Sigma_X,\mu)$, $\eae$ is an equivalence relation on $\mathfrak{L^0}(\mu)$.
		and $L^0(\mu)$ is defined as the set of equivalence classes in $\mathfrak{L^0}(\mu)$ under
		$\eae$. Corresponding to $f \in \mathfrak{L^0}(\mu)$, its equivalence class is denoted as $f^{\ssbullet} \in \mathfrak{L^0}(\mu)$.
	\end{definition}

	\begin{definition}
		\label{def:Lp_space}
		For a measure space $(X,\Sigma_X,\mu)$, and $p\in(1,\infty)$,
		$\mathfrak{L}^p=\mathfrak{L}^p(\mu)$ is defined as the set of functions
		$f\in\mathfrak{L}^0=\mathfrak{L}^0(\mu)$ such that $|f|^p$ is integrable,
		and $L^p=L^p(\mu)$ is defined as the set of functions
		$\{f^{\ssbullet}:f\in\mathfrak{L}^p\}\subseteq L^0=L^0(\mu)$.
	\end{definition}
\end{document}